\setlist{nosep}
\titlespacing*{\paragraph}{0pt}{\smallskipamount}{.5em}
\newtheorem{theorem}{Theorem}
\newtheorem{claim}{Claim}
\newtheorem{lemma}{Lemma}
\newtheorem{property}{Property}
\newcommand{\starconditionsymbol}{\ensuremath{(\sqsupset)}}
\newcommand{\starcondition}{$\ref{starcond}$\xspace}
\newcommand{\U}{X}
\newcommand{\V}{U}
\newcommand{\W}{W}
\newcommand{\X}{Y}
\newcommand{\rcv}{w}
\newcommand{\tcv}{y}
\newcommand{\lcv}{u}
\newcommand{\calP}{{\ensuremath{\cal P}}\xspace}
\newcommand{\calN}{{\ensuremath{\mathcal N}}\xspace}
\newcommand{\calC}{{\ensuremath{\mathcal C}}\xspace}
\newcommand{\calG}{{\ensuremath{\mathcal G}}\xspace}
\newcommand{\calF}{{\ensuremath{\mathcal F}}\xspace}
\newcommand{\calV}{{\ensuremath{\mathcal V}}\xspace}
\newcommand{\stel}[1]{\ensuremath{#1^\mathrm{s}}\xspace}
\newcommand{\Tint}{{\ensuremath{T_\mathrm{int}}}\xspace}
\newcommand{\Tend}{{\ensuremath{T_\mathrm{end}}}\xspace}
\newcommand{\TSDR}{{\ensuremath{T_\mathrm{SDR}}}\xspace}
\newcommand{\subsubsubsection}[1]{\medskip\noindent{\bf #1}}
\newcommand{\fprev}{{\ensuremath{f_\mathrm{prev}}}\xspace}
\newcommand{\fnext}{{\ensuremath{f_\mathrm{next}}}\xspace}
\renewcommand{\int}{{\ensuremath{\rm int\,}}}
\DeclareMathOperator{\ts}{top}
\DeclareMathOperator{\rs}{right}
\DeclareMathOperator{\ls}{left}
\DeclareMathOperator{\bs}{bottom}
\newcommand*{\textlabel}[2]{%
  \edef\@currentlabel{#1}
  \phantomsection
  #1\label{#2}
}
\newcommand*{\caselabel}[1]{%
  \edef\@currentlabel{#1}
  \phantomsection
  \label{case:#1}
}
\newcommand{\reftt}[1]{\texttt{\ref{#1}}}
\newcommand{\refline}[1]{\reftt{line:#1}}
\newcommand\xleadsto[1]{%
    \mathrel{%
        \begin{tikzpicture}[%
            baseline={(current bounding box.south)}
            ]
        \node[%
            ,inner sep=.44ex
            ,align=center
            ] (tmp) {$\scriptstyle #1$};
        \path[%
            ,draw,<-
            ,decorate,decoration={%
                ,zigzag
                ,amplitude=0.7pt
                ,segment length=1.2mm,pre length=3.5pt
                }
            ] 
        (tmp.south east) -- (tmp.south west);
        \end{tikzpicture}
        }
    }
\date{}
\newcommand{\repeatcaption}[2]{%
  \renewcommand{\thefigure}{\ref{#1}}%
  \captionsetup{list=no}%
  \caption{#2 (repeated from page \pageref{#1})}%
}
\title{Finding Tutte paths in linear time}
\author{
Therese Biedl\thanks{David R.~Cheriton School of Computer
Science, University of Waterloo, Canada.
Work of TB was supported by NSERC. \url{biedl@uwaterloo.ca}}
\and 
Philipp Kindermann\thanks{Lehrstuhl f\"ur Informatik I, Universit\"at W\"urzburg, Germany. \url{philipp.kindermann@uni-wuerzburg.de}}
}
\begin{document}
\addtocounter{page}{-1}
\maketitle
\thispagestyle{empty}
\begin{abstract}
It is well-known that every planar graph has a {\em Tutte path},
i.e., a path $P$ such that any component of $G-P$ has at most three
attachment points on $P$.  However, it was only recently shown that such
Tutte paths can be found in polynomial time.  In this paper, we give a new
proof that 3-connected planar graphs have Tutte paths, which leads to
a linear-time algorithm to find Tutte paths.  Furthermore, 
our Tutte path has special properties: it visits all 
exterior vertices, all components of $G-P$ have exactly three attachment
points, and we can assign distinct representatives to them that are 
interior vertices.  Finally, our running time bound is slightly stronger;
we can bound it in terms of the degrees of the faces that are incident
to $P$.  This allows us to find some applications of
Tutte paths (such as binary spanning trees and 2-walks) in linear time as well.
\end{abstract}

\section{Introduction}

A {\em Tutte path} is a well-known generalization of Hamiltonian paths
that allows to visit only a subset of the vertices of the graph, as long as all
remaining vertices are in components with at most three attachment points.  
(Detailed definitions are below.)  They have been studied
extensively, especially for planar graphs, starting from Tutte's
original result:

\begin{theorem}[\cite{Tutte77}]
\label{thm:2conn}
Let $G$ be a 2-connected planar graph with distinct vertices $\U,\X$ on
the outer face.  Let $\alpha$ be an edge on the outer face. 
Then $G$ has a Tutte path from $\U$ to $\X$ that uses edge $\alpha$. 
\end{theorem}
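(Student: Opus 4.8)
The plan is to prove the theorem by induction on $|V(G)|$, via the classical device of strengthening the statement so that it survives the surgeries the induction requires. Fix a plane embedding of $G$ with outer cycle $C$; the vertices $\U$ and $\X$ split $C$ into two arcs, one of which, call it $Q$, contains $\alpha$, the other being $Q'$. The statement I would actually induct on is: $(\ast)$ \emph{$G$ has a Tutte path $P$ from $\U$ to $\X$ with $\alpha\in E(P)$ such that every $C$-bridge of $G$ has at most three attachments on $P$, and, in addition, every $C$-bridge all of whose attachments lie on $Q'$ has at most two attachments on $P$.} The second clause is never used for its own sake; it is the invariant that keeps a bridge from picking up a fourth attachment when two partial solutions are glued along a separating pair, because the vertex of that pair which is \emph{not} promoted to a path-endpoint in the recursive call ends up on the $Q'$-side of the sub-instance — and the placement of the pair and the choice of recursive terminals are made precisely so that this happens.

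For the base case, if $G=C$ take $P:=Q$: then $G-P$ is the relative interior of $Q'$, a single bridge with the two attachments $\U,\X$, and both clauses hold trivially. For the inductive step, assume $G\ne C$, and split into two cases. \emph{Surgery case: $G$ is not $3$-connected (in particular if $C$ has a chord).} Pick a separating pair $\{p,q\}$, adding a virtual edge $pq$ if needed, and split $G$ into smaller $2$-connected plane graphs $G_1,G_2$ with $V(G_1)\cap V(G_2)=\{p,q\}$, with $\alpha$ in, say, $G_1$. Apply $(\ast)$ to $G_1$ with distinguished edge $\alpha$ and terminals chosen among $\{\U,\X,p,q\}$; either the returned path $P_1$ already contains both $p$ and $q$, in which case $G_2$ is a single bridge attaching only at $\{p,q\}$ and $P:=P_1$ works, or $P_1$ contains exactly one of them, in which case apply $(\ast)$ to $G_2$ (with $pq$ as distinguished edge when $P_1$ is to be continued through $G_2$) and either concatenate at the shared vertex or splice $P_1$ into $P_2$ in place of $pq$. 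Re-establishing $(\ast)$ for $G$ is routine except for the attachment counts: a bridge of $G_i$ can gain a fourth attachment only by acquiring a vertex of $\{p,q\}$ on the path, and exactly those bridges are the ones the second clause of the recursive call has already bounded by two.

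\emph{$3$-connected case.} Here $G-v$ is $2$-connected for every $v\in V(C)$. If some $v\in V(C)\setminus\{\U,\X\}$ is not an endpoint of $\alpha$, delete it: $\alpha$ and $\U,\X$ still lie on the outer cycle $C'$ of $G-v$, so $(\ast)$ applies to $G-v$ and returns $P'$. Since Tutte paths need not be spanning, $P:=P'$ — with $v$ inserted between two consecutive neighbours of $v$ on $P'$ when $P'$ uses such an edge — is a Tutte path of $G$: every $C$-bridge of $G$ arises from a $C'$-bridge of $G-v$ by re-attaching the edges at $v$, and $3$-connectivity together with the second clause again caps the attachments at three. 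If no such $v$ exists then $C$ is short and $G$ is one of a bounded number of small configurations (a wheel, a near-triangle, and the like), dispatched directly. (One may equally run this case in Tutte's original style, routing the path through the $C$-bridge separating $\alpha$ from the rest and recursing on the regions cut out by that bridge's attachment arcs; the bookkeeping is the same.)

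The hard part is entirely in the surgery step, and it is twofold: first, choosing the separating pair, the recursive terminals, and the recursive distinguished edges so precisely that the returned pieces assemble into a \emph{single} path from $\U$ to $\X$ that still uses $\alpha$; second, checking that no $C$-bridge of $G$ is left with four attachments, which is exactly where the second clause of $(\ast)$ does its work — so the truly delicate point is to pin down the correct form of that clause before anything else. The remaining ingredients — the cycle base case, realizing the three stretches $\U\!-\!\alpha$, $\alpha$, $\alpha\!-\!\X$ of the target path, and the degenerate sub-configurations such as $\alpha=\U\X$, recursive pieces that are single edges, an internal $2$-cut that just leaves a harmless bridge, or a deleted vertex forced to be a terminal — are routine once that invariant is in hand.
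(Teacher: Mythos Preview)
Your route differs from the paper's. The paper does not carry one strengthened invariant through a single induction on $|V(G)|$; it first proves the $3$-connected case separately (Lemma~\ref{lem:T_int}, the technical heart of the paper), and then reduces the $2$-connected case to it essentially non-inductively: add $(\U,\X)$ if absent, build the SPQR-tree, root it at the node of $(\U,\X)$, find a Tutte path in each $3$-connected piece with ends at the virtual cut pair it shares with its parent, and splice these paths along the tree in place of virtual edges. There is no vertex deletion and no global bridge-side invariant in that reduction.

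Your $3$-connected step has real gaps. First, the invariant $(\ast)$ looks mis-stated: Tutte paths are defined via $P$-bridges, and the classical strengthening (Thomassen) bounds $P$-bridges that \emph{contain an edge of $C$}, not ``$C$-bridges with attachments on $P$''; the latter object does not control what happens when you glue. Second, and more seriously, the vertex-deletion argument does not close. When no $v\in V(C)\setminus(\{\U,\X\}\cup V(\alpha))$ exists you declare $C$ short and $G$ one of ``a bounded number of small configurations'', but a $3$-connected plane graph can carry arbitrarily much structure behind a triangular or quadrilateral outer face---there are infinitely many such $G$, and no finite list to dispatch. And when such a $v$ does exist, you have not shown that the $P$-bridge of $G$ containing $v$ has at most three attachments: if the interior faces at $v$ are not triangles, there is no edge of $G-v$ between consecutive neighbours of $v$, so your ``insert $v$'' move is unavailable, yet the recursive path may pass through four or more neighbours of $v$. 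Nothing in $(\ast)$ for $G-v$ prevents that. Your parenthetical (``Tutte's original style, routing through the $C$-bridge\ldots'') is precisely the scheme that actually works; what you wave off as ``the same bookkeeping'' is where all the content of the $3$-connected case lives.
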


We refer to the recent work by Schmid and Schmidt
\cite{SS-ICALP18} for a detailed review of the history and applications
of Tutte paths.
It was long not
known how to compute a Tutte path in less than exponential time.  A
breakthrough was achieved by Schmid and Schmidt in 2015 \cite{SS-STACS15,SS18},
when they showed that one can find a Tutte path for 3-connected planar graphs in
polynomial time.  In 2018, the same authors then argued that
Tutte paths can be found in polynomial time even for 2-connected planar
graphs~\cite{SS-ICALP18}.  For both papers, the main insight is to prove the existence of
a Tutte path by splitting the graph into non-overlapping subgraphs to recurse on; 
the split can be found in linear time and therefore the running time becomes
quadratic.

In this paper, we show that Tutte paths can be computed in linear time.
To do so, we give an entirely different proof of the existence of a
Tutte path for 3-connected planar graph.  This proof is very simple if
the graph is triangulated, (we give a quick sketch below), 
but requires more care when faces have larger degrees.
Our path (and also the one by Schmid and Schmidt~\cite{SS-STACS15,SS18}) comes with a 
{\rm system of distinct representatives}, i.e.,
an injective assignment from the components of $G\setminus P$ to vertices
of $P$ that are attachment points.  Such representatives are useful for
various applications of Tutte paths.

Our proof
for 3-connected planar graphs is based on a Hamiltonian-path proof by
Asano, Kikuchi and Saito \cite{AKS84} that was designed to give a
linear-time algorithm; with arguments much as in their paper we can
therefore find the Tutte path and its representatives in linear time.
Since 3-connected planar graphs are (as we argue) the bottleneck in
finding Tutte paths, this shows that the path of Theorem~\ref{thm:2conn}
can be found in linear time.

\subsection{Preliminaries}

We assume familiarity with graphs, see, e.g., Diestel~\cite{Die12}.  Throughout this
paper, $G=(V,E)$ denotes a graph with $n$ vertices and $m$ edges.  We
assume that $G$ is {\em planar}, i.e., can be drawn in 2D without edge crossings.
A planar drawing of $G$ splits $\mathbb{R}^2$ into connected regions called
{\em faces}; the unbounded region is the {\em outer face} while all others
are called {\em interior faces}.  
A vertex/edge is called {\em exterior} if
it is incident to the outer face and {\em interior} otherwise.
We assume throughout that $G$ is {\em plane}, i.e., one particular abstract 
drawing of $G$ has been fixed
(by giving the clockwise order of edges around each vertex and the edges that
are on the outer face).  Any subgraph of $G$ {\em inherits} this planar
embedding, i.e., uses the induced order of edges and as outer face the face
that contained the outer face of~$G$.  The following notion will be convenient:
Two vertices $v$ and $w$ are {\em interior-face-adjacent} (in a planar graph~$G$)
if there exists an interior face that is incident to both $v$ and $w$.  We will
simply write {\em face-adjacent} since we never consider adjacency via the outer face.

\subsubsubsection{Nooses and connectivity.}
For a fixed planar drawing of $G$, 
let a {\em noose} be a simple closed curve~$\mathcal N$ that goes through vertices and faces and
crosses no edge except at endpoints.  Note that a noose can be described as
a cyclic sequence $\langle x_0,f_1,x_1,\dots,f_s,x_s{=}x_0\rangle$ of vertices and faces 
such that~$f_i$ contains~$x_{i-1}$ and~$x_i$, and hence is independent of
the chosen drawing.  Frequently, the choice of faces will be clear from context
or irrelevant; we then say that $\calN=\langle x_0, \dots,x_s{=}x_0\rangle$ 
{\em goes through} $\{x_1,\dots,x_s\}$.
The subgraph {\em inside/outside} $\mathcal N$ is the graph induced by the vertices
that are on or inside/outside $\mathcal N$.  The subgraph {\em strictly inside/outside} 
is obtained from this by deleting the vertices on \calN.

A graph $G$ is {\em connected} if for any two vertices $v,w$ there is a path
from $v$ to $w$ in $G$.
A {\em cutting $k$-set} in~$G$ is a set $S=\{x_1,\dots,x_k\}$ of
vertices such that $G\setminus S$ has more connected components than $G$.
We call it a {\em cutting pair} for $k=2$ and
a {\em cutting triplet} for $k=3$.
A graph $G$ is called {\em $k$-connected} if it has no cutting $(k-1)$-set.
Since we are only studying planar graphs, it will be convenient to use a
characterization of connectivity via nooses. 
Consider a noose $\mathcal N$ that goes through $\{x_1,\dots,x_k\}$
(and no other vertices), and there are vertices both strictly inside and 
strictly outside \calN.    Then clearly $S=\{x_1,\dots,x_k\}$ is a cutting 
$k$-set.  Vice versa, it is not hard to see that 
in a planar graph,
any cutting $k$-set~$S$ for $k=1,2,3$ gives rise to a noose \calN through $S$
that has vertices both strictly inside and strictly outside.
A {\em cut component}~$C$ of~$S$ is a subgraph strictly inside a noose \calN through
some of the vertices of $S$
such that $C$ contains at least one vertex not in $S$ and is inclusion-minimal
among all such nooses. In particular, a cut component $C$ contains
no vertices or edges of~$S$, but in our algorithm
we will frequently add the vertices and edges of~$S$ to $C$ and call the result~$C^+$.

\subsubsubsection{Hamiltonian paths and Tutte paths.}
A {\em Hamiltonian path} is a path that visits every vertex exactly once.
To generalize it to Tutte paths, we need more definitions.
Fix a path~$P$ in the graph.
A {\em $P$-bridge}~$C$ is a cut component of~$P$; its {\em attachment
points} its vertices on $P$.%
\footnote{Our definition of $P$-bridge considers only the {\em proper}
$P$-bridges \cite{Tutte77} that contain at least one vertex.}
A {\em Tutte path} is a path $P$ such that any $P$-bridge~$C$ has at most
three attachment points,
and if $C$ contains exterior edges, then it has at most two attachment
points.  Our Tutte paths for 3-connected graphs will be such that
no $P$-bridges contain exterior edges, so the
second restriction holds automatically.

A Tutte path with a {\em system of
distinct representatives} (SDR), also called a {\em $\TSDR$-path} for short,
is a Tutte path~$P$ together with an injective
assignment $\sigma$ from the
$P$-bridges to vertices in~$P$ such that for every $P$-bridge $C$
vertex $\sigma(C)$ is an attachment point of $C$.

Given a path $P$ in a plane graph, we denote by $F(P)$ the set of all interior 
faces that contain at least one vertex of $P$.

\subsection{From 3-connected to 2-connected}

In this section, we show that, to find the path of Theorem~\ref{thm:2conn} 
efficiently, it suffices to consider
3-connected planar graphs.  

We re-prove Theorem~\ref{thm:2conn}, presuming it holds for 3-connected
planar graphs, by induction on the number of vertices with an inner
induction on the number of exterior vertices.  Say we want to find a Tutte path
from $\U$ to $\X$ that uses exterior edge $\alpha=(\V,\W)$, where $\U,\X$
are exterior vertices.  In the base case, $G$ is 3-connected
and we are done.  So assume that $G$ has cutting pairs. 
If edge $(\U,\X)$ does not exist, then add it in such a way that $\alpha$
stays exterior, and find a Tutte path $P$ in
the resulting graph recursively (it has fewer exterior vertices).
Since $\{\U,\X\}\neq \{\V,\W\}$ (because $(\V,\W)\in G$ while 
$(\U,\X)\not\in G$), path $P$ visits at least one vertex other than $\U,\X$,
and so cannot use edge $(\U,\X)$.  So it is also a Tutte path of $G$.

Now, assume that $(\U,\X)$ exists.  
Repeatedly split the graph at any cutting pair $\{u,v\}$
into cut components $C_1,\ldots,C_k$, and store the \emph{3-connected
components} $C_1^+,\ldots,C_k^+$---induced by the cut components and~$u,v$
and inserting a virtual edge $(u,v)$---in a so-called \emph{SPQR-tree}~\cite{DBT89,GM00},
which additionally creates one leaf node for every edge of $G$.
This can be done in linear time~\cite{HT73}.

Root the SPQR-tree at the node of edge $(\U,\X)$.
For each 3-connected component $C^+$ other than the root, 
set $\{\U_C,\X_C\}$ to be the
cutting pair that $C^+$ has in common with its parent component, and
observe that these two vertices are necessarily exterior in~$C$
since $\U,\X$ are exterior in $G$; see Fig.~\ref{fig:spqr}.

If $C^+$ has only these two vertices, then let $P_C$ be the path $(\U_C,\X_C)$.
Otherwise, define an edge $\alpha_C\neq (\U_C,\X_C)$ of $C^+$ as follows:  If
the node of~$\alpha$ is a descendant of $C^+$, then let $\alpha_C$ be the
virtual edge of $C^+$ that it shares with the child that leads to this 
descendant.   Note that~$\alpha_C$ is a virtual edge, and it is necessarily 
on the outer face of $C$
since $\alpha$ is on the outer face of~$G$.  Otherwise ($\alpha$ is not in
a descendant of $C^+$) choose $\alpha_C$ to be an arbitrary exterior edge of $C$
other than $(\U_C,\X_C)$.  Let $P_C$ be a Tutte path that
begins at~$\U_C$, ends at~$\X_C$ and uses edge $\alpha_C$; we know that this
exists since $C^+$ is either a triangle or a 3-connected graph.

\begin{figure}[t]
  \centering
  \includegraphics[page=5]{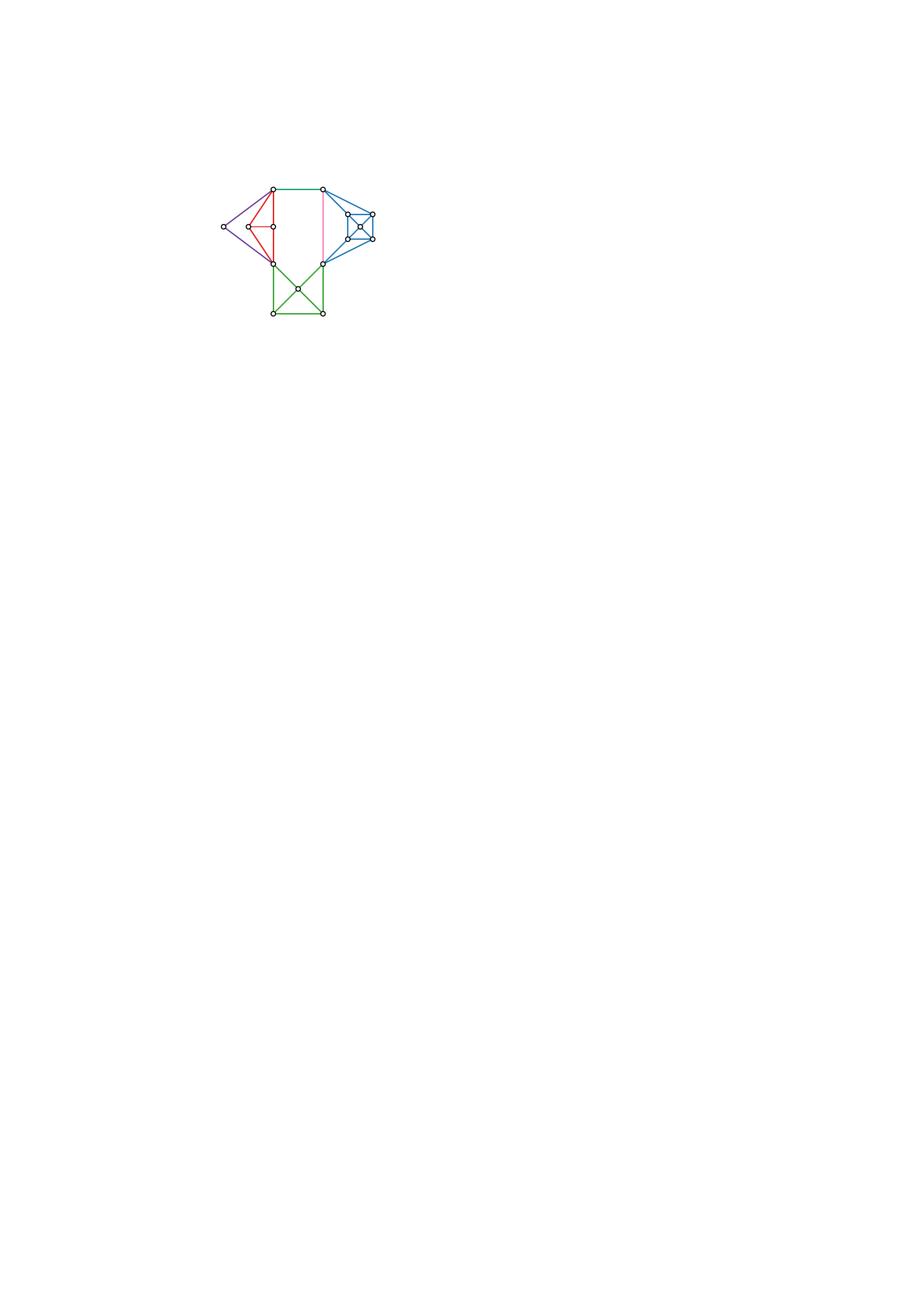}
  \caption{A 2-connected graph, its SPQR-tree (leaf nodes are omitted), and
    its Tutte path.}
  \label{fig:spqr}
\end{figure}

Now, obtain the Tutte path $P$ of $G$ by repeatedly substituting paths of
3-connected components.  Specifically, initiate $P$ as the virtual
copy of edge $(\U,\X)$ that was added when we created the node for $(\U,\X)$.  
For as long as $P$ contains a virtual edge $(u,v)$,
let $C^+$ be the child component at this virtual edge and observe that
$\{\U_C,\X_C\}=\{u,v\}$.
{\em Substitute $P_{C}$ in place of edge $(u,v)$ of $P$},
i.e., set $P$ to be $\U \xleadsto{P} u/v \xleadsto{P_{C}}
v/u \xleadsto{P} \X$.    Note that, if $C^+$ is not the singleton-edge $(u,v)$,
then $P_C$ contains $\alpha_C$, which is a virtual edge.  This means that the 
process repeats until we have substituted the real edges from the leaves of
the SPQR-trees. In particular (due to our choice of $\alpha_C$),
we will substitute the paths from all components between $(\U,\X)$ and 
$\alpha$, which means that $\alpha$ is an edge of the final path $P$ as required.

Observe that for some 3-connected components we do not substitute
their paths; these become $P$-bridges with two attachment points.
There may also be some $P$-bridges within each 3-connected components,
but these have at most three attachment points since we used a Tutte path
for each component.  So the result is the desired Tutte path.
Since we compute one Tutte path per
3-connected component, and this can be done in time proportional
to the size of the component, the overall running time is linear.

\subsection{Simple proof for triangulated planar graphs}
\label{sec:triangulated}

So it suffices to find Tutte paths in linear time for 3-connected
planar graphs.
As a convenient warm-up, we sketch here first the (much simpler) case
of a {\em triangulated} planar graph $G$, i.e., every face is a triangle.
Remove, for every non-facial
triangle $T$, the graph in its interior.
What remains is a 4-connected
triangulated planar graph $H$, say it has $k$ vertices. The graph~$H$ has a Hamiltonian 
path $P$, and  Asano, Kikuchi, and Saito \cite{AKS84} showed how to find it in 
linear time.  Studying their proof, one can easily verify that we can force that~$P$
begins at $\U$, ends at $\X$, and contains edge $\alpha$, for given
$\U,\X,\alpha$ on the outer face.%
\footnote{For readers familiar with \cite{AKS84}: It suffices to choose
$A,B,C$ in the outermost recursion such that $A=\U$, $B=\X$ and $C$ is the 
end of $\alpha$ that is closer to $B$.}
Note that $P$
automatically is a Tutte path, because every $P$-bridge resides inside an
interior face of $H$, and hence has three attachment points and no
edge on the outer face.

To find a system of representatives, we initially allow edges of $P$ to be 
representatives,
but forbid using exterior vertices or the edge $\alpha$.
Thus, assign to every interior face $T$ of $H$ a representative
$\sigma(T)\in V(P)\cup E(P)$ such that $\sigma(T)$ is
incident to $T$, not an exterior vertex or $\alpha$, and no two triangles obtain the 
same representative.
There are $2k-5$ faces of $H$,
and path $P$ has $k-2$ edges $\neq \alpha$ and $k-3$ interior vertices, 
so there are sufficiently many possible representatives.  One can argue
that for the Hamiltonian path from \cite{AKS84}, it is possible to assign
the faces to these representatives in an injective manner
(we omit the details; they are implicit in our proof below.) 

To obtain a $\TSDR$-path,
we apply a {\em substitution trick} (explained in detail below) for every
edge~$e\in P$ that is used as representative $\sigma(T)$ for some interior face 
$T$.   Namely, if $T$ has no
$P$-bridge inside, then simply remove $e$ as representative.  Otherwise,
replace $e$ by a (recursively obtained) $\TSDR$-path $P_T$ of 
the $P$-bridge inside $T$.  Since we can specify the ends of $P_T$,
and it uses no vertices of its outer face $T$ as representatives, this gives
a $\TSDR$-path of $G$ after repeating at all edge-representatives.
The overall running time is linear, because we can find $H$ by computing
the tree of 4-connected components in linear time~\cite{Kant97},
and then compute a Hamiltonian path in 
each 4-connected component in linear time \cite{AKS84}.  
The $\TSDR$-path is obtained by
substituting the Hamiltonian paths of child-components into the
one of the root as needed to replace edge-representatives.

\section{Tutte paths in 3-connected planar graphs}
\label{sec:TuttePath}

For triangulated planar graphs, one can quite easily find a $\TSDR$-path
by removing the interiors of all separating triangles, and finding for
the resulting 4-connected planar graph a Hamiltonian path using the
approach of Asano, Kikuchi, and Saito~\cite{AKS84}.  It is not hard to
see that we can assign representatives to all separating triangles,
possibly after expanding the path using the substitution trick described
below.  (We omit the details for space reasons.)

For 3-connected planar graphs that are not triangulated, we use the same
approach, but must generalize many definitions from~\cite{AKS84}
and add quite a few cases  because now face-adjacent vertices
are not necessarily adjacent.  To keep the proof self-contained, we re-phrase
everything from scratch.%
\footnote{Indeed, due to attempts to simplify the notations similar as
done in \cite{BD-JoCG16}, the
reader familiar with \cite{AKS84} may barely see the correspondence between the
proof and~\cite{AKS84}.  Roughly, their Condition (W) corresponds to $c3c(\U,\W,\X)$,
their Case 1 is our Case~\ref{case:3a}, and their Case 3 combines our Case~\ref{case:2} with
our Case~\ref{case:4a} (but resolves it in a symmetric fashion).}
 
We need a few definitions. 
The {\em outer stellation} of a planar graph $G$ is the graph obtained
by adding a vertex in the outer face and connecting it to all exterior vertices. 
A planar graph~$G$ is called {\em internally 3-connected} if its outer stellation
is 3-connected. Note that this implies that $G$ is 2-connected, 
any cutting pair is {\em exterior} (i.e., has both vertices on the outer face)
and has only two cut components that contain other vertices.
In the following, we endow~$G$ with~$k$ {\em corners}, which
are $k$ vertices $X_1,\dots,X_k$ that appear in this order on the 
outer face.    
Usually, $k=3$ or $4$, but occasionally we allow larger $k$.  
A {\em side} of such a graph is the outer face
path between two consecutive corners that does not contain any other corners.
The {\em corner stellation} $\stel{G}$ is obtained by
adding a vertex in the outer face and connecting it to the corners.
We say that $G$ is {\em corner-3-connected with respect to corners
$X_1,\dots,X_k$}
(abbreviated to ``{\em $G$ satisfies $c3c(X_1,\dots,X_k)$}'')
if~$\stel{G}$ is 3-connected.
Figure~\ref{fig:corner3con} illustrates this condition.  
It is easy to show that $G$ satisfies $c3c(X_1,\dots,X_k)$
if and only if $k\geq 3$, $G$ is internally 3-connected,
and no cutting pair $\{v,w\}$ of $G$ has both~$v$ 
and~$w$ on one side of $G$.

For ease of proof we make the induction hypothesis stronger than just
having a $\TSDR$-path, by restricting which vertices {\em must} be
visited and which vertices {\em must not} be representatives.  
A \emph{$\Tint$-path} is a $\TSDR$-path $P$ that visits all exterior vertices, and
where representative $\sigma(C)$ is interior, for all $P$-bridges $C$.
The goal of the remainder of this section is to prove the following result
(which immediately implies Theorem~\ref{thm:2conn} for 3-connected graphs%
\footnote{Theorem~\ref{thm:2conn} allows $(\V,\W)=(\U,\X)$, but then holds
trivially since using edge $(\U,\X)$ as path satisfies all conditions.  We
require $(\V,\W)\neq (\U,\X)$ since we want not just a Tutte path but a
$\Tint$-path, and the single-edge path $(\U,\X)$ would 
allow only exterior vertices as representatives.}%
):

\begin{lemma}
\label{lem:T_SDR}
\label{lem:3conn}
\label{lem:T_int}
Let $G$ be a plane graph with distinct vertices $\U,\X$ on
the outer face.  Let $(\V,\W)\neq(\U,\X)$ be an edge on the outer face.
If $G$ satisfies $c3c(\U,\V,\W,\X)$,
then it has a $\Tint$-path that begins at $\U$, ends at $\X$, and contains $(\V,W)$.
\end{lemma}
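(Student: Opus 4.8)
The plan is to prove Lemma~\ref{lem:T_int} by strong induction on the number of vertices of $G$, following the structure of the Hamiltonian-path proof of Asano, Kikuchi, and Saito~\cite{AKS84} but carefully adapted to handle faces of degree larger than three. The base case is when $G$ is a single edge or a triangle (or more generally when $G$ has no interior vertex and no cutting pair worth recursing on); there the whole outer face boundary, traversed appropriately from $\U$ to $\X$ through $(\V,\W)$, is a $\Tint$-path with no $P$-bridges. For the inductive step, I would first observe that $c3c(\U,\V,\W,\X)$ lets me treat $\U,\V,\W,\X$ as four corners (some possibly coinciding when they are adjacent along the outer face) and hence think of the outer face as split into up to four sides; the edge $(\V,\W)$ is a side by itself. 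The heart of the proof is a case analysis on the structure near a chosen exterior vertex or near the side containing $(\V,\W)$.

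The key steps, in order, are as follows. First I would pick a natural ``starting'' face or vertex: roughly, the interior face $f$ incident to the edge $(\V,\W)$, and let the boundary of $f$ go $\V,\W,z_1,z_2,\dots$ back to $\V$. Second, I would try to add $(\V,\W)$ to the path and recurse on the rest of $G$ after contracting or removing $f$'s interior side; but in general this does not immediately work, so instead I would walk along the boundary of $f$ and identify either (Case~\ref{case:3a}) a vertex $t$ on the far side that together with $\U$ or $\X$ forms a cutting pair splitting off a cut component, in which case I recurse on the component and on the rest and splice the two $\Tint$-paths together, reusing an interior vertex of $f$ as the representative of the spliced-off bridge; or (Case~\ref{case:2}/Case~\ref{case:4a}) the face $f$ is ``large'' and I must instead route the path through part of $f$'s boundary, leaving the interior of $f$ as a single $P$-bridge with exactly three attachment points chosen among vertices of $f$, and assign it an interior vertex of $f$ as its representative. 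Third, in each case I must verify the three extra invariants of a $\Tint$-path: that the path still begins at $\U$, ends at $\X$, uses $(\V,\W)$; that each recursively produced subgraph still satisfies $c3c$ with appropriately chosen corners (this is where internal 3-connectivity and the ``no cutting pair inside one side'' condition must be re-established for the pieces, typically by adding the virtual edge along the cut); that all exterior vertices of $G$ are visited (exterior vertices of $G$ are exterior in exactly one piece, and the recursion visits them there); and that the system of distinct representatives remains injective and uses only interior vertices — here the substitution trick described in the triangulated warm-up is invoked to expand any edge-representative into a recursively obtained $\Tint$-path of the bridge inside the corresponding face.

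The main obstacle, and the reason the proof needs ``quite a few cases,'' is twofold. The combinatorial obstacle is that when $f$ has large degree, the two face-adjacent vertices that we want to use as a cutting pair need not be adjacent, so we cannot simply contract an edge; we must argue via nooses through $f$ that the relevant vertex set is genuinely a cutting pair or triplet, and we must check that the pieces we recurse on are strictly smaller and still corner-3-connected — the $c3c$ condition is delicate because a cutting pair of a piece lying on a single side would be fatal, so the choice of which boundary vertices of $f$ become the new corners has to be made exactly right. The bookkeeping obstacle is the SDR: we have only the interior vertices of $P$ available as representatives, each $P$-bridge needs its own, and when we split $G$ we must make sure the representatives assigned in the two recursive calls together with the one new representative taken from $f$'s interior are all distinct and all interior in $G$. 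I expect the $c3c$-preservation arguments for the split pieces to be the single hardest part, since they must be checked separately in every case and they drive the entire case distinction; the SDR injectivity, while tedious, follows a uniform pattern (one fresh interior vertex of $f$ per split, disjoint vertex sets for the two recursive pieces) once the splits are set up correctly.
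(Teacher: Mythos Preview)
Your proposal has the right general spirit (induction, case analysis on cutting pairs, substitution trick for representatives), but there is a genuine structural gap: organizing the case split around the single interior face $f$ incident to $(\V,\W)$ will not carry you through the hardest case. When $G$ has no cutting pair with one vertex on each of two different sides, nothing forces any vertex of $f$ (other than $\V,\W$ themselves) to be exterior or to be face-adjacent to $\U$ or $\X$; the face $f$ may be entirely ``deep'' inside $G$, so neither of your two subcases (``$t$ on the far side forms a cutting pair with $\U$ or $\X$'' versus ``leave the interior of $f$ as one bridge'') applies. The paper's resolution of this situation is not local to $f$ at all: it builds a \emph{necklace}, a noose $\langle \X_\U{=}x_0,f_1,x_1,\dots,f_s,x_s{=}B\rangle$ (for $B\in\{\V,\W\}$) of interior vertices each of which is face-adjacent to the right side, chosen leftmost. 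This necklace cuts off a single large subgraph $G_0^+$ on which one recurses, together with a string of small pieces $G_1,\dots,G_s$ that become $P$-bridges (with the interior vertices $x_i$ as their representatives) or get absorbed via the substitution trick. Nothing in your outline produces this multi-face chain, and without it the induction stalls precisely when the graph is ``nicely'' connected.

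A second, related omission is that the paper does not prove Lemma~\ref{lem:T_int} in isolation: it proves it simultaneously with a companion statement (Lemma~\ref{lem:T_end}) that allows the endpoint $\X$ to serve as a representative under an extra adjacency hypothesis~\starcondition. This stronger variant is what makes the necklace case close: in one of the subcases (your Case~\ref{case:4b-3}), a piece $G_i$ has four attachment points and must be replaced by a path whose construction needs the $\Tend$-path guarantee, because the representative temporarily sits at a corner and is then transferred. Your proposal never introduces this auxiliary statement, so even if you discovered the necklace, you would be missing the tool needed to handle four-attachment bridges. In short: the decomposition you sketch is too local, and the induction hypothesis you carry is too weak; both the necklace construction and the paired $\Tend$-lemma are essential ideas you have not yet supplied.
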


We need a second result for the induction.  Let a
\emph{$\Tend$-path} be a $\TSDR$-path $P$ that visits all exterior vertices, and
where representative $\sigma(C)$ is interior or the last vertex of $P$, for all $P$-bridges $C$.

\begin{lemma}
\label{lem:T_end}
Let $G$ be a plane graph with distinct vertices $\U,\X$ on
the outer face.  Let $(\V,\W)\neq(\U,\X)$ be an edge on the outer face.
If $G$ satisfies $c3c(\U,\V,\W,\X)$
and 
\begin{eqnarray*}
\textlabel{\starconditionsymbol}{starcond}\quad & (\W,\X) \textrm{ and } (\X,\U) \textrm{ are edges,}
\end{eqnarray*}
then $G$ has a \Tend-path~$P$ that begins at $\U$, ends at $\X$, and uses $(\V,\W)$
and $(\W,\X)$.%
\footnote{This lemma is a special case of the ``Three Edge Lemma'' 
\cite{TY94}, which states that for any three edges on the outer face
there exists a Tutte cycle containing them all.  However, it cannot
simply be obtained from it since we require restrictions on the
location of representatives.}
Further, if $\X$ is the representative of a $P$-bridge $C$,
then~$C$ has~$\W$ and~$\X$ as attachment points.
\end{lemma}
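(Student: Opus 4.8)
The plan is to prove Lemma~\ref{lem:T_end} by induction on $|V(G)|$, in tandem with Lemma~\ref{lem:T_int} and along the same case distinctions; the extra hypothesis $\starconditionsymbol$ supplies exactly the local structure at $\X$ that lets one also force $(\W,\X)$ onto the path, and the relaxation allowing $\sigma(C)=\X$ is the matching concession. Both lemmas are assumed for all strictly smaller graphs.

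First I would read off the shape of $G$. Since $(\V,\W)$ lies on the outer face, the corners occur in the cyclic order $\U,\V,\W,\X$, and $(\W,\X)$ and $(\X,\U)$ are edges, the outer boundary of $G$ is a cycle consisting, read cyclically, of $\U$, the side $\U\V$, and then $\V,\W,\X$; in particular the three sides $\V\W$, $\W\X$, $\X\U$ are single edges. Consequently $c3c(\U,\V,\W,\X)$ forbids $\{\W,\X\}$ and $\{\X,\U\}$ from being cutting pairs of $G$, since each set lies entirely on one side. The base case, in which $G$ has no vertex besides $\U,\V,\W,\X$ and hence (by $c3c$) is a $4$-cycle, possibly with a chord, is immediate: the path $\U,\V,\W,\X$ meets every requirement.

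For the inductive step I would mimic the split-and-recurse argument used for Lemma~\ref{lem:T_int}. Locate a cutting pair or triplet in a convenient position --- one that has $\W$ on it, or that peels off the interior neighbours of $\X$, chosen so that $\V$ and $\X$ stay on one side and hence both forced edges $(\V,\W),(\W,\X)$ fall into a single piece $G_0$. Cut $G$ along the associated noose into $G_0$ and one or more further pieces; recurse on each further piece with Lemma~\ref{lem:T_int}, reading off its endpoints and forced edge by the SPQR-based bookkeeping used earlier in the paper; recurse on $G_0$ with Lemma~\ref{lem:T_end} after checking --- inserting a virtual edge $(\X,\U_0)$ if necessary --- that the induced corners of $G_0$ again satisfy $\starconditionsymbol$ and $c3c$; and splice the sub-paths together. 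One then verifies that $P$ visits all exterior vertices of $G$, that every $P$-bridge has at most three attachment points, and that the sub-SDRs combine into an SDR with interior representatives, with the single exception that a $P$-bridge $C$ with $\sigma(C)=\X$ must --- by $\starconditionsymbol$ --- have been attached only at $\{\W,\X\}$, so that the final clause of the lemma holds for it.

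I expect the main obstacle to be the bookkeeping around $\W$ rather than the existence of the split. Since $\W$ is the vertex shared by the two forced edges, it cannot be an endpoint of a recursive sub-instance, so the split must leave both $(\V,\W)$ and $(\W,\X)$ inside one piece in which $\W$ is an interior vertex of the path, and the sub-path for that piece must be furnished by a $\Tend$-type hypothesis --- this is precisely why the right induction hypothesis is the $\Tend$-version rather than a $\Tint$-version with $(\W,\X)$ forced (which would not survive the recursion), and in the terminology of the footnote it is the case where \cite{AKS84}'s Case~3 is resolved symmetrically.
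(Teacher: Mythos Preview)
Your inductive step has a real gap. You propose to ``locate a cutting pair or triplet in a convenient position'' and split along it, but you give no argument that such a set exists. If $G$ is 4-connected there are no cutting triplets at all, yet the lemma must still be proved; even in a 3-connected graph there need not be any triplet that separates $\{\V,\W,\X\}$ from something nontrivial. The paper's device for this situation (Case~4a) is a \emph{$\W$-necklace}: a sequence $\U{=}x_0,x_1,\dots,x_s{=}\W$ in which consecutive vertices share an interior face and every $x_i$ is face-adjacent to $\X$ (the entire right side, under \starcondition). Such a necklace always exists and supplies a noose that splits $G$ into a left piece $G_0$ not containing $\X$ and pieces $G_1,\dots,G_s$, each with $\X$ on its boundary, even when $G$ has no cutting pair or triplet.

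The recursion then works differently from what you outline. The paper does \emph{not} keep $\X$ in the piece that carries the induction; instead it applies Lemma~\ref{lem:T_int} (not Lemma~\ref{lem:T_end}) to $G_0^+$ with corners $\U,\V,\W$, obtaining a \Tint-path $P_0$ from $\U$ to $\W$ through $(\V,\W)$, and then simply appends the edge $(\W,\X)$. Each $G_i$ either remains a $P$-bridge with representative $x_i$ (or $\X$ when $i=s$) or is absorbed via the substitution trick if $P_0$ used the virtual edge $(x_{i-1},x_i)$. So the forced edge $(\W,\X)$ is never carried through a recursion; it is tacked on at the end. The only bridge that can get $\X$ as representative is $G_s$, with attachment points $\{x_{s-1},\W,\X\}$ --- three points, not two --- which is why the last clause of the lemma asserts only that $\W$ and $\X$ are \emph{among} the attachment points, contrary to your reading. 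Two minor points: your base case misses the triangle ($\U=\V$ is permitted), and the SPQR-tree bookkeeping belongs to the 2-connected reduction, not to this argument.
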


See Figure~\ref{fig:case4a-pplus} for a graph that satisfies~\starcondition.

We assume throughout that
$\U,\V,\W,\X$ are enumerated in ccw order along the outer face, 
the other case can be resolved by reversing the planar embedding.  

The following trick will help shorten the proof:  If
graph $G$ satisfies~\starcondition, then
Lemma~\ref{lem:T_end} implies Lemma~\ref{lem:T_int}. 
Namely, if Lemma~\ref{lem:T_end} holds 
then we have a \Tend-path $P$ from~$\U$ to~$\X$ through $(\V,\W)$ and $(\W,\X)$.  
If this is not a \Tint-path, then
some $P$-bridge~$C$ has $\X$ as representative, and by assumption
also has $\W$ as attachment point.  It must have a third attachment point $u$,
otherwise $\{\W,\X\}$ would be a cutting pair within one side of $G$, contradicting
corner-3-connectivity.  It has no more attachment points since~$P$ is a Tutte path,
so $\{\W,\X,u\}$ is a cutting triplet.  We apply the {\em substitution trick}
described below (and useful in other situations as well),  which 
replaces $(\W,\X)$ with a path through~$C$ that does not
use $u$.  Thus, $C$ no longer needs a representative and we obtain a
\Tint-path.

\subsubsubsection{The substitution trick.}
This trick can be applied whenever we 
have an edge $e=(w,y)$ used by some $\TSDR$-path $P$, 
and a $P$-bridge $C$ that resides inside a noose through
some cutting triplet $\{u,w,y\}$ for some vertex $u$.
Define
$C^+=G[C] \cup  \{(u,w),(w,y)\}\setminus \{(u,y)\}$,
where edges are added only if they did not exist in $G[C]$.%
\footnote{We apply the substitution trick even
when $V(C^+)=V(G)$ and~$G$ has a triangular outer face; not adding edge $(w,y)$ will 
ensure that $C^+$ has fewer interior vertices and induction can
be applied.}

\newcommand{\ctcsubsCaption}{(a)~Corner-3-connectivity $c3c(\U,\V,\W,\X)$, (b)~the substitution trick, and (c) Case~\ref{case:1}.}
\begin{figure}[t]
\centering
\subcaptionbox{\label{fig:corner3con}}{\includegraphics[scale=.75]{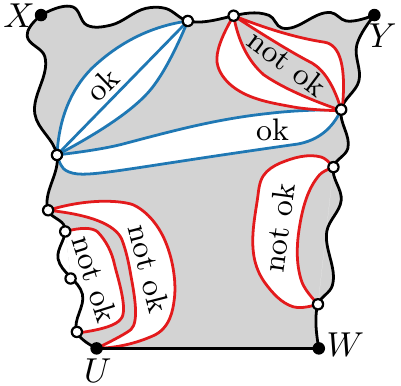}}
\hfil
\subcaptionbox{\label{fig:substitution}}{\includegraphics[scale=.9]{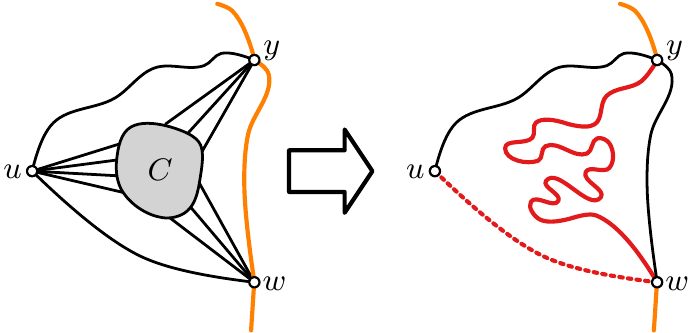}}
\hfil
\subcaptionbox{\label{fig:case1}}{\includegraphics{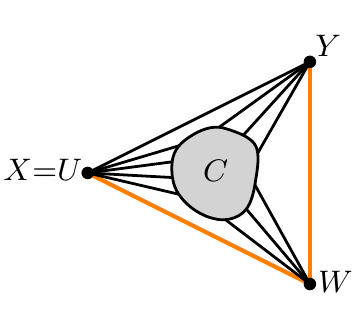}}
\caption{\ctcsubsCaption}
\label{fig:c3csubs}
\end{figure}

One easily verifies that $C^+$ satisfies $c3c(u,v,w)$, else there would have 
been a cutting pair in~$G$.
Hence, by induction, $C^+$ has a \Tint-path $P_{C^+}$ from $u$ to
$y$ that uses edge $(u,w)$.  It does not use the edge $(u,y)$
since $P_{C^+}$ begins at $u$ with edge $(u,w)$.  So $P_{C^+}\setminus (u,w)$ 
is a path in $C$ from $w$ to $y$ that does not visit $u$.   
Substitute this in place of edge $(w,y)$ of $P$; see Fig.~\ref{fig:substitution}.  
One easily verifies that the resulting path $P'$ is a \Tint-path.  We will
prove this in full detail in Section~\ref{sec:additional}, but roughly
speaking, combining paths preserves \Tint-paths because every 
$P'$-bridge can inherit its representative from~$P$ or~$P_{C^+}$,
and no vertex is used twice as representative since $P_{C^+}$ does
not use $\{u,v,w\}$ as representatives.

\subsection{Proof of Lemma~\ref{lem:T_int} and Lemma~\ref{lem:T_end}}
\label{subsec:proof}

We prove the two lemmas simultaneous 
by induction on the number of vertices of $G$, with an
inner induction on the number of interior vertices.  The base case is $n=3$
where $G$ is a triangle, but 
the same construction works whenever the outer face is a triangle (see below).
For the induction step, we need the notation
$S_{xy}$, which is the outer face path from $x$ to $y$ in ccw 
direction.  In particular, the four sides are $S_{\U\V}$, $S_{\V\W}$, $S_{\W\X}$ and
$S_{\X\U}$.  We sometimes name sides as
suggested by Fig.~\ref{fig:corner3con}, so $S_{\U\V}, S_{\V\W}, S_{\W\X}$
and $S_{\X\U}$ are the {\em left/bottom/right/top side}, respectively.

\subsubsection{\texorpdfstring{Case 1: The outer face is a triangle}{Case 1}}\caselabel{1}

Figure~\ref{fig:case1} illustrates this case.
We know that $\U\neq \X$ and $\V\neq \W$, so we must have $\U=\V$ or $\W=\X$.
For Lemma~\ref{lem:T_end}, we know that
\starcondition holds, which forces $\W\neq \X$, hence $\U=\V$.
For Lemma~\ref{lem:T_int}, we may assume $\U=\V$ by symmetry, for otherwise
we reverse the planar embedding, find a path from $Y$ to $X$ that uses
$(W,U)$ (with this, we have $\U'=\V'$) and then reverse the result.

So $\U=\V$.  Define $P$ to be $\langle \U{=}\V,\W,\X\rangle$ 
and observe that this is a 
\Tend-path, because
the unique $P$-bridge $C$ (if any) has attachment points
$\{\V,\W,\X\}$, and we can assign $\X$ to be its representative. 
So Lemma~\ref{lem:T_end} holds.
Since condition \starcondition is satisfied, this implies Lemma~\ref{lem:T_int}.

\subsubsection{\texorpdfstring{Case 2: $G$ has a cutting pair $\{\lcv,\rcv\}$
with $\lcv$ and $\rcv$ on the left and right side}{Case 2}}\caselabel{2}

Figure~\ref{fig:case2} illustrates this case.
Let $\calN$ be a noose through $\lcv$ and $\rcv$ along a common interior
face~$f^*$ and then going through the outer face.
Let $G_t$ and $G_b$
be the subgraphs inside and outside $\calN$, named such that $G_b$
contains the bottom side. Let $G_t^+/G_b^+$ be the graphs obtained from
$G_t/G_b$ by
adding $(\lcv,\rcv)$ to each, 
even if it did not exist in $G$ (we will ensure that the final path
does not use it).  

We first show Lemma~\ref{lem:T_int}.
One can easily verify that $G_t$ satisfies $c3c(\U,u,w,\X)$ since its
outer face is a simple cycle; see Appendix~\ref{sec:additional}.
Apply induction and find a \Tint-path
$P_t$ of $G^+_t$ from $\U$ to $\X$ that uses edge $(\lcv,\rcv)$.
Now apply a modified substitution trick to $(\lcv,\rcv)$.  Namely,
by induction, there is a \Tint-path~$P_b$ of $G^+_b$ from $\lcv$ to $\rcv$ that 
uses edge $(\V,\W)$.
Substitute $P_b$ into $P_t$ in place of $(\lcv,\rcv)$ to get $P$.
Path $P$ uses $(\V,\W)$ since $P_b$ does.   It does not use $(\lcv,\rcv)$
since we removed this from $P_t$, and 
$P_b$ cannot use it since 
$P_b$ starts at $\lcv$,
ends at $\rcv$, and visits $(\V,\W)$ in between.  
So after inheriting representatives from $P_b$ and $P_t$ we obtain a \Tint-path~$P$ in~$G$.

To prove Lemma~\ref{lem:T_end}, note that exactly one of $G^+_t$ and $G^+_b$
contains $(\W,\X)$; use a \Tend-path~for this subgraph and create $P$ as above.
Only one graph uses $\X$ as representative, and 
one easily
shows that~$P$ is a \Tend-path.

\newcommand{\casetwoCaption}{(a)~Case~\ref{case:2}, (b)--(d) proof of Lemma~\ref{lem:3conn} for Case~\ref{case:2}}
\begin{figure}[t]
\centering
\subcaptionbox{Case~\ref{case:2}}{\includegraphics[page=1,scale=.72]{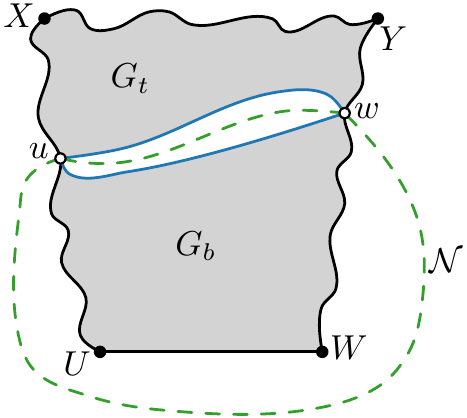}}
\hfil
\subcaptionbox{The path~$P_t$}{\includegraphics[page=2,scale=.72]{case2}}
\hfil
\subcaptionbox{The path~$P_b$}{\includegraphics[page=3,scale=.72]{case2}}
\hfil
\subcaptionbox{The path~$P$}{\includegraphics[page=4,scale=.72]{case2}}
\caption{\casetwoCaption}
\label{fig:case2}
\end{figure}

\subsubsection{\texorpdfstring{Case 3: $G$ has a cutting pair $\{\tcv,\rcv\}$ with
$\tcv$ and $\rcv$ on the top and right side, respectively.  Furthermore, 
there is an interior face $f^*$ containing $\tcv$ and $\rcv$ that does not contain $\X$.}{Case 3}}\caselabel{3}

For later applications, we first want to point out that if $G$ has a
cutting pair $\{\tcv,\rcv\}$ on the top and right side for which $(\tcv,\rcv)$
is an edge, then such a face $f^*$ always exists, because
there are two interior faces containing $\tcv$ and $\rcv$, 
and not both can contain $\X$.

\smallskip

Figure~\ref{fig:case3} illustrates this case.  
We know $\rcv\neq \X\neq \tcv$, else $\{\tcv,\rcv\}$ would be a cutting pair within one side.
We may assume $\tcv\neq \U$; else we can use Case~\ref{case:2}.
Hence the top side contains at least three vertices~$\U,\tcv,\X$,
so \starcondition does not hold and we have to prove only Lemma~\ref{lem:T_int}.

We choose $\{\tcv,\rcv\}$ such that 
$\rcv$ is as close to $\W$ as possible (along the right side).
The face $f^*$ containing $\tcv,\rcv$ may have multiple edges on 
the top side; let $\tcv$ be the one that is as close to $\X$ as possible.
Define $G_b$, $G_b^+$, $G_t$, $G_t^+$ to be as in Case~\ref{case:2}.
Since the outer face of $G^+_b$ is a simple cycle, it
satisfies $c3c(\U,\V,\W,\rcv,\tcv)$.
But since we chose~$\rcv$ to be as close to~$\W$ as possible, it also satisfies $c3c(\U,\V,\W,\tcv)$.
Namely, assume for contradiction that some cutting pair $\{\tcv',\rcv'\}$ 
exists along the side $S_{\W \rcv}\cup (\rcv,\tcv)$ of $G_b^+$;  see Fig.~\ref{fig:case3-c3c}.
Since there is no cutting pair 
within~$S_{\W \rcv}$, 
it must have the form $\{\tcv,\rcv'\}$ for some $\rcv'\neq w$ on $S_{Ww}$.
As $f^*$ does not contain $\X$, neither
can any face containing $\{\tcv,\rcv'\}$, so $\{\tcv,\rcv'\}$ could have been used
for Case~\ref{case:3}, contradicting our choice of~$\rcv$. 

By induction, we can find a \Tint-path $P_b$ of $G^+_b$ from $\U$ to $\tcv$
that includes the edge $(\V,\W)$. 
The plan is to combine~$P_b$ 
with a path through $G_t$, but we must distinguish cases.

\newcommand{\casethreeCaption}{Case~\ref{case:3}: (a)~$G_b^+$ satisfies $c3c(\U,\V,\W,\tcv)$, (b) Case~\ref{case:3a},
(c) Case~\ref{case:3b-1}, (d) Case~\ref{case:3b-2}.}
\begin{figure}[t]
\centering
\subcaptionbox{\label{fig:case3-c3c}}{\includegraphics[scale=.75,page=2]{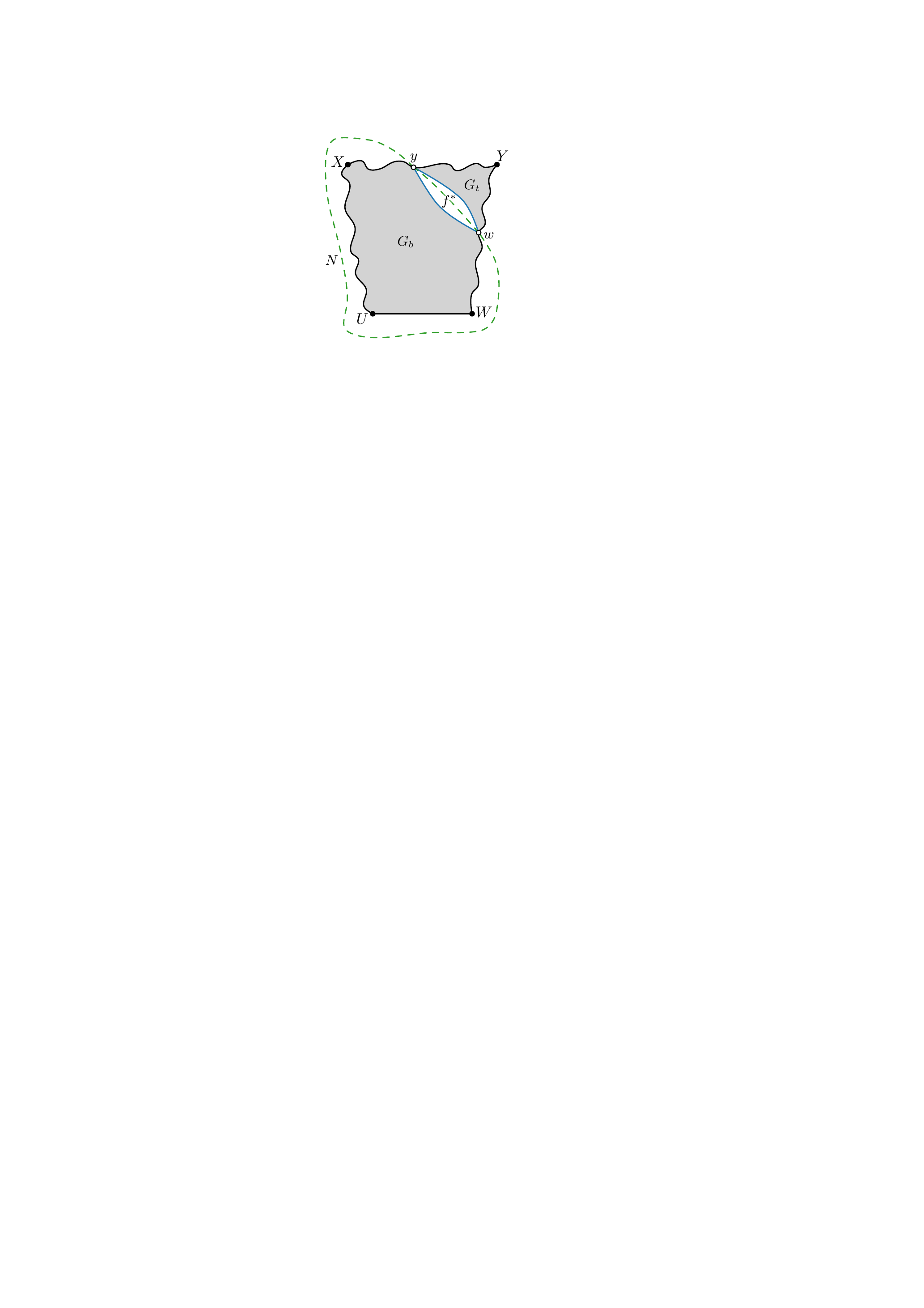}}
\hfil
\subcaptionbox{\label{fig:case3-a}}{\includegraphics[scale=.75,page=8]{case3}}
\hfil
\subcaptionbox{\label{fig:case3b-gt}}{\includegraphics[scale=.75,page=5]{case3}}
\hfil
\subcaptionbox{\label{fig:case3b-2}}{\includegraphics[scale=.75,page=6]{case3}}
\caption{\casethreeCaption}
\label{fig:case3}
\end{figure}

\subsubsubsection{Case 3a: $P_b$ does not contain $(\tcv,\rcv)$ 
or $(\tcv,\rcv)\in G$.}\caselabel{3a}
Observe that $G_t^+$ satisfies $c3c(\tcv,\rcv,\X)$.  By induction, find a \Tint-path $P_t$ in $G_t^+$
from $\X$ to $\rcv$ that uses edge $(\tcv,\rcv)$. Append
the reverse of $P_t\setminus (\tcv,\rcv)$ to $P_b$ to obtain a 
\Tint-path; see Fig.~\ref{fig:case3-a}.

\subsubsubsection{Case 3b: $P_b$ contains $(\tcv,\rcv)$ and $(\tcv,\rcv)\not\in G$.}\caselabel{3b}
In this case, we must remove $(\tcv,\rcv)$ from the path and hence
use a subpath in $G_t$ to reach vertex $\tcv$.
This requires further subcases.  Let $\pi_f$ be the path along $f^*$ from $\tcv$ to $\rcv$ that
becomes part of the the outer face of $G_t$.
Let $(y,z)$ be the edge incident to~$y$ on~$\pi_f$.

\smallskip
\noindent\textbf{Case 3b-1: $\pi_f$ contains no vertex on the outer face of
$G$ other than $\tcv$ and $\rcv$.}\caselabel{3b-1} See Fig.~\ref{fig:case3b-gt}.
The outer face of $G_t$ is then a simple cycle and $G_t$ satisfies $c3c(\rcv,\tcv,\X)$. 
By induction, we can find a $\Tint$-path $P_t$ in $G_t$ that begins at
$\X$, ends at $\rcv$, and uses $(\tcv,z)$.

\smallskip
\noindent\textbf{Case 3b-2: $\pi_f$ contains a vertex $x\neq \tcv,\rcv$ on the outer 
face of $G$.}\caselabel{3b-2} See Fig.~\ref{fig:case3b-2}.
Since~$x$ is on $f^*$, it cannot be on the top side by choice of $\tcv$. 
So $x\in S_{\rcv\X}\setminus {\X}$.  In fact, $x$ must be the neighbor of $\rcv$ on both
$S_{\rcv\X}$ and $\pi_f$, else there would be a cutting pair within the
right side.
Set $G_t'$ to be the graph inside a noose through $\tcv$ and $x$ that has $\X$ inside.
Since $\pi_f$ has no vertices other than $\tcv,x,\rcv$ on the outer face of $G$, graph
$G_t'$ has a simple cycle as outer face,
and therefore satisfies $c3c(\X,\tcv,x)$, 
so it satisfies $c3c(\X,\tcv,z,x)$. By induction,
we can find a $\Tint$-path~$P_t'$ of $G_t$ that begins at $\X$, ends at~$x$,
and uses $(\tcv,z)$.
We append $(\rcv,x)$ to obtain~$P_t$.

\smallskip
In both cases, we obtain a path $P_t$ that begins at $\X$, ends at $\rcv$, and visits all
of $G_t$.  Appending the reverse of this to $P_b\setminus (\tcv,\rcv)$ gives the \Tint-path.

\subsubsection{\texorpdfstring{Case 3$'$: $G$ has a cutting pair $\{\tcv,\rcv\}$ with
$\tcv$ and $\rcv$ on the top and left side, respectively.  Furthermore, 
there is an interior face $f^*$ containing $\tcv$ and $\rcv$ that does not contain $\U$.}{Case 3'}}\caselabel{3'}

This is handled symmetrically to Case 3.

\subsubsection{\texorpdfstring{Case 4:  None of the above}{Case 4}}\caselabel{4}

In this case, we split $G$ into one big graph $G_0$ and (possibly many) smaller
graphs $G_1,\dots,G_s$, recurse in $G_0$ and then substitute \Tint-paths of
$G_1,\dots,G_s$ or use them as $P$-bridges.

We need two subcases, but first give some steps that
are common to both.
Let $\X_{\U}$ be the neighbor of~$\X$ on the top side.
Define a {\em $B$-necklace} (for $B\in \{\V,\W\}$)
to be a noose  $\calN_0: \langle \X_{\U}{=}x_0,f_1,x_1,\dots,x_{s-1},\allowbreak f_s,x_s{=}B,f_o\rangle$,
(where $f_{o}$ is the outer face) for which
$x_i$ is face-adjacent to at least one vertex 
on $S_{\W\X}\setminus \{B\}$ for $1\leq i\leq s-1$. 
See also Fig.~\ref{fig:case4a}.
We say that the necklace
is {\em simple} if it contains no vertex twice,
and {\em interior} if every $x_i$ (for $0 < i < s$) is an interior vertex.
One can argue that if none of the previous cases applies, then
there always exists a simple interior $B$-necklace
(see Section~\ref{sec:additional}).

Route $\calN_0$ through the outer face such that 
the left side is in its interior, and let $G_0$ (the ``left graph'')
be the graph inside $\calN_0$.  We say that $\calN_0$ is {\em leftmost} if
(among all simple interior $B$-necklaces) its left graph~$G_0$ is smallest,
and (among all simple interior $B$-necklaces whose left graph 
is $G_0$) it contains the most vertices of $G_0$.
Fix a leftmost $B$-necklace $\langle x_0,\dots,x_s\rangle$.

\newcommand{\casefouraCaption}{Case~\ref{case:4}. (a)~A simple interior $\V$-necklace that is not leftmost due to face~$f$
(which yields a cutting pair $\{x_j,x_i\}$), and since it
could include vertex $z$.  (b)~The graphs $G_1,\ldots,G_s$.
(c--d)~Case~\ref{case:4a}. The path~$P^+$ after using the substitution trick and
(d)~assignment of the representatives.}
\begin{figure}[t]
\centering
\subcaptionbox{\label{fig:necklace}}{\includegraphics[page=1,scale=.75]{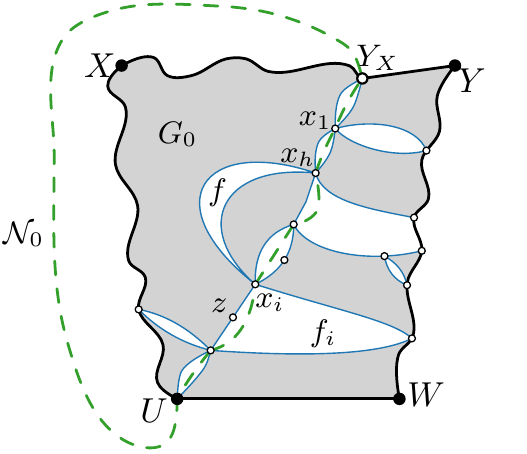}}
\hfil
\subcaptionbox{\label{fig:construction}}{\includegraphics[page=2,scale=.75,trim={8mm 0 0 0},clip]{necklace}}
\hfil
\subcaptionbox{\label{fig:case4a-pplus}}{\includegraphics[page=3,scale=.75]{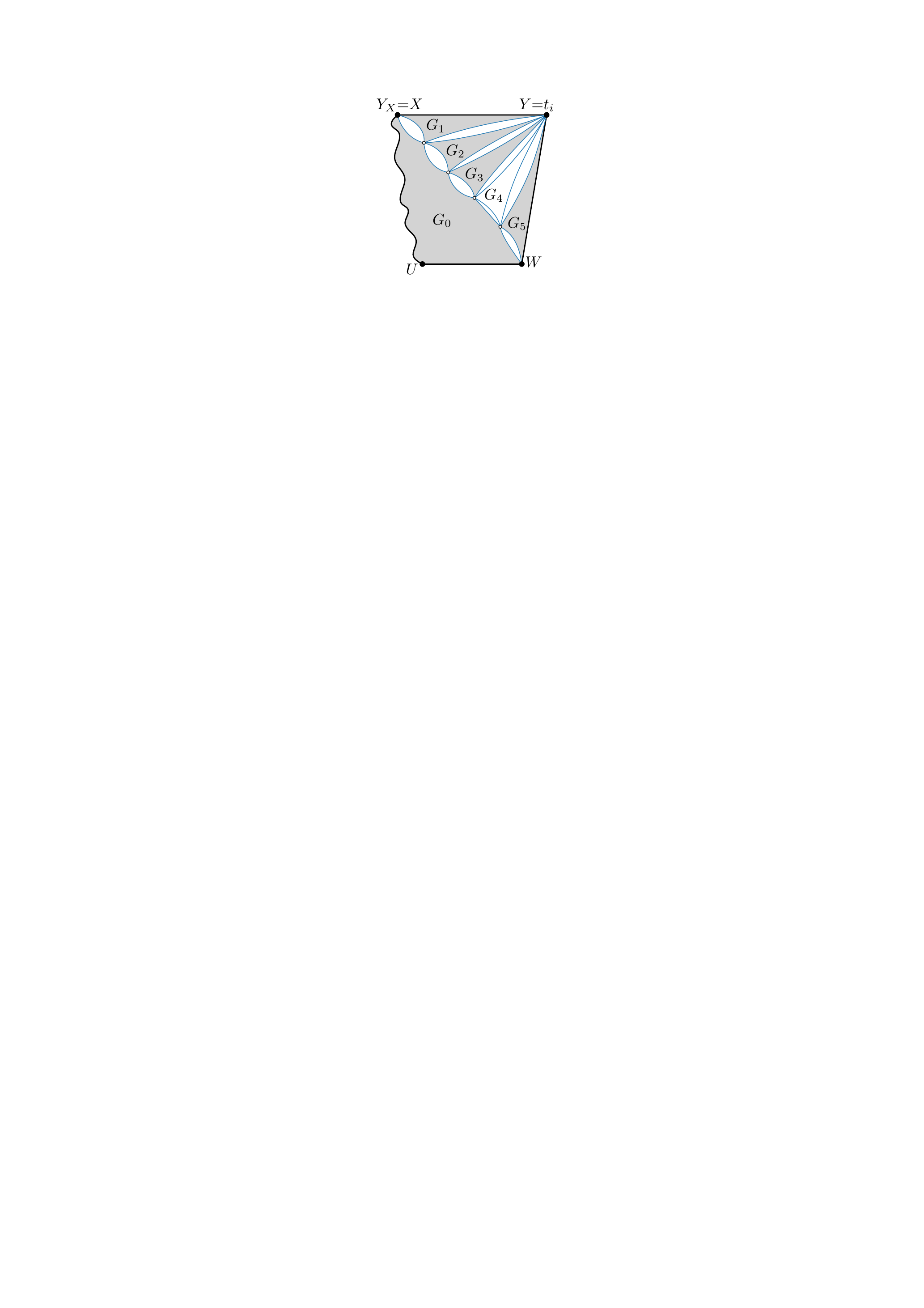}}
\hfil
\subcaptionbox{\label{fig:case4a-rep}}{\includegraphics[page=4,scale=.75]{case4a}}
\caption{\casefouraCaption}
\label{fig:case4a}
\end{figure}

\begin{claim}
\label{cl:interior}
\label{cl:inner}
\label{cl:virtual}
If $(x_i,x_{i+1})$ is not an edge for some $0\leq i < s$, 
then the face $f_i$ of $\calN_0$ contains no vertex of $S_{\W\X}\setminus\{B\}$.
\end{claim}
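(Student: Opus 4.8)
The plan is to argue by contradiction, using the leftmost-ness (both the minimality of $G_0$ and the maximality of the number of vertices on $\calN_0$) to show that if $f_i$ did contain a vertex $z\in S_{\W\X}\setminus\{B\}$, then we could build a strictly ``better'' simple interior $B$-necklace. So suppose $(x_i,x_{i+1})$ is not an edge, meaning $f_i$ is a face of degree at least $4$ that meets $\calN_0$ in the two non-adjacent vertices $x_i,x_{i+1}$, and suppose for contradiction that some vertex $z$ of $S_{\W\X}\setminus\{B\}$ is incident to $f_i$.

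First I would observe that $z\ne x_i$ and $z\ne x_{i+1}$: the interior endpoints $x_1,\dots,x_{s-1}$ are interior vertices by the ``interior'' property, while $z$ is exterior; and $x_0=\X_{\U}$, $x_s=B$ are not on $S_{\W\X}\setminus\{B\}$ (note $\X_{\U}$ lies on the top side, and if it happened to equal $\X$ it is still excluded). So $z$ is a genuinely new vertex incident to $f_i$. Now I would build a new curve $\calN_0'$ by rerouting $\calN_0$: replace the portion $\langle x_i, f_i, x_{i+1}\rangle$ by $\langle x_i, f_i, z, f_i, x_{i+1}\rangle$ — that is, detour through $z$ inside the face $f_i$. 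Since $z$ is incident to $f_i$, this is still a noose (a simple closed curve through vertices and faces crossing no edges except at endpoints), after possibly perturbing so it stays simple. Because $z\in S_{\W\X}\setminus\{B\}$, vertex $z$ is trivially face-adjacent to a vertex of $S_{\W\X}\setminus\{B\}$ (namely itself, via the outer-face incidence — but we need an \emph{interior} face; here I would instead note $z$ is incident to $f_i$ which is interior, and $z$ is on $S_{\W\X}$, so the necklace condition ``$x_i$ face-adjacent to a vertex of $S_{\W\X}\setminus\{B\}$'' is witnessed by $z$ itself being such a vertex and $f_i$ being a common interior face). Hence $\calN_0'$ is again a simple interior $B$-necklace.

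Next I would compare left graphs. Detouring through $z$ moves the curve; I need to check whether the left graph can only shrink or stay the same. Since $z$ lies on $S_{\W\X}$, which is the \emph{right} side, $z$ is \emph{not} in the interior of $\calN_0$ (the left side is in the interior of $\calN_0$, and $z$ being an exterior vertex on the right side is on or outside $\calN_0$; in fact it is on the outer-face part only if it were an endpoint, so $z$ is strictly outside $G_0$). Therefore rerouting $\calN_0$ to pass \emph{through} $z$ cannot enlarge the set of vertices strictly inside — the detour bulges toward the right side, so the left graph $G_0'$ of $\calN_0'$ satisfies $V(G_0')\subseteq V(G_0)$. If $V(G_0')\subsetneq V(G_0)$ we contradict minimality of $G_0$ immediately. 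If $V(G_0')= V(G_0)$, then $\calN_0'$ has the same left graph as $\calN_0$ but contains one more vertex of $G_0$? Here I have to be careful: $z\notin V(G_0)$, so adding $z$ to the necklace does not increase the count of $G_0$-vertices on it. This is the subtle point, so I would instead extract a cutting pair, mirroring the hint in Figure~\ref{fig:necklace}: since $f_i$ is incident to $x_i$, $x_{i+1}$ (non-adjacent) and to $z$ on the right side, and recalling $x_i,x_{i+1}$ are interior, I would find that the noose through $x_i$ and $z$ along $f_i$ (closed off through the outer face on the right) exhibits $\{x_i,z\}$ or $\{x_{i+1},z\}$ as a cutting pair with $x_i$ (or $x_{i+1}$) interior — but a cutting pair must be exterior by internal-3-connectivity, a contradiction. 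That is the cleanest route: the existence of an interior face touching two interior necklace-vertices \emph{and} an exterior vertex forces an illegal interior cutting pair, or else forces a better necklace, and either way we are done.

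The main obstacle I anticipate is pinning down exactly which ``better necklace'' or ``illegal cutting pair'' argument applies without circularity: the leftmost-ness has two tie-break levels (smallest $G_0$, then most vertices of $G_0$ on $\calN_0$), and I must make sure the rerouted or alternative necklace is genuinely an improvement in the right lexicographic sense while still being \emph{simple} and \emph{interior}. In particular keeping simplicity after the detour, and verifying the interior condition for the new necklace vertices (every new $x_j$ with $0<j<s$ is interior), will require invoking that none of Cases~\ref{case:1}--\ref{case:3'} applies (so that all the relevant cutting pairs that would otherwise appear are excluded). I expect the bulk of the work to be this bookkeeping, with the topological ``the detour bulges toward the right side hence cannot enlarge the left graph'' step being intuitively clear from Figure~\ref{fig:necklace} but needing a careful Jordan-curve justification.
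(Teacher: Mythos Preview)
Your proposal has a genuine gap: you reroute through the wrong vertex. You take $z$ to be the assumed vertex on $S_{\W\X}\setminus\{B\}$ and try to insert it into the necklace. As you yourself diagnose, this fails on both tie-break levels because your $z$ is exterior (so the new necklace is not interior) and not in $G_0$ (so it does not increase the $G_0$-count). Your fallback cutting-pair argument does not rescue this: sharing one interior face $f_i$ with an interior vertex $x_i$ does not make $\{x_i,z\}$ a cutting pair; you would need a second face through both, and since $x_i$ is interior you cannot close the noose through the outer face. The hint you cite from Figure~\ref{fig:necklace} concerns a different phenomenon (a shortcut face between two non-consecutive $x_j,x_i$, used later to argue corner-3-connectivity of $G_0^+$), not this claim.

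The paper's argument picks a different vertex. Because $(x_i,x_{i+1})$ is not an edge, both boundary paths of $f_i$ between $x_i$ and $x_{i+1}$ contain an intermediate vertex; let $z'$ be such a vertex on the path that lies \emph{inside} $\calN_0$, hence $z'\in G_0$. The vertex on $S_{\W\X}\setminus\{B\}$ that you focused on is used only indirectly: since it lies on $f_i$ and so does $z'$, it witnesses that $z'$ is face-adjacent to $S_{\W\X}\setminus\{B\}$, so $z'$ is a legal necklace vertex. Inserting $z'$ between $x_i$ and $x_{i+1}$ (routing through $f_i$ on either side of $z'$) gives a simple interior $B$-necklace with the same left graph $G_0$ but one more vertex of $G_0$, contradicting leftmost-ness at the second tie-break level. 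That is the missing idea.
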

\begin{proof} 
If $(x_i,x_{i+1})$ is not an edge of $G$, then both paths
from $x_i$ to $x_{i+1}$ on $f_i$ contain at least one other vertex.
One of them, say~$z$, is inside $\calN_0$. If~$f_i$ contains a vertex of 
$S_{\W\X}\setminus \{B\}$,
then~$x_i$ and~$z$ are face-adjacent,~$z$ and $x_{i+1}$ are face-adjacent, and
$z$ has a neighbor on $S_{\W\X}\setminus \{B\}$, so $x_0,\ldots,x_i,z,x_{i+1},\ldots,x_s$
is a simple interior $B$-necklace with the same left graph but containing more vertices of $G_0$.  Hence $\calN_0$ is not leftmost, a contradiction.
\end{proof}

For $i=0,\dots,{s-1}$, let $t_i$ be the vertex on $S_{\W\X}\setminus \{B\}$ that 
is face-adjacent to $x_i$ and closest to $\X$ (along the right side) among all such vertices.  
Set $t_s=\W$ if $x_s=\V$, and $t_s=\X$ otherwise.
For $0< i\leq s$, define $\mathcal{N}_i$ to be the noose through
$\langle x_{i-1},x_i,t_i,t_{i-1}\rangle$ such that the left side is outside $\calN_i$.
For $0< i\leq s$ let $G_i$ be the graph inside $\calN_i$
(i.e., a cut component of $\{x_{i-1},x_i,t_{i-1},t_i\}$);
see Fig.~\ref{fig:construction}.

Let $G^+$ be the graph obtained from $G$ by adding
{\em virtual edges} $(x_i,x_{i-1})$ and $(t_i,t_{i-1})$ 
(for $i=1,\dots,s$) whenever these two vertices
are distinct and the edge did not exist in $G$.  
Let~$G_0^+$ be the graph obtained from $G_0$ by likewise adding virtual edges 
$(x_0,x_1),\dots,(x_{s-1},x_s)$.  This makes the outer face of $G_0$ a simple
cycle, so~$G_0^+$ satisfies $c3c(\U,\V,B{=}x_s,\dots,x_0{=}\X_{\U})$.
We distinguish two cases.

\subsubsubsection{Case 4a: \starcondition holds, i.e.,
$(\U,\X)$ and $(\W,\X)$ are edges.}\caselabel{4a}
We only have to prove Lemma~\ref{lem:T_end}  since this implies Lemma~\ref{lem:T_int}.
Consider Figs.~\ref{fig:case4a-pplus} and~\subref{fig:case4a-rep}.
	Let $\langle x_0{=}\X_{\U}{=}\U,x_1,\dots,x_s{=}\W\rangle$ be a leftmost $\W$-necklace.
	By $S_{\W\X}=(\W,\X)$, we have $t_i=\X$ for all $i$. 
	Since $x_0=\X_\U=\U$, we have that~$G_0^+$ satisfies $c3c(\U{=}x_0,x_1,\dots,x_s{=}\W,\V)$.
	But observe that $G_0^+$ has no cutting pair $\{x_i,x_j\}$ with $0\leq i < j \leq s$, 
  for otherwise the face $f$ containing $x_i$ and $x_j$ could be used as a 
  shortcut and $\calN_0$ was not leftmost (see Fig.~\ref{fig:necklace}).  
  So $G_0^+$ actually satisfies $c3c(\U,\W,\V)$.
	Use induction to obtain a \Tint-path $P_0$ from $\U$ to $\W$ in $G_0^+$
	that uses edge $(\V,\W)$.
	Then $P^+=P_0\cup (\W,\X)$ is a path in $G^+$ that contains $(\V,\W)$, and $(\W,\X)$. 

	Fix some $i=1,\dots,s$.
	If $P^+$ used edge $(x_{i-1},x_i)$ and it was virtual, then by
	Claim~\ref{cl:virtual} $f_i$ contains no vertex of $S_{\W\X}$, which
	means that the interior of $G_i$ is non-empty.
	Apply the substitution trick to remove $(x_{i-1},x_i)$ from $P^+$,
	replacing it with a path through $G_i$.    Otherwise, 
  we keep~$G_i$ as a $P^+$-bridge. We let its representative be~$x_i$ 
  if $1\leq i<s$, and $\X$ if $i=s$.
	Observe that this representative is interior or $\X$, and was not
	used by $P_0$ since~$P_0$ was a \Tint-path.  So we obtain a
	\Tend-path with the desired properties.

\subsubsubsection{Case 4b: \starcondition does not hold.}\caselabel{4b}
We must prove only Lemma~\ref{lem:T_int} and may therefore by
symmetry assume that $\U\neq \V$. 
We claim that this implies that $\deg(\X_\U)\geq 3$.  For if
$\deg(\X_\U)=2$, then its neighbors form a cutting pair, which
by corner-3-connectivity means that $\X_\U$ is a corner, hence
$\X_\U=\U$.  Since $\U\neq \V$, the two neighbors of $\X_\U$
are then $\X$ and a vertex on the left side, 
and we could have applied Case~\ref{case:2}.  Contradiction, so $\deg(\X_\U)\geq 3$.
Let $(\X_{\U},x_1)$ be the edge at $\X_\U$ that comes after $(\X_\U,\X)$
in clockwise order (see Fig.~\ref{fig:case4b-construction}).
Note that $x_1$ is face-adjacent to $\X$.  It must be
an interior vertex, for otherwise by $\deg(\X_\U)\geq 3$ edge $(\X_\U,x_1)$ is
a cutting pair that we could have used for Case~\ref{case:3} or \ref{case:3'}.

Let $\calN_0=\langle x_0{=}\X_{\U},x_1,\dots,x_s{=}\V\rangle$ be a 
simple interior $\V$-necklace; see Fig.~\ref{fig:case4b-construction}.   
We use a $\V$-necklace that is leftmost among all $\V$-necklaces that
contain $x_1$.  Note that Claim~\ref{cl:inner} holds for $\calN_0$
even with this restriction, since $(x_0,x_1)$ is an edge.
We know that $G_0^+$ satisfies $c3c(\U,\X_\U,x_1,\dots,x_s{=}\V)$.
But observe that $G_0^+$ has no cutting pair $\{x_i,x_j\}$ for $1\leq i < j \leq s$, 
for otherwise (as in Case~\ref{case:4a}) $\calN_0$ would not be the leftmost necklace
that uses $x_1$.  So $G_0^+$ actually satisfies $c3c(\U,\X_{\U},x_1,\V)$.

By induction, obtain a \Tint-path $P_0$ in $G_0$ from $\V$ to $\U$ through edge
$(x_1,x_0)$.  Append the path
$\langle \V,\W,t_s,\ldots,t_0{=}\X\rangle$ to the reverse of $P_0$ to 
obtain path $P^+$.
This path begins at~$\U$, ends at~$\X$, and contains $(\V,\W)$.
Any $P^+$-bridge is either a $P_0$-bridge (and receives a representative
there) or is $G_i$ for some $1\leq i\leq s$.  
For $i>1$, assign representative $x_{i-1}$ to~$G_i$; this is an interior vertex.  
Graph~$G_1$ has an empty interior by choice of $x_1$, so it needs no
representative.

There are two reasons why we cannot always use $P^+$ for the result.
First, it may use
virtual edges and hence not be a path in $G$.  Second, some
$P^+$-bridge $G_i$ may have four attachment points.    Both are
resolved by expanding $P^+$ via paths through the $G_i$'s.
Fix one~$i$ with $1\leq i \leq s$ and consider the following cases:

\smallskip
\noindent\textbf{Case 4b-1: $(x_{i-1},x_i)$ is virtual and
used by $P^+$, and $t_{i-1}=t_i$.}\caselabel{4b-1} 
By Claim~\ref{cl:virtual}, the interior of graph $G_i$ is non-empty
and inside the separating triplet $\{x_{i-1},x_i,t_i\}$.
Replace $(x_{i-1},x_i)$ by a path through $G_i$ with the substitution trick,
see graph $G_3$ in Fig.~\ref{fig:case4b}.

\newcommand{\casefourbCaption}{Case~\ref{case:4b}. (a) Construction of $P^+$ with representatives;
(b) Graphs 
$G_i^\asymp$ (Case~\ref{case:4b-2}) and~$G_i^\sqsubset$ (Case~\ref{case:4b-3}) used to substitute virtual edges;
(c) $P^+$ with representatives after all substitutions.}
\begin{figure}[t]
  \centering
  \begin{subfigure}[b]{.3\textwidth}
    \centering
    \includegraphics[page=4,scale=.75]{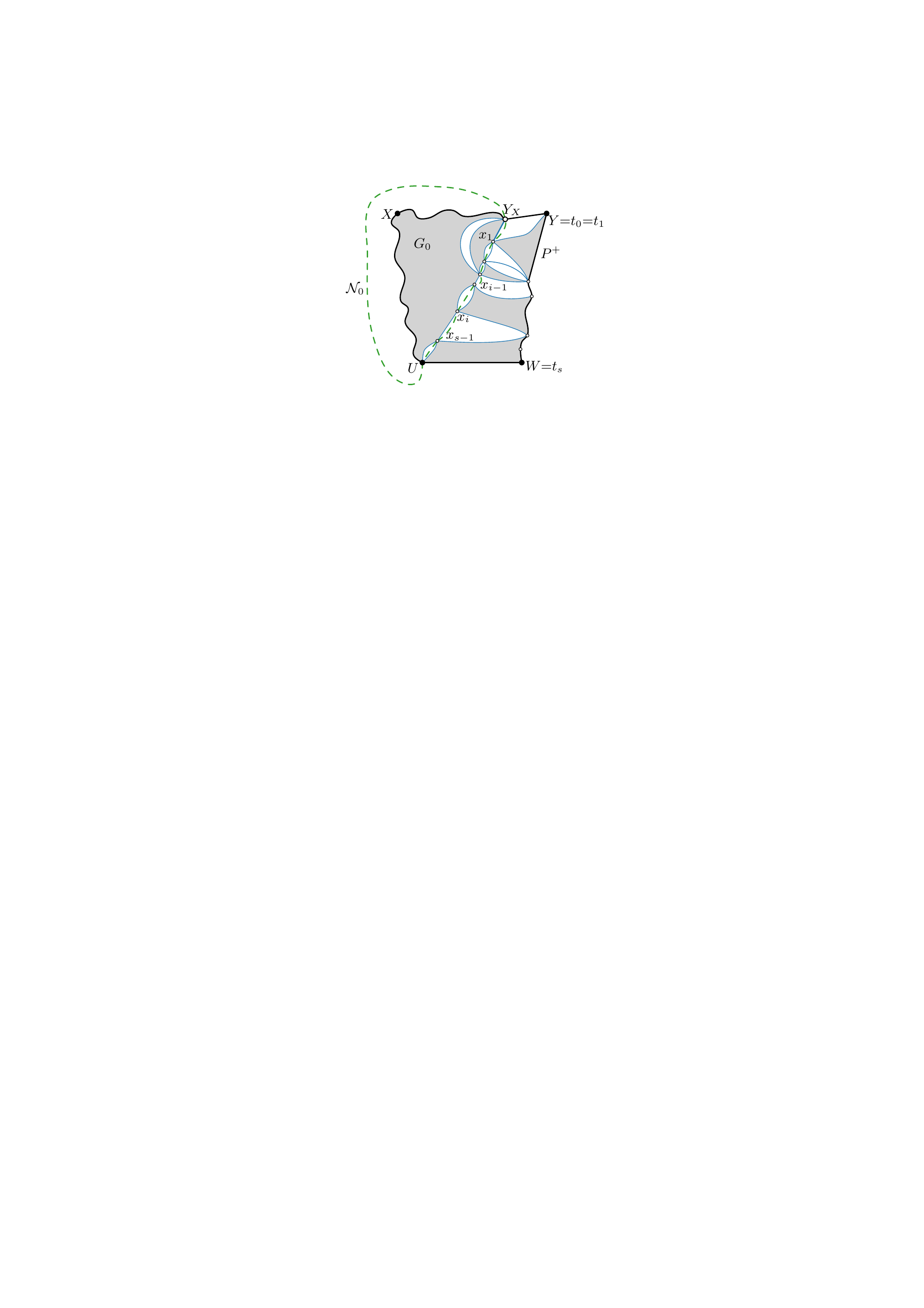}
    \caption{}
    \label{fig:case4b-construction}
  \end{subfigure}
  \hfil
  \begin{subfigure}[b]{.3\textwidth}
    \centering
    \includegraphics[page=9,scale=1]{case4b}
    
    \bigskip
    \includegraphics[page=11,scale=1]{case4b}
    \caption{}
    \label{fig:case4b-substitution}
  \end{subfigure}
  \hfil
  \begin{subfigure}[b]{.3\textwidth}
    \centering
    \includegraphics[page=7,scale=.75]{case4b}
    \caption{}
    \label{fig:case4b-result}
  \end{subfigure}
\caption{\casefourbCaption}
\label{fig:case4b}
\end{figure}

\smallskip
\noindent\textbf{Case 4b-2: $(x_{i-2},x_i)$ is virtual and
used by $P^+$, and $t_{i-1}\neq t_i$.}\caselabel{4b-2} 
See Fig.~\ref{fig:case4b-substitution}(top).
We want to replace both $(x_{i-1},x_i)$ and
$(t_{i-1},t_i)$ (which is always used by $P^+$) with a path through~$G_i$.
Let $G_i^\asymp$ be the graph $G_i$ with $(t_i,x_i)$ and $(t_{i-1},x_{i-1})$ added. 
The outer face of~$G_i^\asymp$ is a simple cycle since $f_i$ contains no
vertex of the right side by Claim~\ref{cl:virtual}, so $G_i^\asymp$
satisfies $c3c(t_i,t_{i-1},x_{i-1},x_i)$. 
By induction, find a \Tint-path $P_i$
in $G_i^\asymp$ from $t_i$ to $x_i$ that uses the edge $(t_{i-1},x_{i-1})$.
So removing $(t_{i-1},x_{i-1})$ from $P_i$ splits it into two paths: path~$P_i^{\operatorname{R}}$ 
connects $t_i$ to~$t_{i-1}$, and path $P_i^{\operatorname{L}}$ connects $x_{i-1}$ to $x_i$. 
(No other split is possible by planarity.)
Neither path uses the added edge $(t_{i},x_{i})$ since it connects the ends of $P_i$.
Use $P_i^{\operatorname{R}}$ to replace $(t_{i-1},t_i)$ 
and~$P_i^{\operatorname{L}}$ to replace $(x_{i-1},x_i)$ in~$P^+$.

\smallskip
\noindent\textbf{Case 4b-3: Subgraph $G_i$ has a non-empty interior 
and $t_i\neq t_{i-1}$.}\caselabel{4b-3}
See Fig.~\ref{fig:case4b-substitution}(bottom).
In this case, $G_i$ is a $P^+$-bridge with four attachment points, a
violation of Tutte path properties. 
If Case~\ref{case:4b-2} applied to $G_i$, then 
$G_i$ is no longer a bridge of the resulting path and we are done.  Otherwise,
we do a substitution that uses a different supergraph of $G_i$.

Let $G_i^\sqsubset$ be $G_i$ with edges from path $\langle t_{i-1},x_{i-1},x_i,t_i\rangle$
added if not already in $G_i$.  This graph satisfies $c3c(t_i,t_{i-1},x_{i-1},x_i)$ 
and satisfies condition \starcondition if we set $\U'=t_i$, $\V'=t_{i-1},
\W'=x_{i-1}$ and $\X'=x_i$.
So we can find a \Tend-path $P_i'$ of $G_i^\sqsubset$ from $t_{i}$ to $x_{i}$ that uses
$(t_{i-1},x_{i-1})$ and $(x_{i-1},x_i)$.
Thus,~$P_i'$ ends with $\langle t_{i-1},x_{i-1},x_{i}\rangle$ 
and $P_i'\setminus \{(t_{i-1},x_{i-1}),(x_{i-1},x_{i})\}$
is a path from $t_{i-1}$ to $t_i$ in $G_i$ that does not visit $x_{i-1}$ or $x_i$.
Substitute this path in place of edge $(t_{i-1},t_i)$ in~$P^+$.  
Note that one $P_i'$-bridge $C$ may use $x_{i}$ as its representative, but 
if so, then it also has $x_{i-1}$ as attachment point.    We set $x_{i-1}$
(which was $G_i$'s representative and is no longer needed as such) to be
the representative of $C$.

\smallskip
\noindent\textbf{Case 4b-4: $t_{i-1}\neq t_i$ and $(t_{i-1},t_i)$ is virtual}.\caselabel{4b-4} 
Since $P^+$ always uses edge $(t_{i-1},t_i)$, we must replace this edge
with a path through $G_i$.  We claim that 
this is done automatically
because Case~\ref{case:4b-3} applies. 
Namely, if $(t_{i-1},t_i)$ is virtual, then there is
at least one vertex between $t_{i-1}$ and $t_i$ on the right side.
This vertex is exterior in $G$ and hence neither $x_i$ nor~$x_{i-1}$.
So it is strictly inside $\calN_i$, hence $G_i$ has a 
non-empty interior and (by $t_{i-1}\neq t_i$) Case 4b-3 applies.

\medskip
After doing these substitutions, there are no virtual edges in the path, 
no bridges have four attachment points, 
every bridge has an interior vertex as representative,
and no vertex was used twice as representative; see Fig.~\ref{fig:case4b-result}.
This ends the proof of Lemma~\ref{lem:T_int} and \ref{lem:T_end}.

\subsection{Linear time complexity}

It should be clear that our proof is algorithmic.
The main bottlenecks for its running time are to determine which case to apply (i.e., whether there is
a cutting pair) and to find the $B$-necklace.  Both can
be done in linear time, by computing all cutting pairs~\cite{DBT89,GM00}
and by finding a leftmost path in the subgraph induced
by vertices that are face-adjacent to $S_{\W\X}\setminus B$.  This would yield quadratic running time overall.
For triangulated planar graphs, this is easily reduced to linear:
cutting pairs correspond to interior edges where both ends are exterior, 
and the necklace can be found, as in  \cite{AKS84}, with a left-first search that only advances 
neighbours of $S_{\W\X}\setminus B$.  But for graphs that are
not triangulated we need a few extra data structures.  We sketch only some
ideas for this here; details are in Section~\ref{sec:runningtime}. 

Globally, we keep track of the corners $\U,\V,\W$, and~$\X$.
For each \emph{interior vertex $w$} and every \emph{side}~$S_{ab}$, we keep a list $\calV(w,S_{ab})$
of faces that contain $w$ as well as a vertex on $S_{ab}$.
In these lists, we can look up quickly whether an interior vertex is face-adjacent to a side.
Also, each face knows for each side which vertices it has on it.
Finally, for each \emph{pair of sides} $S_{ab}$ and $S_{cd}$, we store
a list $\calP(S_{ab};S_{cd})$ of faces that are incident to a vertex on~$S_{ab}$ and a (different) vertex 
on~$S_{cd}$, i.e., faces that connect cutting pairs. 

This allows to test for Case~\ref{case:2} and Case~\ref{case:3} easily (``is $\calP(S_{\U\V},S_{\W\X})$
resp.~$\calP(S_{\W\X},S_{\X\U})$ non-empty?''), and Case~\ref{case:1} and 
Case~\ref{case:4} are easily determined
from the planar embedding.  We keep $\calP(S_{\W\X},S_{\X\U})$ in an order such that its first entry
is the appropriate cutting pair in Case~\ref{case:3}.  To find a necklace, we {\em scan} 
the faces incident to $x_1,\dots,x_s$.  More precisely, we consider
(for vertex $x_i$, presuming we know face $f_i$ already) each face $f$ in ccw order after $f_i$,
and along face $f$ each vertex $w$ in ccw order after $x_i$, until we find vertex~$B$ 
(then we are done) or a vertex that is face-adjacent to a vertex
in $S_{\W\X}\setminus B$  (then this is $x_{i+1}$
and $f_{i+1}=f$ and we repeat).  The running time for this is proportional to the degrees of
vertices and faces that were scanned.    We also need to update the data structures when
recursing into a subgraph; here, we scan along all vertices (and their incident faces) that
were in some necklace along which we cut the graph, or that became newly exterior.

Two crucial insights are needed to bound the running time.  First,
we need to scan vertices and faces only if they become incident to a side that they were
not previously incident to.  Second, once a vertex or
face is incident to a side, it remains incident to it forever (for some suitable definition
of ``side'').  The two combined mean that every vertex and face is scanned only a constant
number of times, because there are only four sides to have incidences with, and the linear
running time follows.  In fact, we only scan vertices and faces that are incident to the
outer face in some subgraph, which means that they will be incident to the path $P$
that we compute, and we have the following:

\begin{theorem}
\label{thm:linearTime}
\label{thm:time_complexity}
The Tutte path $P$ for Theorem~\ref{thm:2conn}, Lemma~\ref{lem:T_int} or 
Lemma~\ref{lem:T_end} can be found in linear time.  
More specifically,
the running time is $O(\sum_{f\in F(P)} \deg(f))$.
\end{theorem}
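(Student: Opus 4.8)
The plan is to observe that the inductive proof of Lemma~\ref{lem:T_int} and Lemma~\ref{lem:T_end} is already an algorithm, and then to bound its running time by an amortized charging argument. A call on $(G,\U,\V,\W,\X)$ does three things: it decides which of Cases~\ref{case:1}--\ref{case:4} applies, it carves $G$ into the subgraphs dictated by that case (inserting the appropriate virtual edges), and it recurses on each subgraph and splices the returned paths together, fixing up virtual edges and assigning representatives. The splicing and representative bookkeeping touch only vertices and faces incident to the nooses along which we cut, so I would charge that part of a call to those nooses; the only two steps that are not obviously cheap are \emph{(i)} deciding the case -- which amounts to detecting cutting pairs of various types -- and \emph{(ii)} building a leftmost $B$-necklace in Case~\ref{case:4}.

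For \emph{(i)} I would maintain, for each pair of sides $S_{ab},S_{cd}$, the list $\calP(S_{ab};S_{cd})$ of interior faces that contain a vertex of $S_{ab}$ and a different vertex of $S_{cd}$; such a face is exactly a witness of a cutting pair with one vertex on each side, so Case~\ref{case:2} becomes ``$\calP(S_{\U\V};S_{\W\X})\neq\emptyset$'', Case~\ref{case:3} (with $(\tcv,\rcv)$ an edge, the variant needed for the applications) becomes ``some face of $\calP(S_{\W\X};S_{\X\U})$ avoids $\X$'', Case~\ref{case:3'} symmetrically, and Cases~\ref{case:1} and~\ref{case:4} are read off the embedding. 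Keeping $\calP(S_{\W\X};S_{\X\U})$ ordered so that its head is the cutting pair required by Case~\ref{case:3} makes case and noose selection cost $O(1)$ per call plus the cost of maintaining the lists. For \emph{(ii)} I would store, for each interior vertex $w$ and each side $S_{ab}$, the list $\calV(w,S_{ab})$ of faces incident to $w$ and to a vertex of $S_{ab}$, so that ``$w$ is face-adjacent to $S_{\W\X}\setminus B$'' is a constant-time test, and then run a left-first scan: starting from $x_i$ (with $f_i$ known) walk the faces in ccw order after $f_i$, and along each such face the vertices in ccw order after $x_i$, stopping at the first occurrence of $B$ or of a vertex face-adjacent to $S_{\W\X}\setminus B$. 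This produces the leftmost necklace at a cost proportional to the total degree of the faces and vertices it visits.

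The core of the proof is the amortization. The data structures need updating only for the vertices and faces that, in passing to a child subgraph, either become newly exterior or lie on a noose along which we cut; for each such vertex $v$ I would re-scan $v$ together with its incident faces to rebuild the affected $\calV$- and $\calP$-entries, at cost $O\bigl(1+\sum_{f\ni v}\deg(f)\bigr)$. To turn this into a linear total I need two facts, and proving them cleanly is the main obstacle: \emph{(1)} the corners of the subgraphs can be chosen so that once a vertex or face is incident to a side it remains incident to that side in every descendant subgraph that still contains it, and \emph{(2)} over the entire recursion only a bounded number -- essentially four -- of sides can ever be incident to a fixed vertex or face. Granting these, each vertex triggers a rebuild only when it first becomes incident to one of its (constantly many) sides, so it is re-scanned $O(1)$ times; summing $O\bigl(1+\sum_{f\ni v}\deg(f)\bigr)$ over all $v$ gives $O(n+m)$, which is linear since $G$ is planar. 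The ``overshoot'' of a necklace scan -- faces and vertices visited but not placed on the necklace -- is absorbed into the same account, because in the child subgraph $G_i$ (or $G_0$) each of them becomes incident to a side it was not incident to before.

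Finally, I would sharpen the bound by noting that every vertex and face the algorithm ever scans is incident to the outer face of some subgraph occurring in the recursion; by Lemma~\ref{lem:T_int}/\ref{lem:T_end} the path computed for that subgraph visits all of its exterior vertices, and these survive into the final path $P$, so every scanned face lies in $F(P)$. Hence the total work is $O\bigl(\sum_{f\in F(P)}\deg(f)\bigr)$, which refines the linear bound. The delicate point throughout is facts (1) and (2): the corners, and therefore the sides, change at every recursive step -- a necklace vertex that was interior becomes a corner or an interior side-vertex of a child, and a side can be split between two children -- so one must pick a notion of ``side'' that is simultaneously coarse enough to be bounded in number per vertex and fine enough that side-incidence is monotone along every root-to-leaf path of the recursion.
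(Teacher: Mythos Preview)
Your proposal is essentially the paper's own argument: the same data structures ($\calV$, $\calP$, plus the paper adds a constant-size list $\calF(f,i)$ of vertices of $f$ on side $S_i$), the same left-first necklace scan, and the same amortization via your facts (1) and (2), which the paper formalizes by indexing sides through a fixed four-entry corner list $\calC$ so that ``side $i$'' persists across recursions even though the corner vertices change. One slip to fix: your per-vertex rebuild cost $O\bigl(1+\sum_{f\ni v}\deg(f)\bigr)$ summed over all $v$ is $\Theta\bigl(\sum_f\deg(f)^2\bigr)$, not $O(n+m)$; you must also skip face scans when the face was already incident to that side (this is exactly your fact (1) applied to faces), so that each face is scanned $O(1)$ times and the total becomes $O\bigl(\sum_f\deg(f)\bigr)$---the paper makes this explicit and also verifies that a face split by a virtual edge does not lose side incidences, which is the one place where fact (1) for faces is non-obvious.
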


\section{Applications}

A number of interesting properties of planar 3-connected graphs can
be derived easily from the existence of $\TSDR$-paths.  In particular,
every planar 3-connected graph has a spanning tree of maximum degree~3
\cite{Bar66}
(a concept known in the literature as a {\em 3-tree}, but we prefer
to use the term {\em binary spanning tree} to avoid confusion with
maximal graphs of treewidth~3).  Secondly, every planar 3-connected
graph has a {\em 2-walk}, i.e., a walk that visits every vertex
at least once and at most twice \cite{GR94}.  In Section~\ref{sec:3trees} 
and Section~\ref{sec:2walks}, 
we show that, using Lemma~\ref{lem:T_int}, these can be found in linear
time; this was known for binary spanning trees 
\cite{Strothmann-PhD,Bie-Barnette}, but for 2-walks the previous
best running time was $O(n^3)$ \cite{SS-ICALP18}. 

\begin{theorem}
\label{thm:3tree}
Let $G$ be a 3-connected plane graph with exterior vertex $\U$. 
Then $G$ has a binary spanning tree $T$ that can be found in linear time. 
Moreover, when rooting $T$ at $\U$,
a vertex $v$ has two children only if it is an interior vertex
	and part of a cutting triplet $\{v,w,x\}$ of $G$;
	one of the subtrees of $v$ contains exactly the
	vertices interior to $\{v,w,x\}$.\
\end{theorem}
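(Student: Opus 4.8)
The plan is to derive the binary spanning tree from a $\Tint$-path, using the system of distinct representatives to control where branchings occur, and then to recurse into the $P$-bridges. First I would reduce to the case where $G$ has a triangular outer face containing $\U$: pick an edge $\alpha=(\V,\W)$ on the outer face with $\U\in\{\V,\W\}$ if possible, or more carefully, add two corners so that $\U$ becomes one of the four corners $\U,\V,\W,\X$; since $G$ is 3-connected it satisfies $c3c(\U,\V,\W,\X)$ for a suitable choice, so Lemma~\ref{lem:T_int} yields a $\Tint$-path $P$ from $\U$ to $\X$ through $(\V,\W)$, together with an injective assignment $\sigma$ from the $P$-bridges to \emph{interior} vertices of $P$ that are attachment points.

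Next I would build the tree $T$ as follows. Start with the path $P$ itself, which is already a (very skinny) tree spanning $V(P)$, rooted at $\U$; every vertex on $P$ has at most one child so far. For each $P$-bridge $C$, let $\{v,w,x\}$ be its attachment points with $v=\sigma(C)$ the interior representative. Recursively compute a binary spanning tree $T_C$ of the graph $C^+$ (which, as argued for the substitution trick, satisfies $c3c(v,w,x)$, hence is 3-connected once we add the missing edges of the triangle $vwx$) rooted at $v$, using the edge-choice so that the root $v$ has degree at most... here is the subtle point, so let me be careful: root $T_C$ at $v$; by the inductive statement, $v$ has two children in $T_C$ only if it is interior to some cutting triplet of $C^+$, which it is not (it is on the outer triangle of $C^+$), so $v$ has at most one child in $T_C$. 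Attach $T_C$ to $T$ by identifying its root with the vertex $\sigma(C)=v$ of $P$. Because $v$ already had at most one child (its successor on $P$) and now gains at most one more (the child in $T_C$), its degree in $T$ is at most $3$; it is interior and $\{v,w,x\}$ is a cutting triplet of $G$ whose interior is exactly $V(C)$, matching the claimed structural property. All other vertices are untouched or come from a $T_C$ in which the degree bound already holds, so $T$ is a binary spanning tree.

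The main obstacle I expect is bookkeeping the degree at the representatives and at the attachment points $w,x$ that are \emph{not} representatives: a vertex $w$ on $P$ could be the non-representative attachment point of several $P$-bridges, and although none of those bridges adds an edge at $w$ in $T$ (we only attach $T_C$ at its root $v$), one must still check that within a single recursive call the edges of $T_C$ do not push $w$ or $x$ over degree $3$ — but this is exactly handled because $w,x$ lie on the outer face of $C^+$, so the same argument bounding the root's degree applies to all outer vertices of $C^+$, or more simply, the inductive degree bound in $T_C$ already guarantees $\deg_{T_C}(w)\le 3$ and $w$ receives no further edges in $T$. The second obstacle is the linear running time: we invoke Lemma~\ref{lem:T_int} once per recursive call, and by Theorem~\ref{thm:linearTime} each call costs $O(\sum_{f\in F(P)}\deg(f))$; since the $P$-bridges partition the interior and each interior face is counted in $F(P)$ for at most one call (it belongs to exactly one $C^+$ or to the top-level graph), these costs telescope to $O(n+m)=O(n)$ for planar $G$. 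I would close by noting that the added triangle edges at $\U,\V,\W,\X$ (if any) can be removed from $T$ afterward without disconnecting it, since each was incident to at least one leaf or lies on a cycle — actually, since $T$ is a tree it uses at most one such edge per pair, and we simply never include the artificially added edges in $T$, which is consistent because Lemma~\ref{lem:T_int}'s path is allowed to use $(\V,\W)$ but the $P$-bridges and their trees live on real edges.
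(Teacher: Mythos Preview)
Your overall strategy---take a $\Tint$-path, then recurse into each $P$-bridge and hang the resulting subtree at its representative---is the same as the paper's.  However, there is a genuine gap in how you handle the two non-representative attachment points of a bridge.  You define $T_C$ as a binary spanning tree of $C^+$, so $T_C$ contains all three attachment points $v,w,x$ and has edges incident to $w$ and to $x$.  When you then ``attach $T_C$ at $v$'' and take the union with $P$, the vertices $w$ and $x$ are already on $P$ (each with one or two incident $P$-edges), so you create cycles: $v,w,x$ are connected both along $P$ and through $T_C$.  Your remark that ``$w$ receives no further edges in $T$'' is not correct---it receives edges from $P$---and your remark that ``none of those bridges adds an edge at $w$'' contradicts your own definition of $T_C$ as spanning $C^+$.

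To repair this you must delete $w$ and $x$ from $T_C$ before attaching, but with only the theorem statement as inductive hypothesis you cannot guarantee that $T_C\setminus\{w,x\}$ is still connected: $w$ (being exterior in $C^+$) has at most one child, but that child may be an interior vertex of $C$, and removing $w$ then orphans an entire subtree.  The paper solves this by proving a stronger inductive statement (Lemma~\ref{lem:binary-tree}): one may additionally prescribe that a designated exterior vertex $\X$ is a \emph{leaf} of $T$ and that a designated exterior edge $(\V,\W)$ lies in $T$.  Applying this in the recursion with $\X=y$ and $(\V,\W)=(w,y)$ forces $y$ to be a leaf and $w$ to be its parent; since $w$ is exterior it has no other child, so $T_C\setminus\{w,y\}$ is a connected tree spanning exactly $C\cup\{v\}$, and it attaches cleanly at $v$ with $v$ gaining a single child.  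Without this strengthening your induction does not close.
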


\begin{theorem}
\label{thm:2walk}
Let $G$ be a 3-connected plane graph with exterior vertex $\U$.
Then $G$ has a 2-walk $P$ that can be found in linear time.
Moreover, $P$ visits $\U$ exactly once, and it visits
a vertex $v$ twice only if $v$ is part of a separating triplet.
\end{theorem}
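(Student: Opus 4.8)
The plan is to obtain the 2-walk from a \Tint-path and then recursively expand it through the $P$-bridges, in the style of Gao and Richter~\cite{GR94}, organized so the expansion reuses Lemma~\ref{lem:T_int} and so the total work stays linear.

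First I would reduce to Lemma~\ref{lem:T_int}. Since $G$ is $3$-connected it has no cutting pair, so $c3c(\U,\V,\W,\X)$ holds for \emph{any} four exterior vertices listed in ccw order; picking such $\V,\W,\X$ together with an outer edge $(\V,\W)\neq(\U,\X)$ (the base case $n=3$ and, more generally, a triangular outer face are handled directly), Lemma~\ref{lem:T_int} yields a \Tint-path $P$ from $\U$ to $\X$ through $(\V,\W)$, computable in linear time by Theorem~\ref{thm:time_complexity}. The structure it provides is: $P$ visits all exterior vertices; every $P$-bridge $C$ has exactly three attachment points, which therefore form a separating triplet of $G$; no $P$-bridge contains an exterior edge; and the representatives $\sigma(C)$ are distinct \emph{interior} vertices of $G$.

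Next I would turn $P$ into a 2-walk by processing the $P$-bridges. Traversing $P$ already visits each of its vertices exactly once; for a $P$-bridge $C$ with attachment points $a,b,c$ I recurse on $C^{+}$ (the graph induced by $C$ and $\{a,b,c\}$, with virtual edges among $a,b,c$ added as needed and, if necessary, a further virtual edge plus the substitution trick of Section~\ref{sec:TuttePath} so that the four-corner hypothesis applies with $a,b,c$ among the corners; this is legitimate because every cutting pair of $C^{+}$ is contained in $\{a,b,c\}$, so $C^{+}$ is corner-3-connected). Inductively this produces a 2-walk of $C^{+}$ that I splice into the current walk at $\sigma(C)$. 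Because $\sigma(C)$ is interior, only $\sigma(C)$ among the vertices of $P$ has its visit count raised (to two), and $\sigma(C)$ lies in the separating triplet $\{a,b,c\}$; in particular $\U$, which is never a representative and (since $P$-bridges contain no exterior edge) lies in no $P$-bridge, is visited exactly once. By induction, the vertices a sub-walk visits twice lie in separating triplets of $C^{+}$, and these are separating triplets of $G$ because $C^{+}$ meets the rest of $G$ only in $\{a,b,c\}$; this gives the stated restriction on doubly-visited vertices.

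The main obstacle is the bookkeeping at the three attachment points of a bridge. The detour for $C$ must cover $\interior(C)$ and is attached at $\sigma(C)$, but $C^{+}$ also contains $a$ and $b$, which are already on $P$ (hence already visited once) and may simultaneously be attachment points of sibling bridges; a careless recursive walk of $C^{+}$ could then cause such a vertex to be visited three times. The crux is therefore to strengthen the inductive statement so that it controls the walk of $C^{+}$ at all of $a,b,c$ — e.g.\ that it is an \emph{open} walk whose two ends are attachment points and that passes through the third attachment point only once, entering and leaving $\interior(C)$ there exactly once — and then to graft it using the cyclic order in which the bridges incident to a common attachment point occur around that vertex, so that each such vertex is charged at most one extra visit overall. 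Getting this formulation right (and checking it survives the substitution trick) is essentially the whole proof; once it is in place the linear running time follows exactly as for the \TSDR-path: a vertex that is a representative at some level lies on that level's path, hence in no bridge there, and becomes a corner — hence exterior — in every deeper instance, so it is a representative at most once overall; thus there are $O(n)$ recursive instances, each handled in time $O(\sum_{f\in F(P)}\deg(f))$ by Theorem~\ref{thm:time_complexity}, summing to $O(m)=O(n)$.
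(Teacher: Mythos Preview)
Your proposal has a genuine gap, and you yourself put your finger on it in the third paragraph without resolving it.

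In paragraph two you claim that after splicing the recursive 2-walk of $C^{+}$ at $\sigma(C)$, ``only $\sigma(C)$ among the vertices of $P$ has its visit count raised.'' This is false: the other two attachment points $a,b$ also belong to $C^{+}$, so any 2-walk of $C^{+}$ visits them, and after splicing they are visited at least twice. If $a$ (say) is in addition the representative of another bridge, or an attachment point of yet another bridge at a deeper level, you get a third visit. Your suggested remedy --- an open walk with ends at two attachment points passing once through the third, grafted according to the cyclic order of bridges --- does not actually prevent this, because you are still forcing the sub-walk to go through all three attachment points; the grafting order cannot reduce the total number of times a shared attachment point is traversed. There is also a concrete casualty: $\U$ is exterior and hence never a representative, but it can perfectly well be a non-representative attachment point of some bridge, and your scheme would then visit $\U$ twice, contradicting the statement.

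The paper's fix is to avoid the non-representative attachment points altogether. Given a bridge $C$ with attachment points $\{x,w,y\}$ and $\sigma(C)=x$, it \emph{deletes} $w$ and $y$ and observes (following Gao--Richter) that what remains, namely $G[C\cup\{x\}]$, is a plane chain of internally 3-connected blocks $B_1,\dots,B_k$ joined at cut vertices $x_2,\dots,x_k$ with $x_1=x$. One then recurses in each $B_\ell$ to get a 2-circuit visiting $x_\ell$ and $x_{\ell+1}$ exactly once, and concatenates these into a single 2-circuit $P_C$ of $C\cup\{x\}$ that touches $x$ exactly once and never touches $w$ or $y$. Splicing $P_C$ at $\sigma(C)$ now truly raises only the count of $\sigma(C)$. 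Two ingredients you are missing are therefore essential: (i) an extension of Lemma~\ref{lem:T_int} to \emph{internally} 3-connected graphs (the blocks $B_\ell$ need not be 3-connected), with the additional guarantee that two prescribed exterior vertices are not used as representatives; and (ii) the chain-of-blocks decomposition itself, together with an argument that it can be found without rescanning the interiors (the paper does this by scanning only the faces incident to $w$ and $y$). Your inductive framework, which keeps all three attachment points in the subproblem, cannot be patched into this without essentially reinventing these pieces.
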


\section{Additional proof details}\label{sec:additional}

Throughout the text, we have left some of the more obvious details
to the reader.  This section provides detailed proofs for some of
these results.

\subsection{Arguing corner-3-connectivity}

We defined corner-3-connectivity via the corner stellation,
but to show it for some subgraph~$G'$, we used one of the following
two arguments:  Either $G'$ was defined as the graph inside some noose~$\calN$, 
and the outer face of $G'$ was a simple cycle, or we already knew that $G'$
was corner-3-connected for some corners, but we claimed that we can omit
some of the corners since there are no cutting pairs near them.  We now
formally prove that this is correct.

\begin{lemma}
\label{cl:noose}
Let $G$ be a graph that satisfies $c3c(\U,\V,\W,\X)$ for some corners
$\U,\V,\W,\X$.
Let $H$ be the graph inside a noose $\calN=\langle x_1,\ldots,x_k\rangle$ that visits the outer face at most once.
Let $H^+$ be the graph obtained from $H$
by adding some edges along~$\calN$.
If the outer face of~$H^+$ is a simple cycle, then $H^+$ satisfies $c3c(\{x_1,\dots,x_k\}\cup  \calC)$,
where $\calC \subseteq \{\U,\V,\W,\X\}$ are the corners inside \calN.  
\end{lemma}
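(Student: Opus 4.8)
The plan is to prove the contrapositive of the corner-3-connectivity condition directly from its combinatorial characterization stated in the excerpt: a graph satisfies $c3c$ with respect to a set of corners if and only if it has at least three corners, it is internally 3-connected, and no cutting pair has both vertices strictly between two consecutive corners on one side. So I would show three things for $H^+$ with the corner set $\{x_1,\dots,x_k\}\cup\calC$: (i) there are at least three corners; (ii) $H^+$ is internally 3-connected (equivalently its outer stellation is 3-connected, equivalently it is 2-connected and every cutting pair is exterior with exactly two nontrivial cut components); and (iii) no cutting pair of $H^+$ lies within a single side determined by these corners.

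First I would set up notation: since the outer face of $H^+$ is a simple cycle, all of $x_1,\dots,x_k$ and all corners in $\calC$ lie on this cycle, and the sides of $H^+$ are exactly the arcs of the outer cycle between consecutive designated corners. The key structural fact I would exploit is that any cutting pair $\{a,b\}$ of $H^+$ "lifts" to a cutting pair of $G$, or is one of the noose vertices $x_i$ paired with another — more precisely, a noose $\calN'$ in $H^+$ through $\{a,b\}$ separating vertices inside from vertices outside can be glued with pieces of $\calN$ to produce a noose in $G$ witnessing a cutting pair of $G$, unless one endpoint of the pair is among the added-edge structure along $\calN$. For part (ii), internal 3-connectivity: $H$ is the subgraph of $G$ inside a noose, and since $G$ is internally 3-connected and $\calN$ meets the outer face at most once, one can argue $H$ (hence $H^+$, which only has more edges) is 2-connected and that any cutting pair of $H^+$ must be exterior — here I would use that an interior cutting pair of $H^+$ would give an interior noose in $G$ through two vertices with nonempty inside and outside, contradicting that $G$ is internally 3-connected (all its cutting pairs are exterior). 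For part (iii): suppose $\{a,b\}$ is a cutting pair of $H^+$ with both $a,b$ strictly between two consecutive designated corners, i.e., on a single side $\sigma$ of $H^+$. The side $\sigma$ is an arc of the outer cycle of $H^+$ none of whose interior vertices is a designated corner; in particular $a,b$ are not corners of $\calC$ and not noose vertices $x_i$. I would then transport the separating noose through $\{a,b\}$ into $G$: because $a,b$ are interior to $\sigma$ and $\sigma$ sits inside $\calN$, the portion of the $H^+$-outer-cycle on which $a,b$ lie maps to a path in $G$'s embedding, and the cutting pair $\{a,b\}$ of $H^+$ yields a cutting pair $\{a,b\}$ of $G$ that lies within a single side of $G$ — because the corners $\U,\V,\W,\X$ of $G$ that could separate $a$ from $b$ along the outer face either lie in $\calC$ (hence are designated corners of $H^+$ and would contradict $a,b$ being on one side of $H^+$) or lie outside $\calN$ (hence on the far side of the two noose-vertices bounding this stretch, so still not between $a$ and $b$ on $G$'s outer face). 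This contradicts $c3c(\U,\V,\W,\X)$ for $G$.

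The main obstacle I expect is the careful bookkeeping in part (iii) — precisely matching up "a side of $H^+$" with "a side of $G$" through the noose, and in particular handling the endpoints of the side $\sigma$ of $H^+$, which are designated corners of $H^+$ and may be either noose vertices $x_i$ or original corners in $\calC$. I would handle this by a short case analysis on the two consecutive designated corners bounding $\sigma$: if both are among $\calC$, then the whole arc $\sigma$ was already a sub-arc of a single side of $G$, so a cutting pair within $\sigma$ is immediately a cutting pair within one side of $G$; if at least one bounding corner is a noose vertex $x_i$, I argue that the outer-face arc of $G$ containing the interior of $\sigma$ extends (beyond $x_i$) without encountering any of $\U,\V,\W,\X$ until reaching whichever original corner lies past $x_i$ — using that the corners in $\calC$ are exactly the ones inside $\calN$, so the corners outside $\calN$ are reached only after crossing $x_i$ to the outside — and hence $a,b$ still lie strictly between two consecutive corners of $G$. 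A secondary, easier obstacle is verifying the "$\geq 3$ corners" condition, which follows since $k\geq 2$ noose vertices plus the fact that any valid noose separating something inside from something outside, when combined with $\calC$ containing enough of $\U,\V,\W,\X$, gives at least three distinct outer-cycle vertices; if $k\geq 3$ this is immediate, and if $k=2$ one notes that at least one original corner must lie inside $\calN$ (else $H^+$ would be trivial or $G$ not internally 3-connected), giving the third corner. The remaining verifications (2-connectedness of $H^+$, "exactly two nontrivial cut components" for exterior cutting pairs) are routine consequences of $H^+$ having a simple outer cycle and being a noose-interior of an internally 3-connected graph, and I would dispatch them briefly.
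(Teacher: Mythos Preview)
Your approach is viable but takes a genuinely different route from the paper. The paper does not use the characterization of corner-3-connectivity via ``internally 3-connected plus no cutting pair within one side''; instead it works directly with the defining property that the corner stellation $\stel{H^+}$ (with stellation vertex $s$) is 3-connected. After extending the noose to pass through the corners in $\calC$ as well, the paper observes that any two of $s,x_1,\dots,x_\ell$ are joined by three internally disjoint paths in $\stel{H^+}$ (two along the outer cycle of $H^+$, one through $s$). Hence any hypothetical cutting pair $\{v,w\}$ of $\stel{H^+}$ leaves all of $s,x_1,\dots,x_\ell$ in a single component; the other component lies strictly inside $\calN$, has no neighbour outside $\calN$ in $G$, and so $\{v,w\}$ is already a cutting pair of $\stel{G}$ --- contradicting $c3c(\U,\V,\W,\X)$.

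The advantage of the paper's argument is that it collapses your conditions (ii) and (iii) into a single statement about $\stel{H^+}$, entirely avoiding the side-matching case analysis you anticipate as the main obstacle. Your route requires separately tracking whether $a,b$ are interior or exterior in $G$ and, in the exterior case, arguing that the arc of $S_{H^+}$ containing them maps into a single side of $G$; this works, but the bookkeeping you sketch (handling the two endpoints of $\sigma$ according to whether they are noose vertices or original corners) is exactly what the stellation argument sidesteps. Conversely, your approach makes more transparent \emph{which} structural property of $G$ is being violated in each case (internal 3-connectivity versus the side condition), at the cost of length. One small point: your justification in (i) for the case $k=2$ with $\calC=\emptyset$ is not quite right --- if the noose never visits the outer face, the conclusion as literally stated can fail, and indeed the paper's own proof tacitly assumes the noose does pass through the outer face when it reroutes $\calN$ through the vertices of $\calC$; in all applications of the lemma this is the case.
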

\begin{proof}
See Fig.~\ref{fig:claim-noose} for an illustration of this proof.
Modify the noose \calN such that it also goes through the vertices in
$\calC$ while going through the outer face.  Now the noose \calN contains
vertices $\langle x_1,\dots,x_\ell \rangle$, where $\calC=\{x_{k+1},\dots,x_\ell\}$.

\begin{figure}[t]
\centering
\subcaptionbox{$H$}{\includegraphics[scale=.75,page=1,trim={1cm 0 .7cm 0},clip]{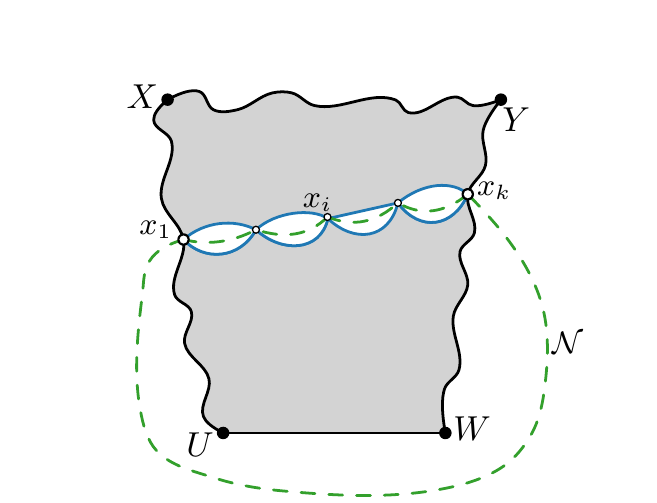}}
\hfil
\subcaptionbox{$H^+$}{\includegraphics[scale=.75,page=2,trim={1cm 0 .7cm 0},clip]{noose}}
\hfil
\subcaptionbox{$\stel H$}{\includegraphics[scale=.75,page=4]{noose}}
\caption{Illustration for Lemma~\ref{cl:noose}.}
\label{fig:claim-noose}
\end{figure}

Consider the corner stellation $\stel{H}$ of $H^+$ with respect to $\{x_1,\dots,x_\ell\}$; 
with our modification, noose \calN is a noose in $\stel{H}$ as well.
Observe that any pair $x_i,x_j$ is connected by three vertex-disjoint paths in $\stel{H}$:
two along the outer face of $H^+$ 
and one via the stellation vertex $s$.  Likewise, $s$ has three vertex-disjoint
paths to any $x_i$ in $\stel{H}$: Use edges $(s,x_{i-1}),(s,x_i),(s,x_{i+1})$, and
the outer face cycle of $H^+$.  So if $\stel{H}$ had a cutting pair $\{v,w\}$, then
all of $s,x_1,\dots,x_\ell$ are in one cut component of $\{v,w\}$.  In particular,
therefore $v\neq s\neq w$ since $s$ has no other neighbors.  Let $C$ be some
cut component of $\{v,w\}$ that does not contain~$s$.  Then $C-\{v,w\}$ contains
no vertices in $s,x_1,\dots,x_\ell$, so all of them are strictly inside~$\mathcal N$
and have no neighbor (in $G$) outside $\mathcal N$.  Therefore, $\{v,w\}$ is
also a cutting pair of~$\stel{G}$, a contradiction.
\end{proof}

\begin{lemma}
\label{lem:skip_corner}
Assume that $G$ satisfies $c3c(c_1,\dots,c_k)$ for some $k\geq 4$.  
Assume that no cutting
pair $\{v,w\}$ of $G$ satisfies $v,w \in S_{c_1c_3}$.
Then $G$ also satisfies $c3c(c_1,c_3,\dots,c_k)$.
\end{lemma}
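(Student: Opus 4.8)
The plan is to prove this via the corner stellation $\stel{G}$, exactly as corner-3-connectivity is defined. Let $s$ be the stellation vertex for the corners $c_1,\dots,c_k$, and let $s'$ be the stellation vertex for the reduced corner set $c_1,c_3,\dots,c_k$; the two graphs $\stel{G}$ differ only in that $s$ is adjacent to $c_2$ while $s'$ is not. So I need to show that if $\stel{G}$ (with all $k$ corners) has no cutting pair, then neither does the graph $H$ obtained by deleting the edge $(s',c_2)$, i.e.\ the corner stellation with respect to $c_1,c_3,\dots,c_k$. Equivalently, deleting a single edge incident to the stellation vertex must not create a cutting pair, \emph{provided} no cutting pair of $G$ lies within the side $S_{c_1c_3}$.

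First I would observe that in $H$, the vertex $c_2$ still has degree at least $2$: it has its two neighbors along the outer face of $G$ on the side $S_{c_1 c_3}$ (and these exist and are distinct from $c_2$ since $c_1, c_2, c_3$ are distinct corners appearing in this cyclic order, so $c_2$ is an internal vertex of that side or an endpoint with the appropriate neighbor). Next, suppose for contradiction that $H$ has a cutting pair $\{v,w\}$. Since $\stel{G}$ (full corners) is $3$-connected and $H=\stel{G}\setminus (s,c_2)$, the pair $\{v,w\}$ together with the edge $(s,c_2)$ must together disconnect $\stel{G}$; in particular, in $H$ the vertices $s$ and $c_2$ lie in different components of $H\setminus\{v,w\}$, or one of them lies in a component that would be reconnected only through the edge $(s,c_2)$. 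I would argue that $s\notin\{v,w\}$: removing $s$ from $H$ leaves $G$ together with (as a subgraph) the outer cycle, and $G$ is internally $3$-connected by hypothesis ($c3c$ with $k\geq 4$ corners implies $2$-connectivity, in fact internal $3$-connectivity), so $G$ has no cut vertex; hence $s$ is not one of the two cut vertices, and similarly I can rule out $v$ or $w$ being $c_2$ with a short argument using that $c_2$ has at least two neighbors on the side.

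So $\{v,w\}\subseteq V(G)$ and both differ from $c_2$. Now comes the main step: I consider the cut components of $\{v,w\}$ in $H$. The stellation vertex $s$ lies in one of them, call it $C_s$; it contains all corners $c_1,c_3,\dots,c_k$ (each of these is adjacent to $s$), and it also contains, via the outer cycle of $G$, every exterior vertex of $G$ except possibly those separated off onto one "bulge" of the outer cycle by $\{v,w\}$ — but any such bulge is an arc of the outer boundary strictly between $v$ and $w$ containing no corner other than possibly $c_2$. Here is where the hypothesis enters: since $\{v,w\}$ is not a cutting pair of $G$ lying in $S_{c_1c_3}$, and since any cut component of $\{v,w\}$ in $H$ other than $C_s$ would correspond to a cut component of $\{v,w\}$ in $G$ (deleting $s$ does not help because the non-$C_s$ component contains no corner, hence is not adjacent to $s$), the only way $\{v,w\}$ can fail to be a cutting pair of $G$ while being one of $H$ is that the "extra" separation is achieved purely through the missing edge $(s',c_2)$. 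But that forces $c_2$ to be isolated from $c_1,c_3$ along the outer boundary by $\{v,w\}$, i.e.\ $v,w\in S_{c_1c_3}$ — and by hypothesis these do not form a cutting pair of $G$, which I then turn into a contradiction by exhibiting three vertex-disjoint paths between the two sides of $\{v,w\}$ in $G$ using internal $3$-connectivity, or more simply by noting directly that then $\{v,w\}$ cuts $G$, contradicting the hypothesis.

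The main obstacle I anticipate is the careful bookkeeping in the last paragraph: precisely characterizing which cut components of $\{v,w\}$ in $H$ can "see" the stellation vertex, and showing that the only new separation introduced by deleting $(s,c_2)$ is the one that would place $\{v,w\}$ on the side $S_{c_1c_3}$. The cleanest way to handle this is probably to mirror the noose argument in Lemma~\ref{cl:noose}: reformulate a would-be cutting pair of $H$ as a noose through $\{v,w\}$ in $\stel{G}$ and track on which side of that noose each corner and $s$ falls; a noose separating $s$ from all corners is impossible, so the noose must separate $c_2$ alone from the rest, which pins $v,w$ to the side $S_{c_1c_3}$ and contradicts the hypothesis. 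I would keep the write-up short by invoking the machinery already set up before Lemma~\ref{cl:noose} and phrasing the final contradiction in terms of a cutting pair of $G$ within $S_{c_1c_3}$.
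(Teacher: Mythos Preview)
Your approach is the same as the paper's: work with the corner stellation and show that deleting the single edge $(s,c_2)$ cannot create a cutting pair. However, you are making the argument considerably harder than necessary, and the ``main step'' paragraph wanders without pinning down the key fact.

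The clean route (which the paper takes) is this. Once you have established $s\notin\{v,w\}$, observe immediately that $\{v,w\}$ \emph{must} be a cutting pair of $G$: since $k\geq 4$, at least one of $c_1,c_3,\dots,c_k$ survives in $H\setminus\{v,w\}$, so $s$ is still attached to $G\setminus\{v,w\}$; hence if $G\setminus\{v,w\}$ were connected, so would $H\setminus\{v,w\}$ be. You entertain the possibility that $\{v,w\}$ is \emph{not} a cutting pair of $G$ and try to chase down what that would mean; this case simply does not arise, and considering it is what tangles your argument.

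From there the paper finishes in two lines, using an observation you never quite isolate: because the full stellation $\stel{G}$ is $3$-connected, \emph{each} of the two cut components $C_1^+,C_2^+$ of $\{v,w\}$ in $G$ must contain some corner other than $v,w$ (otherwise the stellation vertex would not reach that component, and $\{v,w\}$ would already cut $\stel{G}$). Now both $v,w$ are exterior (internal $3$-connectivity), and by hypothesis they are not both on $S_{c_1c_3}$; hence each of the two outer-face arcs between $v$ and $w$ contains a corner different from $c_2$. So both $C_1^+$ and $C_2^+$ contain a corner in $\{c_1,c_3,\dots,c_k\}$ and are joined through $s$ in $H$, contradicting that $\{v,w\}$ cuts $H$. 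Your ``bulge'' discussion and the detour through nooses are unnecessary, as is the separate treatment of whether $c_2\in\{v,w\}$.
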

\begin{proof}
Let $\stel{G}$ be the corner stellation with respect to all
corners $c_1,\dots,c_k$.  To prove the claim, we must show that
$\stel{G}\setminus (c_2,s)$ (where $s$ is the stellation vertex)
is also 3-connected.  Assume that it is not, in case of which it has
a cutting pair $\{v,w\}$ that is necessarily also a cutting pair of $G$.
Let $C_1,C_2$ be two cut components of $\{v,w\}$ in $G$ (there are only
two since~$G$ is internally 3-connected).  Since $\stel{G}$
is 3-connected, both $C_1^+$ and $C_2^+$ contain a corner other than $v,w$.
This implies that both $v$ and~$w$ lie 
on the outer face of $G$.  Since they are not both in $S_{c_1c_3}$, therefore
both outer face paths between $v$ and $w$ (and hence both $C_1^+$ and~$C_2^+$)
contain at least one corner other than $c_2,v,w$.    Therefore, $C_1^+$ and
$C_2^+$ are connected in $\stel{G}\setminus (c_2,s)$ by going from those
corners to $s$.  So $\{v,w\}$ is not a cutting pair of $\stel{G}\setminus
(c_2,s)$, a contradiction.
\end{proof}

\subsection{Combining \texorpdfstring{\Tint}{T\_int}-paths}

We generally obtained a \Tint-path by combining two or more \Tint-paths of
subgraphs, possibly omitting some edges from these paths (but in such a way
that the combination is a simple path).  We now formally verify that this
indeed gives a \Tint-path.

\begin{lemma}
\label{cl:combine}
Let \calN be a noose of $G$
and let $H_i$ and $H_o$ be the graphs inside and outside \calN, respectively.
Let $P_i$ and $P_o$ be \Tint-paths of $H_i$ and $H_o$, respectively.
Let $P\subseteq P_i\cup P_o$ be a simple path that visits all vertices of $P_i\cup P_0$.
Then $P$ is a \Tint-path of $G$.
\end{lemma}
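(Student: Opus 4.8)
The plan is to verify the four defining properties of a \Tint-path for $P$ one at a time: (i) $P$ is a Tutte path of $G$, i.e., every $P$-bridge has at most three attachment points and none has exterior edges; (ii) $P$ visits all exterior vertices of $G$; (iii) $P$ admits an injective assignment $\sigma$ of $P$-bridges to attachment points; and (iv) every $\sigma$-value is an interior vertex of $G$. The key structural observation that drives everything is that the noose \calN cuts $G$ cleanly: a $P$-bridge $C$ of $G$ lies entirely inside \calN or entirely outside \calN, because $C$ is connected and disjoint from $P\supseteq (V(P_i)\cup V(P_o))\cap V(\calN)\supseteq$ (all vertices of $G$ on \calN that lie on $P_i$ or $P_o$); in fact every vertex on \calN belongs to both $H_i$ and $H_o$, hence to both $P_i$ and $P_o$ (these are Hamiltonian-like on exterior vertices), hence to $P$, so $C$ cannot straddle \calN. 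Therefore each $P$-bridge $C$ of $G$ is a $P_i$-bridge of $H_i$ (if inside) or a $P_o$-bridge of $H_o$ (if outside), up to the caveat that its attachment points in $G$ are the same as in $H_i$ (resp.\ $H_o$): no edges of $G$ leave $C$ except to \calN-vertices, which lie in $H_i\cap H_o$.

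First I would establish (ii): any exterior vertex $v$ of $G$ is exterior in $H_i$ or in $H_o$ (depending on which side of \calN it falls, with vertices on \calN exterior in both), so it is visited by $P_i$ or $P_o$, hence lies in $V(P_i)\cup V(P_o)$, hence is visited by $P$ by hypothesis. Next, for (i), take a $P$-bridge $C$; by the observation above it equals a $P_i$-bridge $C'$ of $H_i$ or a $P_o$-bridge of $H_o$ — say the former. Its attachment points on $P$ are exactly its attachment points on $P_i$ intersected with $V(P)$; since $P_i$ is a Tutte path, $C'$ has at most three attachment points on $P_i$, and removing some of them (those in $V(P_i)\setminus V(P)$) can only decrease the count, so $C$ has at most three. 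Moreover $C'$ has no exterior edge in $H_i$ (as $P_i$ is a \Tint-, hence Tutte, path with no exterior $P_i$-bridges), and an edge of $C$ interior to $H_i$ is also interior to $G$ since \calN's interior faces are interior faces of $G$; so $C$ has no exterior edge of $G$ either. That gives (i).

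For (iii) and (iv), define $\sigma$ on $P$-bridges by inheritance: if $C$ came from a $P_i$-bridge $C'$, set $\sigma(C)=\sigma_i(C')$; symmetrically for the outside using $\sigma_o$. I must check that $\sigma(C)$ is still an attachment point of $C$ on $P$ — it is an attachment point of $C'$ on $P_i$ and it is interior in $H_i$, hence not on \calN, hence not in the discarded set $V(P_i)\setminus V(P)$ (a vertex of $G$ is discarded only if it lies on \calN, since $P$ visits every non-\calN vertex of $P_i$); thus $\sigma(C)\in V(P)$ and it is an attachment point of $C$. Interiority (iv) is immediate: $\sigma_i(C')$ is interior in $H_i$, and a vertex interior in $H_i$ is interior in $G$ (it is not on \calN, so not on the outer face of $G$). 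Finally, injectivity: $\sigma_i$ is injective on inside-bridges and $\sigma_o$ on outside-bridges, and the two ranges are disjoint because $\sigma_i$-values are strictly inside \calN while $\sigma_o$-values are strictly outside \calN — no interior vertex of $H_i$ can equal an interior vertex of $H_o$ since the two graphs share only \calN.

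\textbf{Main obstacle.} The delicate point is pinning down precisely why a $P$-bridge of $G$ does not straddle \calN and why its attachment-point set and edge-interiority are faithfully read off from the subgraph it lives in; this hinges on the facts that every vertex on \calN is visited by both $P_i$ and $P_o$ (so by $P$), that \calN crosses no edges of $G$ except at \calN-vertices, and that interior faces of $H_i$ and $H_o$ are interior faces of $G$. Once this "clean separation" lemma is nailed down, the rest is the bookkeeping sketched above. I would also note the one subtlety that a vertex of \calN might fail to be visited by $P$ if it is visited by neither $P_i$ nor... — but it \emph{is} visited by both, since \calN-vertices are exterior in $H_i$ and in $H_o$ and \Tint-paths visit all exterior vertices; hence $V(\calN)\subseteq V(P_i)\cap V(P_o)\subseteq V(P)$, which is exactly what closes the argument.
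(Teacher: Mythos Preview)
Your proposal is correct and follows essentially the same approach as the paper's proof: show that every vertex on \calN is exterior in both $H_i$ and $H_o$ and hence visited by $P$, conclude that each $P$-bridge lies entirely on one side of \calN and is therefore a $P_i$- or $P_o$-bridge, and inherit representatives, with injectivity coming from the fact that interior representatives of $P_i$ lie strictly inside \calN while those of $P_o$ lie strictly outside. You are more explicit than the paper in a few places (e.g., checking the exterior-edge condition for Tutte paths and verifying that $\sigma(C)$ remains an attachment point on $P$), but note that since $P$ visits \emph{all} vertices of $P_i\cup P_o$ we have $V(P_i)\subseteq V(P)$, so your ``discarded set'' $V(P_i)\setminus V(P)$ is in fact empty and those extra checks collapse trivially.
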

\begin{proof}
Since $P_i$ and $P_o$ visit all their respective exterior vertices,
$P$ visits all exterior vertices and all vertices in \calN.

Consider a $P$-bridge $C$.  It cannot have vertices both strictly inside
and strictly outside~\calN, since all vertices on \calN are visited by
$P$ and $C$ is connected.  So $C$ is a $P$-bridge of $H_i$ or~$H_o$,
and can inherit the representative it received from there.  No representative
is used twice since the representatives of $P_i$ and $P_o$ were strictly
inside/outside \calN, respectively.  Also, no representative is on the
outer face, since none of $P_i$ and $P_o$ were.
\end{proof}

\subsection{Existence of necklaces}

We must argue that a suitable necklace exists in Case~\ref{case:4}.

\begin{lemma}
If none of Case~\ref{case:1},\ref{case:2},\ref{case:3},\ref{case:3'} applies, then $G$ has a simple interior
$B$-necklace for $B\in \{\V,\W\}$.
\end{lemma}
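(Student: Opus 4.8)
I want to exhibit a simple interior $B$-necklace for some $B\in\{\V,\W\}$ under the assumption that Cases~\ref{case:1}--\ref{case:3'} all fail. Recall that a $B$-necklace is a noose $\calN_0=\langle \X_\U{=}x_0,f_1,x_1,\dots,x_s{=}B,f_o\rangle$ whose interior vertices $x_1,\dots,x_{s-1}$ are each face-adjacent to $S_{\W\X}\setminus\{B\}$, it is \emph{simple} if no vertex repeats, and \emph{interior} if $x_1,\dots,x_{s-1}$ are all interior. The natural construction is to consider the subgraph $A$ of $G$ induced by $\X_\U$, $B$, and all \emph{interior} vertices that are face-adjacent to a vertex of $S_{\W\X}\setminus\{B\}$, together with the relevant adjacency-via-faces structure, and to find a path from $\X_\U$ to $B$ in it. More precisely, I would build an auxiliary planar graph on the vertex set of $A$ where we put an edge between two such vertices whenever they share an interior face, and argue that $\X_\U$ and $B$ lie in a common component; any simple path between them, read off with the witnessing faces, gives the desired necklace (after shortcutting repeats, which preserves simplicity and the face-adjacency property of the interior vertices).

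\textbf{Key steps.} First I would choose $B$: since $\X_\U$ is the neighbor of $\X$ on the top side, it is face-adjacent to $\X$, hence face-adjacent to $S_{\W\X}$. If $\X_\U$ is itself face-adjacent to \emph{both} $\V$ and $\W$ we have flexibility; in general I pick $B$ to be whichever of $\V,\W$ is "reachable," and the main content is to show at least one works. Second, I would walk from $\X_\U$ outward: starting at $x_0=\X_\U$ with the face $f_1$ containing $\X_\U$ and $\X$, repeatedly take the next interior face in rotational order and the next vertex along it that is face-adjacent to $S_{\W\X}\setminus\{B\}$, exactly as the algorithmic scan in Section~\ref{subsec:proof} does. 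The process must terminate at $B$: if it got "stuck," i.e. some $x_i$ had no successor that is both interior and face-adjacent to $S_{\W\X}\setminus\{B\}$, then I would derive a cutting pair. Concretely, the vertices face-adjacent to $S_{\W\X}$ that we have reached, together with the part of $S_{\W\X}$ they witness, would have to be separated from the rest of $G$ by at most two vertices (one on the top side near $x_i$, one on $S_{\W\X}$), and this cutting pair would either lie within one side---contradicting corner-3-connectivity---or be a top/right-side (resp. top/left-side) pair with a witnessing face avoiding $\X$ (resp. $\U$), which is exactly what Cases~\ref{case:3} and~\ref{case:3'} handle. Third, I would handle the "interior" requirement: if the scan ever wants to use an exterior vertex $z\neq \X_\U,B$ as some $x_i$, then $z$ lies on $S_{\U\V}$ (the left side) or on $S_{\W\X}$; in the former case $z$ together with a vertex of $S_{\W\X}$ is a left/right cutting pair, invoking Case~\ref{case:2}; in the latter, $z\in S_{\W\X}\setminus\{B\}$ is reached directly and one simply truncates. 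Fourth, simplicity: whenever a vertex repeats, excise the loop; the remaining necklace still has all interior vertices face-adjacent to $S_{\W\X}\setminus\{B\}$, since that property is per-vertex.

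\textbf{Main obstacle.} The delicate part is the termination/stuck-case analysis: I must argue that failure of the scan always produces a cutting pair that matches one of the \emph{already-excluded} cases, and in particular that the witnessing interior face can be chosen to avoid $\X$ (for a top/right pair) or $\U$ (for a top/left pair), so that Case~\ref{case:3} or Case~\ref{case:3'} genuinely applies rather than merely a generic cutting pair existing. This requires carefully tracking which face $f_i$ we are "blocked at" and using internal 3-connectivity to rule out that the blocking separator lies within a single side. I expect this step to mirror the face-scanning argument already sketched in Section~\ref{subsec:proof} and to reuse Lemma~\ref{lem:skip_corner} implicitly; the rest (shortcutting for simplicity, truncating at the first exterior hit, choosing $B$) is routine.
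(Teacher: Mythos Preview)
Your overall strategy---build some necklace and then shortcut to make it simple---matches the paper, but the paper's construction is considerably more direct and avoids the ``stuck'' analysis you flag as the main obstacle. Instead of working with \emph{face-adjacent} vertices and a scan, the paper simply takes the \emph{edge-neighbors} of $S_{\W\X}\setminus\{B\}$ that are not on the right side, listed in order as one walks $S_{\W\X}$ from $\X$ to $\W$ (and in ccw order around each right-side vertex). Consecutive vertices in this list share a face by construction, so this is automatically a $B$-necklace; there is no reachability question and no stuck case. The interiority argument then becomes one line: if some $x_i$ with $0<i<s$ were exterior, then the \emph{edge} $(x_i,t_i)$ with $t_i\in S_{\W\X}$ is a cutting pair, and since it is an edge, the face condition of Case~\ref{case:3} (or~\ref{case:3'}) is automatically met---precisely the point you worried about in your obstacle paragraph. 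Simplicity is obtained by the same shortcutting you propose.

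Two specific gaps in your write-up are worth noting. First, the statement is meant for \emph{both} choices of $B$ (Case~\ref{case:4a} uses $B=\W$ and Case~\ref{case:4b} uses $B=\V$), so you cannot ``pick $B$ to be whichever is reachable''; the argument must go through for an arbitrary fixed $B\in\{\V,\W\}$. Second, your handling of an exterior intermediate vertex $z$ is incomplete: you only treat $z\in S_{\U\V}$ and $z\in S_{\W\X}$, but $z$ could just as well lie on the top side $S_{\X\U}$, which is exactly the situation that triggers Case~\ref{case:3}. The paper's edge-neighbor construction makes this case immediate (the edge $(z,t)$ gives a face, and one of the two faces on that edge avoids $\X$), whereas under your face-adjacency construction you would still owe the argument that the witnessing face can be chosen to avoid $\X$.
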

\begin{proof}
It is easy to see that a $B$-necklace exists; for example,
we can take all neighbors of $S_{\W\X}\setminus \{B\}$
that are not on the right side, enumerating  them 
in order from $\X$ towards $\W$ and in ccw order
at each vertex.  We claim that the resulting 
necklace $\X_{\U}{=}x_0,\dots,x_s{=}B$
is interior. For if $x_i$ (for some $0 < i < s$) were exterior,
then $x_i$ and its neighbor $t_i\in S_{\W\X}$ would form a cutting pair
separating $x_0$ and $x_s$.
So if $x_i$ is on the left side, then we are in Case~\ref{case:2},
and if it is on the top side, then we are in Case~\ref{case:3}
(which can be applied since $(x_i,t_i)$ is an edge).

Consider the curve $\calC$ defined by this necklace.
If $\calC$ visits a vertex twice, then shortcut the necklace by
removing the part between the two visits.   If $\calC$ crosses 
itself, say $x_{i-1}{-}x_i$ intersects $x_{j-1}{-}x_j$ for $i<j$,
then we can immediately go from $x_{i-1}$ to $x_j$ with a curve
along~$\calC$, thereby removing $x_i,\dots,x_{j-1}$ from the necklace.
So the shortest possible necklace that uses only vertices in
$\{x_0,\dots,x_s\}$ is simple and interior.
\end{proof}

\section{Linear-time complexity for 3-connected graphs}\label{sec:runningtime}

We now explain how to find the $\Tint$-path in linear time
for 3-connected graphs, i.e., prove Theorem~\ref{thm:time_complexity}.
We first argue how to bound the time spend on recursions, once we know
which case applies and have found the subgraphs.  (This is the easier
part.)  Next, we explain how to store cutting pairs (for determination of
cases) and how to store adjacencies to sides (for determination of the
necklace); these data structures are not complicated, but arguing that their
updates take overall linear time is lengthy.

\subsection{Preliminaries.}
A few notations will be useful.  First, set $D_F:=\sum_{f\in F(P)} \deg(f)$;
we aim to show that the running time is $O(D_F)$.  Next,
set $D_V:= \sum_{v\in V(P)} \deg(v)$,
where $V(P)$ are the vertices of~$P$.    Since every vertex has an incident
interior face, we have $D_V\leq D_F$; it hence suffices to argue a running time
of $O(D_V+D_F)$.  Next,
set $\calG$ to be the set of all subgraphs that we used in some recursion.
Let $G^+$ be the graph that we would get if we inserted into $G$ all
the edges that were used as virtual edge in some recursion.  Note that
$G^+$ is still a planar graph.  

Let $V_X$ be the set of vertices of $G$ that were exterior in some subgraph
$G' \in \calG$.  Note that, when recursing into $G'$, we obtained a Tutte
path of $G'$ that visits all exterior vertices in $G'$.  When combining
Tutte paths of subgraphs, the resulting path always visits the same set
of vertices.
So $V_X\subseteq V(P)$, and in particular $|V_X| \leq |V(P)| \leq D_V$.

Let $E_X$ be the set of edges of $G^+$ (i.e., possibly including some virtual
edges) that were exterior in some subgraph $G'\in \calG$.  The ends of such
edges necessarily belong to $V_X$.  Since~$G^+$ is planar, we have
$|E_X|\leq 3|V_X|-6 \in O(D_V)$.

For any subgraph $G'\in \calG$, let $E_X(G')$ be the edges of $G^+$ that
are exterior in $G'$, but were not exterior (or did not exist) in the 
parent graph that we recursed from.  Note that $|E_X(G')|\geq 1$ in all
cases.  The work done in the recursion for $G'$ (not including
the time to determine the case or to find the necklace) is $O(|E_X(G')|)$, because
we must update the planar graph embedding at the places where
we split the parent graph to obtain $G'$ (details are given below).    
Observe that $\sum_{G'\in \calG} |E_X(G')| \leq  2|E_X|$, because any
edge becomes exterior only once, and then belongs to at most two
subgraphs that we recurse in.  Therefore, the time to handle each
recursion (excluding the time to find the case and the necklace) is
$O(D_V)$.  The bottleneck for proving Theorem~\ref{thm:linearTime}
is hence to show how we can test for cutting pairs and find leftmost
necklaces in overall time $O(D_V + D_F)$.

\subsection{Data structures.}
We presume that with the planar embedding we obtain a standard data structure with the following:
\begin{itemize}
\item Every vertex $v$ knows whether it is exterior, and has a list $L(v)$ of its incident faces and edges in ccw order.  
This list is circular if $v$ is interior, and begins and ends with the exterior edges at $v$ otherwise.
(There are exactly two such edges since all graphs in $\calG$ are 2-connected.)
\item Every interior face $f$ has a list $L(F)$ of its incident vertices and edges in ccw order.  

One could demand a similar list for the outer face, but since our outer face changes frequently we will
not do this.  Observe that even without such a list we can walk along the outer face in ccw order, presuming we
know at least one vertex $v$ on it, by traversing the last edge in~$L(v)$ and recursing from its other end.
\item Vertices, edges and faces are cross-linked, i.e., each vertex/edge/face knows all its occur\-ren\-ces in the lists $L(v)$ and/or $L(F)$.
\end{itemize}
Setting up this data structure is standard material and will not be explained here; see, for example, \cite{C97,MP78}.
To find cutting pairs and leftmost necklaces efficiently, we need to 
store more information as follows; see Fig.~\ref{fig:datastructures}.
\begin{itemize}
\item[\calC:] We store a circular list $\calC$ of 
four references $c_0,c_1,c_2,c_3$ to vertices in $G$; these are the corners of $G$ enumerated in ccw order.  We also know which of these corners is $\U$.
Note that the ``start corner'' $\U$ may change when recursing in a subgraph 
(consider for example graph~$G_0^+$ in Case~\ref{case:4b}), but we keep the vertex in 
the same place in $\calC$ and only change the reference to~$\U$; see Fig.~\ref{fig:datastructures-C}
\end{itemize}

\begin{figure}[t]
\centering
\subcaptionbox{$\mathcal C$ and $S_i$\label{fig:datastructures-C}}{\includegraphics[page=1,scale=.65]{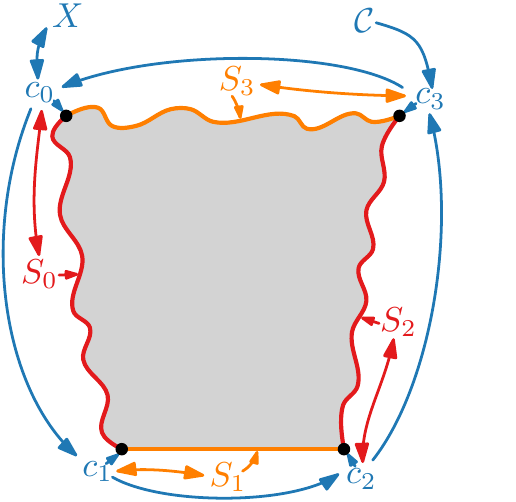}}
\hfil
\subcaptionbox{$\mathcal F(f,i)$\label{fig:datastructures-F}}{\includegraphics[page=2,scale=.65]{datastructures}}
\hfil
\subcaptionbox{$\mathcal V(w,i)$\label{fig:datastructures-V}}{\includegraphics[page=3,scale=.65]{datastructures}}
\hfil
\subcaptionbox{$\mathcal P(i,j)$\label{fig:datastructures-P}}{\includegraphics[page=4,scale=.65]{datastructures}}
\caption{Illustration of the data structures.}
\label{fig:datastructures}
\end{figure}

We do not explicitly store the sides; any side is uniquely determined by its corners.
We use $S_i$ to denote the side from $c_i$ to $c_{i+1}$
(addition for corners is always modulo 4).  
It is important that sides are defined via entries in $\calC$ ({\em not} via references to vertices) due to the following.  When recursing in a subgraph
$G'$, we may replace an entry $c_i$ in $\calC$ by a new vertex $c_i'$;
we call $c_i'$ the {\em corner corresponding to $c_i$} .   With
this, $S_i$ now automatically refers to the side that starts at $c_i'$ in $G'$
(we call this the {\em corresponding side in $G'$}).  

We call an interior face $f$
{\em incident to side $S_i$} if it contains a
vertex $v$ on $S_i$.  We order the interior faces incident to side $S_i$
as follows:  We start with the interior faces around $c_i$ in clockwise order,
hence ending with the interior face incident to $c_i$ and its neighbor~$u_1$ on~$S_i$.
Next come the remaining interior faces around~$u_1$, in clockwise order, ending with the interior face incident
to $u_1$ and its neighbor $u_2\neq c_i$ on~$S_i$.  Continue until we reach
vertex $c_{i+1}$.  With this definition in place, we need the following three lists
at vertices, faces, and sides:
\begin{itemize}
\item[\calF:] For every interior face $f$ and every $i=0,\dots,3$, we store a (possibly empty) list $\calF(f,i)$ of vertices that are on $f$ and on side $S_i$.
There are at most two such vertices per side by corner-3-connectivity, else 
the two non-consecutive vertices of $f$ on $S_i$ would form a cutting pair
within $S_i$; see Fig.~\ref{fig:datastructures-F}.

This obvious fact is the crucial ingredient for our data structures, because 
with this many checks can be done in constant time
that otherwise would take longer.  For example, given a vertex $w$, 
we can check in constant time which side(s) (if any)
it lies on.  Namely, take an arbitrary interior face $f$ at $w$ and
inspect $\calF(f,i)$ for each $i=0,\dots,3$.  Since each list has constant size, we can
check in constant time whether $w$ is in it, hence whether it lies on $S_i$.

\item[\calV:] For every interior vertex $w$ and every $i=0,\dots,3$, we store a (possibly empty) list $\calV(w,i)$ of interior faces $f$ that contain
$w$ as well as a vertex on side $S_i$.
The list is sorted by the order in which these faces are incident to side $S_i$; see Fig.~\ref{fig:datastructures-V}.

Note that $\calV(w,i)$ allows us to check in constant time
whether $w$ is face-adjacent to corner~$c_i$, 
because any interior face that contains $w$ and $c_i$ would have to be
the first entry in $\calV(w,i)$ by the order of faces incident to $S_i$.

\item[\calP:] For any two distinct sides $S_i,S_j$
we store a list $\calP(i,j)$ of interior faces
that contain a cutting pair~$\{v,w\}$ of $G$, with $v$ on side $S_i$
and $w$ on side $S_j$.
This list is sorted by the order in which these faces are incident to side $S_i$; see Fig.~\ref{fig:datastructures-P}.

Note that $\calP(j,i)$ is the reverse of $\calP(i,j)$, but it will
be convenient to store both of them.  We assume
that any face $f$ knows of all its occurrences in some list $\calP(i,j)$;
this is only a constant overhead per face. 
\end{itemize}

\subsubsubsection{Initialization.} 
We argue how to initialize the data structure in $O(n)$ time; this then also
shows that it uses $O(n)$ space.    We assume that we are given references
to the corners,
and hence initialize~$\calC$ by walking around the outer face.  
For all other data
structures, we call \scanside{$i,c_i,c_{i+1}$} for $i=0,\dots,3$.
See Algorithms~\ref{alg:scanside} and \ref{alg:scanface}.
This visits all exterior vertices, scans all their
incident interior faces, and updates all the lists as needed.  One verifies
that, since we scan along a side in ccw order, the lists automatically
receive the correct order.  We should also mention that the conditions in lines
\refline{+} and \refline{++} (explained below) are always true for the initialization 
and can be tested in constant time for later recursions.
Condition \refline{+++} can be tested in constant time per vertex
by marking (before starting any scan for any side)
all vertices that will be scanned; in the initialization, this is
all vertices.

\begin{algorithm}[t]  
  \caption{\protect\scanside{$i,c,d$}} 
\label{alg:scanside}
  
  \Input{$c$ and $d$ are vertices on the outer face, the ccw path from~$c$ to~$d$ is on side $S_i$}
  \BlankLine
  \ForEach{vertex $v$ on side $S_i$, from $c$ to $d$}{
    \If(\tcp*[f]{\textlabel{(+)}{line:+}}){$v$ was not on side $S_i$ before}{
      \ForEach{interior face $f$ in $L(v)$ in clockwise order after the outer face)}{
        add $v$ to $F(f,i)$ \;
        \If(\tcp*[f]{\textlabel{(++)}{line:++}}){$f$ was not incident to side $S_i$ before}{
          \scanface{$f,v,i$}
        }
      }
    }
  } 
\end{algorithm}

\begin{algorithm}[t]
  \caption{\protect\scanface{$f,v,i$}}
\label{alg:scanface}
  
  \Input{$f$ is an interior face and incident to exterior vertex $v$ on side $S_i$}
  \BlankLine
  \ForEach{vertex $w \in L(F)\setminus\{v\}$ in clockwise order after $v$}{
    \If{$w$ is interior}{
      append $f$ to $V(w,i)$
    }
    \ElseIf{edge $(v,w)$ is not on the outer face}{
      \ForEach{side $S_j\neq S_i$ containing $w$}{
        append $f$ to $P(i,j)$ \;
        \If(\tcp*[f]{\textlabel{(+++)}{line:+++}}){$w$ will not get scanned}{
          prepend $f$ to $P(j,i)$ \;
        }
      }
    }
  } 
\end{algorithm}

The operation of scanning a vertex $v$ takes $O(\deg(v))$ time
plus the time to scan its incident interior faces which we will count
with those faces.
The operation of scanning a face $f$ takes $O(\deg(f))$ time.
In total, the running time of the initialization is therefore
proportional to the degrees of vertices and faces that are incident to the
outer face, which is $\Theta(D_V+D_F)$ since $P$ visits all exterior
vertices.

\subsection{Updating}
We first give some general rules for how to update the data structure,
and then fill in for each individual case some case-dependent details.
Assume that we recurse into some subgraph $G'$ of~$G$.   Where
useful, we will use ``primed'' versions (such as $\calV'$ and $S'_i$)
for the data structures and properties of $G'$.

Each case will state which
corner $Z'$ of $G'$ {\em corresponds} to which corner $Z$ of $G$, i.e., 
takes the entry of $Z$ in $\calC$;  this defines corresponding 
sides.  We will do this 
so that the following holds.

\begin{property}
\label{prop:vertex_side}
Assume that a vertex $v$ is incident to a side $S_i$ in $G$ 
and belongs to some subgraph~$G'$ that we recurse in.
Then $v$ is incident to the corresponding side $S_i'$ of $G'$.
\end{property}

We cannot afford to fully copy the graph structure (i.e., lists $L(v)$ and
$L(f)$) from $G$ into $G'$ since this would be too
slow.  Instead, observe that $G'$ is in all cases 
defined as the inside of some noose $\calN$ of $G$.
At every vertex $v$ traversed by noose $\calN$, we split the
incidence list $L(v)$ into the two parts at the faces traversed by $\calN$;
these faces become the outer face of $G'$ and hence this sets the
lists up correctly.  For any interior face $f$
traversed by $\calN$, we might insert a virtual edge along~$f$; if we do so,
then we assume that the part inside $\calN$ (hence belonging to $G'$) 
inherits the reference $f$ and the (suitably shortened) list $L(F)$.
(We assume that here, as in many of the other operations below, we store
where the lists were cut, and keep a reference to the rest, so that we can
restore the rest
when returning from the recursion in $G'$ and recursing into a different 
subgraph of $G$.  We view this record as ``belonging'' to the (possibly
virtual) edge that newly became an exterior edge of~$G'$; the final set
of such edges form a planar graph, thus this overhead takes space $O(n)$
overall.) 

Subgraph $G'$ {\em inherits} the lists $\calV$, i.e., 
$\calV'(w,i):=\calV(w,i)$ for all vertices $w$ that are in $G'$, 
and no time is spent on creating these since we simply keep the same lists.
We need to
argue that this does not create false positives for an interior vertex
$w$ of $G'$.  Recall that
$\calV(w,i)$ stores interior faces~$f$ that contain $w$ as well as a vertex
on $S_i$.  Clearly, $f$ still contains~$w$ in~$G'$, and it does
contain a vertex on the corresponding side $S_i'$ of $G'$ because of
the following property of our cases:

\begin{property}
\label{prop:sides}
\label{prop:face_sides}
Assume that an interior face $f$ is incident to side $S_i$ in $G$, 
and $f$ (or some part of $f$ obtained by dividing $f$ along a virtual edge) 
is an interior face of some subgraph~$G'$ that we recurse in.
Then~$f$ is incident to the corresponding side $S_i'$ of $G'$.
\end{property}

This property is quite obvious for a face $f$ that was not
divided, because then all vertices of~$f$ must also belong to $G'$,
and by Property~\ref{prop:vertex_side} such a vertex remains on all
sides that it was on in $G$.
This property is not at all obvious for
faces that were divided by insertion of virtual edges; we will
argue this for each case below.

Subgraph $G'$ also inherits
the four lists $\calF(f,i)$ for any interior face $f$.  If $f$ was not
divided by a virtual edge, then this does not add false
positives since any vertex that $f$ had on side $S_i$
of $G$ also is in~$G'$ (else $f$ would have been divided) and on
side $S_i'$ (by Property~\ref{prop:vertex_side}).  If~$f$
was divided by a virtual edge~$(v,w)$, then we must change 
$\calF(f,i)$, where $S_i$ is the side that contains $(v,w)$.
We know $i$ from each case below, and set $\calF(f,i):=\{v,w\}$.
(We should also keep a copy of the prior list $\calF(f,i)$ until we are done
with updating the entire data structure, because in some of the
tests below we need to use the lists as they were in $G$.  This
is only constant space overhead.)

The initialization of the $\calP$-lists for $G'$ will depend on each
case, but as will be seen, they are either empty or inherited from $G$
(with minor modifications). Furthermore, all cutting pairs of $G$ that also
exist in $G'$ get copied over.

\subsubsubsection{Scanning (parts of) sides.}
We have initialized the data structures of $G'$ so that it has no
false positives (entries that should not be there), but it may be
missing some entries since some vertices may be new to a side.
To add these missing entries, we scan (parts of) each side $S'_i$ in such a way that all 
vertices that are new to $S'_i$ are guaranteed to be scanned, i.e.,
we call \scanside{$i,c,d$} for some $i\in \{0,\dots,3\}$ and
some vertices $c,d\in S'_i$.  
(Each case listed below will explain exactly what needs to be scanned.) 
We should mention here that the pseudocode needs
a few minor modifications to obtain the correct order for $\calV$ and
$\calP$; we will explain this below.
To make this feasible, we need that $c,d$ are not arbitrary, and 
that the following holds in our cases:

\begin{property}
\label{prop:scan_side}
Assume that, when recursing from $G$ to a subgraph $G'$, we scan along
a part~$S'_{cd}$ of some side $S'_i$.  Then, one of the following holds:
\begin{enumerate}[label=(\roman*)]
\item We {\em scan the entire side} (i.e., $c=c'_i$ and $d=c'_{i+1}$), 
	and the vertices between $c$ and $d$ (if any)
	were not on side $S_i$, or
\item $d=c'_{i+1}$, $(c,d)$ is an edge, $c$ was on $S_i$ in $G$, and $d$ 
was not on $S_i$ in $G$.
\end{enumerate}
\end{property} 

It is worth pointing out that the second situation is needed in only one
case (for graph~$G_b^+$ in Case~\ref{case:3}).  

Recall Algorithm~\ref{alg:scanside} for scanning sides.
We exempt in line \refline{+} a vertex $v$ from scanning if it was already 
on side $S_i$ before.  (By Property~\ref{prop:scan_side}, this can 
happen only to $c$ or to $d$, because all other vertices were new to the side.)
First, note that this can be
tested in constant time by inspecting $\calF(f,i)$ (using the
lists as they were in $G$) for some interior face $f$ incident to~$v$.  
If $v$ was already on side $S_i$ in $G$, then scanning it
would not add $v$ or an interior face~$f$ incident to $v$
to any $\calV$-list or $\calF$-list,
because all these lists already contained it in $G$ and were inherited.
Also, any cutting pair $\{v,w\}$ of $G'$ where both $v$ and~$w$
already existed on their sides in $G$ was in some $\calP$-list
of $G$ and was inherited by $G'$.  If $w$ did not exist on its
side (say $S_j$) in $G$, 
then we will scan at $w$ from side $S_j'$, find the cutting pair then,
and update $\calP(i,j)$ as well due to line \refline{+++}.
So there is no need to scan at $v$.

We likewise exempt in line \refline{++}
a face $f$ from scanning if it was already on the 
side, i.e., if the list $\calF(f,i)$ in~$G$ was non-empty.    
As above, one argues that this
will mean no missed entries in the data structure.
Avoiding these face scans is {\em the} crucial insight
to bring the running time down to linear.

\begin{lemma}
The total time to scan sides of subgraphs of $G$ is
$O(D_V+D_F)$.
\end{lemma}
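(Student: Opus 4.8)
The plan is to bound the total side-scanning cost by charging it so that every pair consisting of an interior face $f$ and a side index $i\in\{0,1,2,3\}$ is touched only $O(1)$ times; the bound $O(D_V+D_F)$ then follows, using that every vertex or face we ever scan is incident to the final path $P$ (as already argued: such a vertex or face is incident to the outer face of some subgraph, each recursion returns a path through all its exterior vertices, and combining preserves this) and that $\sum_{G'\in\calG}|E_X(G')|\le 2|E_X|=O(D_V)$.

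First I would read off from Algorithms~\ref{alg:scanside} and~\ref{alg:scanface}, the exemptions in lines \refline{+} and \refline{++}, and Property~\ref{prop:scan_side} the following local statement. When we recurse from $G$ into a subgraph $G'$ and scan (part of) a side $S_i'$, line \refline{+} skips in constant time every vertex of the scanned stretch except those that are new to side $S_i$ (by Property~\ref{prop:scan_side} all its non-endpoint vertices are new, and the two endpoints are tested using $\calF$); at a scanned vertex $v$ the loop over $L(v)$ costs $O(\deg_{G'}(v))$ and performs one operation ``add $v$ to $\calF(f,i)$'' per interior face $f$ at $v$ in $G'$, plus a call \scanface{$f,v,i$} exactly when line \refline{++} does not skip $f$, i.e.\ when $f$ was not incident to $S_i$ in $G$; such a call costs $O(\deg_{G'}(f))\le O(\deg_G(f))$. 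Summing the remaining constant per-side bookkeeping over all $G'\in\calG$ and their at most four scanned sides contributes $O(|\calG|)$, which is $O(|E_X|)=O(D_V)$ because $|E_X(G')|\ge 1$.

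It then remains to bound two quantities: the total number of ``add to $\calF$'' operations, which equals the total vertex-scanning cost since each such operation is one step of a loop over some $L(v)$, and the total cost of all calls \scanface{$f,v,i$}. For the first, observe that when we add $v$ to $\calF(f,i)$ the vertex $v$ is new to $S_i$ and hence not yet in the inherited list $\calF(f,i)$; the list only gains entries, never loses them, except when $f$ is split by a virtual edge lying on side $S_i$, in which case it is reset to the two ends of that edge; and it always has at most two entries, since three vertices of a face on one side would form a cutting pair inside that side (corner-3-connectivity). Moreover $\calF(f,i)$ is reset at most once: once a virtual edge has been placed on $f$'s portion of some $S_i'$, $f$ carries an exterior edge on side $S_i$ from then on, and a \emph{second} virtual edge incident to $f$ on a side $S_i'$ is impossible, as it would put at least three vertices of $f$ on that side. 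Hence each pair $(f,i)$ sees $O(1)$ such additions and the total is $O(|F(P)|)=O(D_F)$. The same ``incident to a side forever, up to one reset'' principle, now via Property~\ref{prop:face_sides}, shows \scanface{$f,\cdot,i$} is invoked $O(1)$ times per pair $(f,i)$, each invocation costing $O(\deg_G(f))$, so these calls total $O(\sum_{f\in F(P)}\deg(f))=O(D_F)$. Adding the three contributions yields $O(D_V+D_F)$.

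The step I expect to be the real obstacle is keeping this amortized bookkeeping honest against the two features that could otherwise inflate it: the recursion tree branches, so a single vertex or face may occur in many sibling subgraphs, and a face may be subdivided by many virtual edges over the course of the recursion. Both are neutralized by indexing incidences by the \emph{side number} rather than by vertex references, so that ``$v$ lies on side $S_i$'' and ``$f$ is incident to $S_i$'' are monotone along any root-to-leaf path --- this is precisely Properties~\ref{prop:vertex_side} and~\ref{prop:face_sides}, and it exploits that it is side indices, not vertex references, that get inherited --- by the fact that a face keeps its reference along one such path so that ``reset at most once'' is meaningful, and by the accounting $\sum_{G'\in\calG}|E_X(G')|\le 2|E_X|$ that stops the constant per-subgraph overhead from exceeding $O(D_V)$. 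Once those points are pinned down, the counting above is routine.
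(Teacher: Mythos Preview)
Your overall strategy matches the paper's: exploit that side incidences are monotone (Properties~\ref{prop:vertex_side} and~\ref{prop:face_sides}) so that each vertex or face acquires each of the four side indices at most once, and absorb the per-subgraph constants into $\sum_{G'}|E_X(G')|\le 2|E_X|=O(D_V)$. The vertex part and the overhead part are fine.

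The gap is in your face-scanning bookkeeping. Your claim ``\scanface{$f,\cdot,i$} is invoked $O(1)$ times per pair $(f,i)$, each invocation costing $O(\deg_G(f))$'' does not survive face splitting. A face $f$ of $G$ may be split into $k$ pieces that live in $k$ different branches of the recursion tree, and in each branch the resident piece can become newly incident to $S_i$ and get scanned independently; your ``reset at most once'' argument only tracks one reference along one root-to-leaf path and says nothing about siblings. So for an original face $f$ the number of invocations for side $i$ is not $O(1)$ but up to $k$, and the cost of each invocation is $O(\deg(\text{piece}))$, not $O(\deg_G(f))$. You therefore still owe a bound of the form $\sum_{\text{scanned pieces of }f}\deg(\text{piece})=O(\deg_G(f))$, and that is exactly the nontrivial step. (The equality ``$O(|F(P)|)=O(D_F)$'' is also wrong as written; these are different quantities.)

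The paper closes this by charging the scan of any intermediate piece $f'$ to the \emph{final} pieces $f_{i_1},\dots,f_{i_\ell}$ contained in it: since $\sum_j \deg(f_{i_j})\ge \deg(f')$, assign each $f_{i_j}$ a charge of $O(\deg(f_{i_j}))$. By Property~\ref{prop:face_sides}, whenever $f'$ gains a new side incidence so does every final piece inside it, and each final piece can gain at most four. Hence each final piece is charged $O(1)$ times, and the total over all pieces of $f$ is $O\big(\sum_{i=1}^k \deg(f_i)\big)=O(\deg(f)+k)=O(\deg(f))$, using $k\le \deg(f)-2$. This charging-to-final-pieces step is what your argument is missing.

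A smaller point: your vertex-scan bound via counting additions to $\calF(f,i)$ inherits the same issue (the pairs $(f,i)$ multiply as faces split). The paper's direct argument is both simpler and avoids this: each vertex is scanned only when it is new to a side, hence at most four times, and each scan costs $O(\deg(v))$, giving $O(D_V)$ immediately.
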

\begin{proof}
When scanning sides (for one subgraph $G'$), we may spend time 
on vertices that end up not being scanned, due to line \refline{+}.  
By Property~\ref{prop:scan_side}, there are at most
two vertices on each side, hence~$O(1)$ in total.  We count
this as overhead to the time $O(|E_X(G')|)$ that we budgeted
for $G'$ earlier, and do not consider it further here.

Likewise, we spend some time on interior faces that
were incident to a scanned vertex~$v$, but already incident to the side of~$v$
and therefore end up not being scanned due to line \refline{++}.
We spend $O(1)$ time per such face, hence $O(\deg(v))$ time per
scanned vertex $v$. We count this as
overhead to the time that we budget for scanning $v$,
and do not consider it further here.

So we must only bound the time spend on scanning faces and vertices
that were actually new to the side.
Every vertex is incident to at most 4 sides, and since it never loses
a side incidence by Property~\ref{prop:vertex_side} and gains one
with every scan by the rule in line \refline{+}, it is scanned $O(1)$
times.

Consider a face $f$ that gets scanned at least once.  During some 
recursion, face $f$ may get split into pieces, which in turn can get split
into more pieces in later recursions; let $f_1,\dots,f_k$ be the (disjoint)
pieces of $f$ that do not get split further, and note that $k\leq \deg(f)-2$
since at the worst $f$ gets split into $\deg(f)-2$ triangles.

Consider some part $f'$ of $f$ (possibly $f'=f$) that gets scanned,
which takes $O(\deg(f'))$ time.   Let $f_{i_1},\dots,f_{i_\ell}$
(with $i_j\in \{1,\dots,k\}$) be the pieces of $f$ that belong to $f'$. 
Therefore $\sum_{j=1}^\ell \deg(f_{i_j}) \geq \deg(f')$. So it suffices
to account for the work if we assign $O(\deg(f_{i_j}))$ work to each
piece $f_{i_j}$.
During the scan of $f'$, it acquires a new side incidence due to
the rule in line \refline{++}.  Crucially, by Property~\ref{prop:face_sides},
this side incidence also exists in each $f_{i_j}$.
Therefore, we count $O(\deg(f_{i_j}))$ time
only if $f_{i_j}$ acquires a new side incidence, which happens at most 4
times.
In consequence, the total time spent on scanning
all the pieces of $f$ is $O(\sum_{i=1}^k \deg(f_i))=O(\deg(f)+k)=O(\deg(f))$.

If a vertex or face gets scanned, then it either was incident to the
outer face already (as is the case in the initialization step), or it
became incident to one more side, hence incident to the outer face. So
it is in $V_X$ (in case of a vertex) or incident to a vertex in $V_X$ (in case 
of a face).  As before, $V_X\subseteq V(P)$, so only vertices on $P$ or
faces incident to $P$ get scanned and the running time is as desired.
\end{proof}

\subsubsubsection{Maintaining the correct order.}
During \verb+side_scan(i,c,d)+, 
we possibly add entries to $\calF(f,i)$, $\calV(w,i)$, and $\calP(i,j)$,
for some interior face $f$, vertex $w$, and side $S_j$. The $\calF$-lists have
constant size and no particular order, but the other two lists must keep 
entries in the order in which faces are incident to side $S_i$. 
Property~\ref{prop:scan_side} is crucial for showing this: we usually scan 
nearly the whole side.

Assume first that we scan an edge $(c,c'_{i+1})$, where $c'_{i+1}$ is new 
to side $S'_i$.
Therefore, the existing entries in $\calV'(w,i)$ or $\calP'(i,j)$ are 
faces whose vertex on $S'_i$ comes {\em before} $c'_{i+1}$.  So new entries
(which must all be due to entry $c'_{i+1}$ since all other vertices on
$S_i'$ are also on $S_i$) can simply be appended and we maintain the order.

Now, consider the case where we scan an entire side $S_i$, except that $c'_i$
and/or $c'_{i+1}$ may already have been adjacent to the side and then do not 
get scanned.  If $c'_{i+1}$ was not yet on~$S_i$, then all new entries due
to it should be appended to $\calP(i,j)$.
If $c'_i$ was not yet on~$S_i$, then all new entries due to it should
be prepended (in reverse order).  So we must only handle the case where both
$c'_i$ and $c'_{i+1}$ were already on side $S_i$.  If $S_i$ only consists
of these two vertices, then no scanning happened and we are done; we
may therefore assume that $S_i$ has at least three vertices.  If
list $\calP(i,j)$ previously was empty, then we can simply scan.  So by
using the correct of the above approaches, we can create the correct order
in $\calP(i,j)$ as long as we ensure the following.

\begin{property}
\label{prop:cutting_pair}
Assume that, when recursing from $G$ to a subgraph $G'$, we scan an
entire side~$S'_i$.  Then either 
\begin{enumerate}[label=(\roman*)]
	\item $S'_i$ contains at most two vertices, or
  \item $c'_i$ is new to $S_i$, or 
  \item $c'_{i+1}$ is new to $S_i$, or
  \item all lists $\calP(i,j)$ are empty.
\end{enumerate}
\end{property}

\subsection{Details for the individual cases}

\begin{figure}[t]
\centering
\subcaptionbox{}{\includegraphics[scale=.75]{corner3con}}
\hfil
\subcaptionbox{}{\includegraphics[scale=.9]{substitution}}
\hfil
\subcaptionbox{}{\includegraphics{case1}}
\repeatcaption{fig:c3csubs}{\ctcsubsCaption}
\label{fig:c3csubs-app}
\end{figure}

We now finally go through all cases and fill in some case-dependent details.
In particular, we must explain
\begin{enumerate}[label=(\Alph*)]
\item\label{step:test} how to test whether the case applies, and
\item\label{step:create} how to obtain the information to create the subgraphs,
	especially the necklace in Case~\ref{case:4}.
\end{enumerate}
We must also explain for each subgraph $G'$
\begin{enumerate}[label=(\Alph*)]
\setcounter{enumi}{2}
\item\label{step:corners} which corners of $G'$ correspond to which corners of $G$
	so that Property~\ref{prop:vertex_side} holds,
\item\label{step:face} why no divided face loses a side incidence (Property~\ref{prop:sides}),
\item\label{step:init} how to initialize the $\calP$-lists with the
	cutting pairs of $G$ that also exist in $G'$,
\item\label{step:sides} which vertices may be new to sides, and that we can scan them by
	scanning sides in such a way that Property~\ref{prop:scan_side} 
	and Property~\ref{prop:cutting_pair} holds.
\end{enumerate}
In all cases, we continue with the notations and assumptions that were
used in Section~\ref{subsec:proof}.

\subsubsubsection{Case~\ref{case:1} and the substitution trick (Fig.~\ref{fig:c3csubs-app}).}
\begin{enumerate}[label=(\Alph*)]
\item It is straightforward to test whether Case~\ref{case:1} applies, since we 
know the outer face.  

\item Observe that we only recurse into a subgraph if
we apply the substitution trick.  The subgraph is then 
$G':=G\setminus (\W,\X)$, i.e., we remove the edge that was the right side.  

\item The four corners of~$G$ remain the same, so clearly Property~\ref{prop:vertex_side} holds.

\item There are no divided faces.

\item Graph $G$ had no cutting pairs, so all $\calP$-lists as empty.
Hence Property~\ref{prop:cutting_pair} holds. 

\item The only vertices that could be new to a side are the ones 
that shared a face with $(\W,\X)$ in~$G$.
We should therefore scan along the entire side $S'_{\rs}$ of $G'$.%
\footnote{``$\rs$'' is a shortcut for ``the index $i$ such 
that $\W=c_i$, i.e., side $S'_i$ is the right side''.  Since this
can be looked up in $O(1)$ time from the location of $\U$ and
$\X$ in $\calC$, we use this convenient shorthand for this and
the other sides. }
All vertices in $S'_{\W\X}\setminus \{\W,\X\}$ are different
from $\W,\X$, hence new to $S'_{\rs}$.
\end{enumerate}

\subsubsubsection{Case~\ref{case:2} (Fig.~\ref{fig:case2-app}).} 

\begin{figure}[t]
\centering
\subcaptionbox{Case~\ref{case:2}}{\includegraphics[page=1,scale=.72]{case2}}
\hfil
\subcaptionbox{The path~$P_t$}{\includegraphics[page=2,scale=.72]{case2}}
\hfil
\subcaptionbox{The path~$P_b$}{\includegraphics[page=3,scale=.72]{case2}}
\hfil
\subcaptionbox{The path~$P$}{\includegraphics[page=4,scale=.72]{case2}}
\repeatcaption{fig:case2}{\casetwoCaption}
\label{fig:case2-app}
\end{figure}

\begin{enumerate}[label=(\Alph*)]
\item If Case~\ref{case:1} does not apply, then we check the list of cutting
pairs $\calP(\ls,\rs)$;  Case~\ref{case:2} applies if
and only if it is non-empty. 

\item Assuming $\calP(\ls,\rs)$ is non-empty, let $f^*$ be its
last entry (closest to $\V$).
Inspect $\calF(f^*,\ls)$ and $\calF(f^*,\rs)$ to find the
vertices $v,w$; taking the ones closer to the bottom  if there
are two possibilities.  
Split $G$ into $G_b$ and $G_t$ along the noose through~$v$,~$f^*$, and~$w$.
If $(v,w)$ was not an edge of $G$, then also add the virtual edge $(v,w)$ to 
split $f^*$ into two parts,~$f^*_b$ and~$f^*_t$, in the subgraphs.

\item $G_b$ retains corners $\V,\W$, and $G_t$ retains corners $\U,\X$;
both gain new corners~$v,w$.  We replace corners in $\calC$ (in this
and all other cases) so that the retained corners keep their spots and
the new corners  fill the emptied spots in ccw order; then, 
Property~\ref{prop:vertex_side} holds.

\item We must argue that $f^*$ loses no side incidences.
Face $f^*$ was incident
to both the left and the right side in $G$, and the same holds for both
parts in both subgraphs.  
Further,~$f^*$ is incident to the top side of~$G$
if and only if $f^*_t$ is incident to the top side of $G^+_t$, and~$f^*_b$ 
is always incident to the top side of $G^+_b$.  
Face~$f^*$ is incident to the bottom side of $G$
if and only if~$f^*_b$ is incident to the bottom side of~$G^+_b$, and
$f^*_t$ is always incident to the bottom side of~$G^+_t$.    So
all side incidences are retained for the parts of $f^*$ 
(and possibly there are new ones).

\item We chose the cutting pair such that $G_b^+$ has no other cutting pairs,
so the $\calP$-lists for $G_b^+$ are initialized empty.
Subgraph $G_t^+$ inherits the $\calP$-lists from $G$, 
except that we remove face $f^*$ from
$\calP'(\ls,\rs)$
if it had only one vertex each in $\calF(f^*,\ls)$ and $\calF(f^*,\rs)$.

\item Vertices $v,w$ may or may not be new to the bottom and the top side 
in $G_t$ and $G_b$, respectively.
So we scan along the entire sides~$S'_{\bs}$ in $G_t$ and $S'_{\ts}$ in $G_b$,
which both consist of exactly the edge $(v,w)$.
Since we scan along entire single-edge sides,
Properties~\ref{prop:scan_side} and \ref{prop:cutting_pair}(i) hold.
\end{enumerate}

\subsubsubsection{Case~\ref{case:3} (Fig.~\ref{fig:case3-app}).} 

\begin{figure}[t]
\centering
\subcaptionbox{}{\includegraphics[scale=.75,page=2]{case3}}
\hfil
\subcaptionbox{}{\includegraphics[scale=.75,page=8]{case3}}
\hfil
\subcaptionbox{}{\includegraphics[scale=.75,page=5]{case3}}
\hfil
\subcaptionbox{}{\includegraphics[scale=.75,page=6]{case3}}
\repeatcaption{fig:case3}{\casethreeCaption}
\label{fig:case3-app}
\end{figure}

\begin{enumerate}[label=(\Alph*)]
\item To test whether Case~\ref{case:3} applies, we check whether $\calP(\rs,\ts)$
is non-empty.  Say it is, and let $f^*$ be its first face.
By the order of faces in $\calP(\rs,\ts)$, Case~\ref{case:3} can be applied
if and only if $f^*$ does not contain $\X$, which we can learn
in constant time from $\calF(f^*,\ts)$.

\item We find the cutting pair $\{y,w\}$ from $\calF(f^*,\ts)$
and $\calF(f^*,\rs)$.   
If either of these two lists contains two vertices, then we use the one that is clockwise
later on the outer face.  The resulting pair $\{w,y\}$ satisfies the
assumptions on the pair by the order of $\calP(\rs,\ts)$.
Split $G$ into~$G_b$ and $G_t$ along the noose through $y$, $f^*$, and $w$.
If $(y,w)$ was not an edge of~$G$, then also split~$f^*$ into 
two parts $f^*_b$ and $f^*_t$ in the subgraphs  $G_b^+$ and $G_t^+$, respectively.
We recurse in~$G_b^+$ first, which tells us whether Case~\ref{case:3a} or~\ref{case:3b} applies,
i.e., whether to insert $(y,w)$ into $G_t$ or not.  
In Case~\ref{case:3b}, we must further delete $w$ if it had degree 1 in $G_t$.

\item $G_b^+$ retains corners $\U,\V,\W$ and adds $y$ as its fourth corner.
In Case~\ref{case:3a}, $G_t^+$ retains corner~$\X$, and adds $y$, $w$, and again $w$.
In Case~\ref{case:3b-1}, $G_t$ retains corner $\X$ and adds $y$, its neighbor $z$ on
the outer face, and the vertex $w$.  In Case~\ref{case:3b-2}, $G_t'$ retains corner $\X$
and adds $y$, $z$, and the neighbor $x$ of $w$ as corners.

\item Let us first consider $f^*_b$.
Subgraph $G_b^+$ has corners $\U,\V,\W$, and $y$, with the same left and
bottom side as $G$.  Face $f^*$ is adjacent to the top and right side
of $G$, and the same holds for~$f^*_b$ in~$G_b^+$ due to vertex $y$.
Face $f^*$ cannot
have been adjacent to the left side of $G$ (else we would apply Case~\ref{case:2}).
If face $f^*$ was adjacent to the bottom side, then this must hence
have happened at $\W$, which makes $\{y,\W\}$ a cutting pair. By our
sorting of $\calP(\rs,\ts)$ hence $w=\W$, which means that $f^*_b$ then
is also adjacent to the bottom side at $w$.
So $f^*_b$ retains all side incidences that $f^*$ had.

As for $f^*_t$, this is involved in a recursion only if Case~\ref{case:3a} applies (else
it becomes part of the outer face).  Here, the corners of $G_t^*$ become
$\{\X,y,w,w\}$, which means that face $f_t^*$ touches all four sides since
it is incident to $y$ and $w$.

\item Observe first the status of the $\calP$-lists in $G$.  $\calP(\ls,\rs)$
was empty, else Case~\ref{case:2} would have applied.  
Any face in $\calP(\rs,\ts)$  is either~$f^*$ or belongs to~$G_t$, by our choice of~$f^*$.
Any face in $\calP(\ls,\ts)$  belongs to $G_b^+$ because $f^*$ has no
vertex on the left side (else Case~\ref{case:2} would apply).
Any face in $\calP(\bs,\ts)$  is incident to either $\V$ or $\W$.  Those
incident to $\V$ belong to $G_b$.  If any are incident to $\W$, then $w=\W$, 
these faces belong to $G_t$, and the first of them is $f^*$.  
So if $w=\W$, then we split $\calP(\bs,\ts)$
at the occurrence of $f^*$.  The $\calP$-lists for $G'$ can now be initialized
according to these insights, inheriting the list into the appropriate subgraph 
(after removing $f^*$ as needed) and initializing all other $\calP$-lists as empty.

\item In $G_b^+$, no new vertices become exterior, but $y$ is
new to the right side.  So we scan along edge $(w,y)$, which is part
of the right side of $G_b^+$.    Note that this fits
Property~\ref{prop:scan_side}(ii)  and Property~\ref{prop:cutting_pair}(iii).

For $G_t$, observe that $S'_{w\X}$ and $S'_{\X{y}}$
are part of their corresponding sides and need not be scanned, but
we scan the other two sides completely.    In Case~\ref{case:3a}, this
means that we scan $S'_{yw}$ and~$S'_{ww}$; since $(y,w)$ is an edge,
Properties~\ref{prop:scan_side}(i) and \ref{prop:cutting_pair}(i) hold.

In Case~\ref{case:3b-1}, we scan along side $S'_{yz}$ and $S'_{zw}$.  The former
is a single edge, 
so Properties~\ref{prop:scan_side}(i) and \ref{prop:cutting_pair}(i) hold.
The latter may have three or more vertices,
but all vertices other than $w$ were interior in $G$, hence are new
to the side.  So Property~\ref{prop:scan_side}(i) and
Property~\ref{prop:cutting_pair}(ii) hold, since $z$ is new to $S'_{zw}$.
The argument is the same for Case~\ref{case:3b-2}, replacing~`$w$' by~`$x$'.
\end{enumerate}

\subsubsubsection{Case~\ref{case:3'}.}
If Case~\ref{case:3} does not apply, then we test for Case~\ref{case:3'} 
by taking the last element $f^*$ of $\calP(\ls,\ts)$ and checking whether it
contains $\U$.  The treatment of this case is symmetric to Case~\ref{case:3}.

\subsubsubsection{Case~\ref{case:4} (Figs.~\ref{fig:case4a-app} and \ref{fig:case4b-app}).}

\begin{figure}[t]
\centering
\subcaptionbox{}{\includegraphics[page=1,scale=.75]{necklace}}
\hfil
\subcaptionbox{}{\includegraphics[page=2,scale=.75,trim={8mm 0 0 0},clip]{necklace}}
\hfil
\subcaptionbox{}{\includegraphics[page=3,scale=.75]{case4a}}
\hfil
\subcaptionbox{}{\includegraphics[page=4,scale=.75]{case4a}}
\repeatcaption{fig:case4a}{\casefouraCaption}
\label{fig:case4a-app}
\end{figure}

\begin{figure}[t]
  \centering
  \begin{subfigure}[b]{.3\textwidth}
    \centering
    \includegraphics[page=4,scale=.75]{case4b}
    \caption{}
  \end{subfigure}
  \hfil
  \begin{subfigure}[b]{.3\textwidth}
    \centering
    \includegraphics[page=9,scale=1]{case4b}
    \bigskip
    \includegraphics[page=11,scale=1]{case4b}
    \caption{}
  \end{subfigure}
  \hfil
  \begin{subfigure}[b]{.3\textwidth}
    \centering
    \includegraphics[page=7,scale=.75]{case4b}
    \caption{}
  \end{subfigure}
\repeatcaption{fig:case4b}{\casefourbCaption}
\label{fig:case4b-app}
\end{figure}

\begin{enumerate}[label=(\Alph*)]
\item If none of the previous cases applies, then we test condition
\starcondition by walking from $\W$ along the outer face to find out
whether to apply Case~\ref{case:4a} or Case~\ref{case:4b}.

\item Assume first that
we have to apply Case~\ref{case:4a}.  We then have to find a leftmost necklace
$\calN=\langle x_0{=}\U,\ldots,x_s{=}\W\rangle$, which is non-trivial.
We determine the vertices in order $x_s,x_{s-1},\dots,x_1$.  
Now do a left-first-search, starting from~$\W$ 
(see, e.g., de Fraysseix et al.~\cite{FMP95} for more on left-first searches).  However, we
search not only along edges, but also along all face-adjacent vertices,
and we advance the search only from vertices that are face-adjacent
to~$\X$.  See Algorithm~\ref{alg:necklacescana}.  We do some special
handling once we found a face containing $\U$, because this face (which
becomes face $f_1$ of the necklace) is already incident to the top side
and (as we will see) should not get scanned.

\begin{algorithm}[t]
  \caption{\protect\necklacescana}
  \label{alg:necklacescana}
  
  $(\fprev,x)\leftarrow (\mbox{outer face},\W)$ \;
  initialize the necklace $\calN$ with $\langle \fprev,x \rangle$ \;
  \While{$x \neq \U$}{
    \If{$x$ is not face-adjacent to $\U$}{
      \ForEach{face $\fnext$ incident to $x$, starting after $\fprev$ in clockwise order}{
        \ForEach{vertex $x'$ on $\fnext$, starting after $x$ in clockwise order}{
          \If{$x'$ is face-adjacent to $\X$}{
            append $\fnext,x'$ to $\calN$ \;
            $(\fprev,x) \leftarrow (\fnext,x')$ \;
            \Continue with next while iteration\;
          }
        }
      }
    }
    \Else(\tcp*[h]{Have reached the face of $\calN$ incident to $\U$.}){
      $\fnext\leftarrow$ last face in $V(x,\ts)$ \;
      \If{$\fnext$ does not contain $\X$}{
        append $\fnext,\U$ to $\calN$ \;
      }
      \Else (\tcp*[h]{to be leftmost, include all vertices along $\fnext$.}) {
        \Repeat{$x=X$} {
           $x'\leftarrow$ vertex after $x$ in clockwise order on $\fnext$ \;
           append $\fnext,x'$ to $\calN$ \;
           $x\leftarrow x'$\;
        }
      }
	{\bf return} the reverse of $\calN$ \;
    }
  }
\end{algorithm}

Since we perform a left-first-search, we find the leftmost path (where `path'
allows for face-adjacencies) that connects $\W$ to $\U$ along vertices that
are face-adjacent to $\X$.  Any $\W$-necklace also defines such a path, so
we find the leftmost $\W$-necklace.

We claim that the total time spent on finding $\calN$ is no more than
the time spent for updating the data structures for graph $G_0^+$ later.
First, we spend $O(1)$ time per vertex that we found,
hence $O(s)$ time overall.   Since $G_0^+$ has $s$ new edges on the
outer face, this is accounted for by the $O(|E_X(G_0^+)|)$ time that we
already counted for $G_0^+$.  Second, we 
spend $O(\deg(f'))$ time on scanning an interior face $f'$ incident to vertex $x$.
This happens only if $f'$ contains neither $\U$ nor $\X$, because otherwise
we immediately find the next vertex of the necklace, and this was counted
above.  Since $x$ becomes a vertex on the top side
of $G_0^+$, and $f'$ contains neither $\U$ nor $\X$ (and hence no vertex
on the top side of $G$), it is newly adjacent to the top side and hence
will get scanned during the update.

\medskip
For Case~\ref{case:4b}, finding the $\V$-necklace is done similarly, 
except that we search from the top downward, rather than from the bottom 
upward, 
which requires exchanging `clockwise' by `ccw' in the order of scan.
See Algorithm~\ref{alg:necklacescanb}.
Recall that $x_1$ is not arbitrary in Case~\ref{case:4b};  it is the clockwise
neighbor after $\X$ in~$L(\X_\U)$, and we want the leftmost necklace
containing it.

\begin{algorithm}[t]
  \caption{\protect\necklacescanb}
  \label{alg:necklacescanb}
  
  initialize the necklace $\calN$ with $\langle \mbox{outer face},\X_\U \rangle$ \;
  $(\fprev,x)\leftarrow$ first two entries in $L(\X_\U)$ \tcp*{This identifies $x_1$ correctly.} 
  append $\fprev,x$ to $\calN$ \;
  \While{$x \neq \V$}{
    \If{$x$ is not face-adjacent to $\V$}{
      \ForEach{face $\fnext$ incident to $x$, starting after $\fprev$ in ccw order}{
        \ForEach{vertex $x'$ on $\fnext$, starting after $x$ in ccw order}{
          \If{$x'$ is face-adjacent to $S_{\rs}$}{
            append $\fnext,x'$ to $\calN$ \;
            $(\fprev,x) \leftarrow (\fnext,x')$\;
            \Continue with next while iteration\;
          }
        }
      }
    }
    \Else(\tcp*[h]{Have reached the face of $\calN$ incident to $\V$.}){
      $\fnext\leftarrow$ first face in $V(x,\bs)$ \;
      \If{$\fnext$ does not contain vertex of $S_{\rs}$}{
        append $\fnext,\U$ to $\calN$ \;
      }
      \Else (\tcp*[h]{to be leftmost, include all vertices along $\fnext$.}) {
        \Repeat{$x=\V$} {
           $x'\leftarrow$ vertex after $x$ in ccw order on $\fnext$ \;
           append $\fnext,x'$ to $\calN$ \;
           $x\leftarrow x'$\;
        }
      }
	{\bf return} $\calN$ \;
    }
  }
\end{algorithm}

Since we are doing a (modified) right-first-search from $x_1$, we will find
the leftmost necklace that includes $x_1$ as desired.  
The running time is $O(s)+O(\sum_{f'} \deg(f'))$,
where we sum over all those interior faces $f'$ that are incident to
$x_i$ (for some $1\leq i\leq s-1$), and did not contain~$\V$ or $\W$
(else we immediately have found the next vertex of the necklace).
Any such face~$f'$ was hence not incident to the bottom side of $G$, but
is incident to the side $\langle x_s,\dots,x_1\rangle$
that becomes the corresponding  side in $G_0^+$.  
So, as above, the time for the necklace scan is a constant overhead for
the time for the scanning done to update the data structures.

\medskip

Note that, in both cases,
once we found the necklace $\langle x_0,f_1,x_1,f_2,\dots,f_{s},x_s\rangle$,
we can also easily find the vertices $t_i$ for $i=1,\dots,s-1$ ($t_0$ 
and $t_s$ are determined from the case).  Namely, to find~$t_i$,
let $f'$ be the last face in $\calV(x_i,\rs)$ (this exists by choice
of~$x_i$); then,~$t_i$ is one of the vertices in $\calF(f',\rs)$, using the
one that comes ccw later on the outer face if there are two.
This gives all the required information to determine the subgraphs to
recurse in.

\item In Case~\ref{case:4a},  graph $G_0^+$ retains corners $\U,\V,\W$, and
adds $\W$ again as a corner.    In Case~\ref{case:4b}, graph $G_0^+$
retains corners $\U,\V$ (though their order ``rotates''; 
$\U$ was the top-left corner in $G$ but becomes the top-right
corner in $G_0^+$) and adds $x_1$ and $x_0=\X_\U$ as corners.

Any subgraph $G_i$ (for $i=1,\dots,s$ and in both Case~\ref{case:4a} and~\ref{case:4b})
has four corners $x_{i-1},x_i,t_i,t_{i-1}$.  We let $t_{i-1}$ take
the place of $\X$ in $\calC$ (with this, side $S'_{t_i,t_{i-1}}$ of
$G_i$ corresponds to side $S_{\W\X}$ of~$G$ as required for
Property~\ref{prop:vertex_side}).  

As always, corners that are neither retained nor explicitly assigned
a place in $\calC$ are filled in as to maintain the ccw order in $\calC$.

\item We need to argue that any face $f$ that was divided in Case~\ref{case:4} 
does not lose any side incidences.  

Assume first that some part of $f$ belongs to graph $G_0^+$ in Case~\ref{case:4a}.  
	Any incidence of $f$ with the left or bottom side of $G$
	is carried over since $G_0^+$ inherits these sides.
	So we only have to worry about incidences of $f$ with the right or 
	top side of $G$.
	But $f$ cannot be incident to~$\X$, else by Claim~\ref{cl:virtual}
	it would have no part in $G_0^+$.  So if $f$ was incident to the 
	top side,
	then by~\starcondition it must contain $\U$, and it retains this vertex on the top side
	of~$G_0^+$.  Similarly, if $f$ was adjacent to the right side, then by \starcondition
	it must contain $\W$, and it retains this vertex on the right side of $G_0^+$.

Assume next that $f$ has parts in graph $G_0^+$ in Case~\ref{case:4b}.  
	Any incidence of $f$ with the left side of $G$
	is carried over since $G_0^+$ inherits this side.
	Face $f$ cannot contain any vertex on the right side of $G$ (and in 
	particular not $\X$ or $\W$), else by 
	Claim~\ref{cl:virtual} it would have no part in $G_0^+$.   So~$f$ was
	not incident to the right side of~$G$.  Furthermore, if $f$ was incident
	to the top side of $G$, then this incidence happened at a vertex $z\neq \X$,
	so $z$ is also in~$G_0^+$ and hence $f$ retains this side incidence.  If 
	$f$ was incident to the bottom side $(\V,\W)$ of~$G$, 
	then it is incident to $\V$ which belongs to $G_0^+$,
	and so $f$ retains this side incidence. 

Assume next that $f$ has parts in graph $G_i$ for some $1\leq i\leq s$,
	and $t_{i-1}=t_i$.  
	We recurse in $G_i$ only if we apply the substitution trick,
	in case of which edge $(x_{i-1},x_i)$ is removed.  In particular, the 
	face $f_i$ that was split along $(x_{i-1},x_i)$ has no part in the 
	graph $G_i^+:=G_i\cup (x_{i-1},t_i)\cup (x_i,t_i)$ 
	in which we recurse.  
	So $f\neq f_i$, hence $f$ must have been split by $(x_{i-1},t_i)$ or 
	$(x_i,t_i)$ and $f$ contains $t_i$.  Because $t_i$ becomes both the 
	top-right and the bottom-right corner in $G_i^+$, face~$f$ (which is 
	also adjacent to one of the left corners~$x_{i-1}$ and~$x_i$)
	is incident to all four sides of $G_i^+$.  In particular, no 
	side incidence has been lost.

Assume finally that $f$ has parts in graph $G_i$ for some $1\leq i\leq s$,
	and $t_{i-1}\neq t_i$.    (This in particular implies that we are in Case~\ref{case:4b}.) 
	We argue that the side incidences are preserved for
	the resulting graph $G_i^\sqsubset$; then, they clearly also hold 
	for $G_i^\asymp = G_i^\sqsubset \setminus (x_{i-1},x_i)$ which
	has even larger sides.

	Graph $G_i^\sqsubset$ has as its four sides the edges $(x_{i-1},t_{i-1})$,
	$(x_{i-1},x_i)$, $(x_i,t_i)$, and path~$S_{t_it_{i-1}}$.  Any face that was 
	divided and has parts in $G_i^\sqsubset$ contains one of these three edges, hence immediately is adjacent to
	two corners and with them three sides, and we only need to worry about the one remaining side.
	
	Assume that $f$ was divided by $(x_{i-1},x_i)$, in case of which we only need
	to worry if $f$ had vertices on the right side of $G$.   But then, by Claim~\ref{cl:virtual},
	edge $(x_{i-1},x_i)$ was not virtual and~$f$ was not divided by this edge, a contradiction. 

	Assume that $f$ was divided by $(x_{i-1},t_{i-1})$, in case of which we only need
	to worry if $f$ had a vertex $z$ on the bottom side of $G$.   
  If $z=\V$, then $\{z,t_{i-1}\}$ would be a cutting pair by $t_{i-1}\neq t_i$, 
  and we would have applied Case~\ref{case:2}.  So $z=\W$, which implies
	$t_i=\W$ (else we would have a cutting
	pair within the right side of $G$). So $f$ contains $t_i$, and hence is incident 
  to all sides of $G_i^\sqsubset$. 

Assume that $f$ was divided by $(x_{i},t_{i})$, in case of which we only need
to worry if $f$ had a vertex $z$ on the top side of $G$.   We claim that $f$ 
contains $\X$.  Namely, if $z\neq \X$, then $\{z,t_{i}\}$ would be a cutting pair 
by $t_{i-1}\neq t_i$.  Since we did not apply Case~\ref{case:3}, therefore $f$ contains $\X$.
This implies $t_{i-1}=\X$ (else we would have a cutting
pair within the right side of $G$). So $f$ contains $t_{i-1}$, and hence is incident 
to all sides of $G_i^\sqsubset$.

\item Since we applied neither Case~\ref{case:2} nor Case~\ref{case:3}, there 
are only two possible cutting pairs in~$G$: The neighbors of $\U$, 
if $\deg(\U)=2$, and the neighbors of $\X$, if $\deg(\X)=2$.

If $\deg(\U)=2$ or $\deg(\X)=2$, then the top side is not a single edge
(else we would be in Case~\ref{case:1} or~\ref{case:2}), so Case~\ref{case:4b} 
applies.  If $\deg(\U)=2$, then 
its neighbors lie on the left and top side of $G$
(recall that we assumed $\U\neq \V$).
So they belong to $G_0^+$ and form a cutting pair there.
Initialize
the appropriate $\calP$-lists of $G_0^+$ with the interior face at $\U$.

If $\deg(\X)=2$, then let $f_\X$ be its unique interior face. 
Let~$\X_\U$ be the neighbor of~$\X$ on the top side; this exists because $\U\neq\X$.
If $\X=\W$, then the neighbors of~$\X$ would be $\V$ and $\X_\U$ and form a cutting
pair on the left and top side.  Since Case~\ref{case:3'} does 
not apply, therefore $f_\X$ also contains $\U$, and therefore $\X_\U=\U$ (else
there is a cutting pair within the top side).  But then $(X,U)$ must
be an edge (else there would be a cutting pair within the left side),
which means that the outer face is triangle and we are in the base case.
So we know $\X\neq \W$.  Let 
$\X_\W$ be the neighbor of $\X$ on the right side (possibly $\X_W=\W$).
Cutting pair $\{\X_\U,\X_\W\}$ is inherited by the subgraph~$G_i$ that contains
the edge $(\X_\W,\X)$.  We initialize
the appropriate $\calP$-lists of $G_i$ with the face $f_\X$.

All $\calP$-lists not initialized by the above are initialized empty.

\item We scan in all subgraphs along all entire sides that contain newly 
exterior edges.  Such sides are single-edge sides, with the exception of side
$\langle x_0,\dots,x_s \rangle$ (in Case~\ref{case:4a}) and
$\langle x_1,\dots,x_s \rangle$ (in Case~\ref{case:4b}).  But $x_1,\dots,x_{s-1}$
were interior and hence new to the side, so 
Property~\ref{prop:scan_side}(i) holds.
Also, the $\calP$-lists are empty in Case~\ref{case:4a}, and in Case~\ref{case:4b} all scanned
sides are single edges or the side
$\langle x_1,\dots,x_s \rangle$, for which corner $x_1$ is new to the side.
So Property~\ref{prop:cutting_pair} holds.
\end{enumerate}

\bigskip
With this, we have explained how to do all steps in all cases.  Note in
particular that computing the necklace takes no more time than what was
spent on scanning vertices and faces.  Note further that
Property~\ref{prop:sides} holds, hence the running time is linear
(in the degrees of vertices and faces touched by the output path $P$)
as desired.
		
\section{Finding binary spanning trees}
\label{sec:3trees}

In this section, we prove Theorem~\ref{thm:3tree}, i.e., we show that
any 3-connected planar graph has a binary spanning tree and it can
be found in linear time.  
This was known before \cite{Bie-Barnette,Strothmann-PhD}, but we give
a different proof here as a warm-up for Section~\ref{sec:2walks} and
because our binary spanning tree has some other interesting properties.
We need a slightly
stronger statement for the induction step to go through; in particular,
we must be allowed to fix one leaf and one edge of the tree, as long as
they are exterior to $G$.

\begin{lemma}\label{lem:binary-tree}
Let $G$ be a plane graph with distinct vertices $\U,\X$ on
the outer face.  Let $(\V,\W)\neq(\U,\X)$ be an edge on the outer face.
If $G$ is corner-3-connected with respect to $\U,\V,W,\X$,
then it has a binary spanning tree $T$ such that when rooting $T$ at $\U$
\begin{itemize}
\item $\X$ is a leaf and $(\V,\W)$ is an edge of $T$,
\item a vertex $v$ has two children in $T$ only if it is an interior
	vertex and part of a cutting triplet $\{v,w,x\}$ of $G$;
	moreover, one of the subtrees of $v$ contains exactly the
	vertices interior to $\{v,w,x\}$.\
\end{itemize}
The tree $T$ can be found in linear time.
\end{lemma}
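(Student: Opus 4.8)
The plan is to build the tree from the $\Tint$-path of Lemma~\ref{lem:T_int} by recursing into its bridges. First I would invoke Lemma~\ref{lem:T_int} to obtain a $\Tint$-path $P$ from $\U$ to $\X$ through $(\V,\W)$. Since $P$ visits every exterior vertex, each $P$-bridge consists of interior vertices only; moreover, since $G$ is internally $3$-connected, every cutting pair is exterior, whereas the representative of a bridge is interior, so no bridge can have one or two attachment points. Hence each $P$-bridge $C$ has exactly three attachment points $q_1,q_2,q_3$, they form a cutting triplet whose interior is precisely $C$, and $r:=\sigma(C)$ is one of them and interior; relabel so that $r=q_1$.

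Starting from $P$ itself (a path, hence a tree of maximum degree~$2$ that, rooted at $\U$, has $\X$ as a leaf and uses $(\V,\W)$), I would attach to each bridge's representative a recursively built subtree. For a bridge $C$ with triplet $q_1,q_2,q_3$, let $C^+$ be $C$ together with the (possibly virtual) triangle on $q_1,q_2,q_3$, drawn so that this triangle is the outer face and all of $C$ is interior; by Lemma~\ref{cl:noose}, $C^+$ is corner-3-connected with corners $q_1,q_2,q_3$ (equivalently $q_1,q_2,q_3,q_3$). One checks that $\U'=q_1=r$ and $\X'=q_3$ are distinct vertices on the outer face and $(\V',\W')=(q_2,q_3)$ is an outer-face edge different from $(\U',\X')$, so I may apply Lemma~\ref{lem:binary-tree} to $C^+$ with these parameters, obtaining a binary spanning tree $T_{C^+}$ rooted at $r$. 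Then $q_3$ is a leaf, and since $(q_2,q_3)$ is a tree edge its parent is $q_2$; as $q_2$ is exterior in $C^+$ it has at most one child, hence exactly the child $q_3$. Likewise $r$ is the exterior root, so it has exactly one child, and that child is neither $q_2$ nor $q_3$ (otherwise $T_{C^+}$ would have only three vertices), hence it lies in $C$. So $q_2,q_3$ form a pendant path $\langle c_q,q_2,q_3\rangle$ with $c_q\in C$, and deleting $q_2$ and $q_3$ from $T_{C^+}$ leaves a binary spanning tree of $C\cup\{r\}$, using only real edges, in which $\deg(r)=1$; this is the subtree I would attach at $r$.

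Gluing all these subtrees onto $P$ produces $T$. To finish I would verify: $T$ is a spanning tree because distinct bridges are vertex-disjoint and meet $P$ only at their representatives; $T$ is binary because a vertex of $P$ is the representative of at most one bridge (by injectivity of $\sigma$) and so gains at most one edge beyond $P$, while inside bridges we only delete vertices; $\X$ remains a leaf and $\U$ has exactly one child since both are exterior and hence never representatives; and $(\V,\W)$ is a tree edge since $P$ uses it. For the structural clause, a vertex with two children that lies on $P$ must be some representative $r$, which is interior and lies in the cutting triplet $\{q_1,q_2,q_3\}$, one of whose subtrees is exactly $C$, the interior of that triplet; a vertex with two children inside a bridge inherits the statement from the recursion on $C^+$, using that a cutting triplet of $C^+$ together with its interior part (which, being interior to $C^+$, avoids $q_1,q_2,q_3$) is a cutting triplet of $G$. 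The recursion is well-founded since $|C^+|=|C|+3<|V(P)|+|C|\le|V(G)|$, because $|V(P)|\ge4$ whenever a bridge exists ($V(P)$ contains the $\ge3$ exterior vertices and the interior vertex $r$); the base case is a bridge-free path, where $T=P$. Linear time follows from the bound $O(\sum_{f\in F(P)}\deg(f))$ of Theorem~\ref{thm:time_complexity} applied to each recursive $\Tint$-path, together with a charging argument showing that each face of $G$ is scanned $O(\deg(f))$ times in total over all recursive calls.

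I expect the splicing step to be the main obstacle: showing that the two non-representative attachment points of a bridge always sit at the bottom of a pendant path of the recursively computed tree, so that they can be removed without disconnecting it or increasing any degree. The point is that this is exactly bought by the two extra freedoms in Lemma~\ref{lem:binary-tree} — a prescribed leaf $\X'$ and a prescribed tree edge $(\V',\W')$ — together with the fact (read off from the same lemma) that exterior vertices never have two children. A secondary technical chore is checking that cutting triplets, with their interior components, transfer faithfully from $C^+$ back to $G$, so that the structural clause is genuinely inherited.
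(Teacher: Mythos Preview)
Your proposal is correct and follows essentially the same approach as the paper: find a $\Tint$-path, then for each bridge recurse on a corner-3-connected supergraph with the representative as root, strip off the two non-representative attachment points (which sit at the end of a pendant path by the leaf/edge constraints), and graft the result onto $P$. The only difference is cosmetic: the paper takes $C^+=G[C]\cup\{(x,w),(w,y)\}\setminus\{(x,y)\}$ (the substitution-trick graph, missing one triangle edge) and argues well-foundedness via ``fewer vertices or bigger outer face'', whereas you add the full triangle and argue $|C^+|<|G|$ directly from $|V(P)|\ge 4$; both are valid, and the subsequent extraction of $T_{C^+}\setminus\{q_2,q_3\}$ is identical.
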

\begin{proof}
We proceed by induction on the number of vertices, with an inner
induction on the number of exterior vertices.  
In the base case, $G$ is a triangle, say $\langle X{=}U,W,Y\rangle$,
and the path $\langle X,W,Y \rangle$ satisfies all properties.  Now, 
assume that $n\geq 4$, and let $P$ be a $\Tint$-path from $X$ to $Y$ that
contains $(\V,\W)$.  By Theorem~\ref{thm:linearTime}, we know that this path
can be found in $O(\sum_{f\in F(P)} \deg(P))$ time.

We expand $P$ into a binary spanning tree $T$ by expanding each $P$-bridge $C$ into
a subtree to be attached at its representative $\sigma(C)$; see Fig.~\ref{fig:binary-tree}.  Formally,
let $\{x,w,y\}$ be the attachment points of $C$, with $x=\sigma(C)$.
Define $C^+$ as in the substitution trick, i.e., $C^+=G[C] \cup  \{(x,w),(w,y)\}\setminus \{(x,y)\}$.
This satisfies $c3c(x,w,y)$ and either has fewer vertices or has a bigger
outer face. So, by induction, we can find a binary spanning tree~$T_C$ 
in $C^+$ that, when rooted at $x$, has $y$ as a leaf and contains
edge $(w,y)$; in particular,~$w$ is the parent of $y$.  Therefore ,
$T_C\setminus \{w,y\}$ is a spanning tree of $C\cup \{x\}$, and
$x$ has only one child in it since it is exterior in $C^+$.  We merge
this subtree into $P$ at $x$.  Repeating this for all $P$-bridges gives
the final tree $T$.  Clearly, the first property holds for $T$ since it
held for~$P$ already.  If a vertex $v$ has two children in $T$, then
necessarily $v=\sigma(C)$ for some $P$-bridge~$C$.   Therefore, $v$ is
interior since $P$ is a $\Tint$-path.  The rest of the second property
follows immediately since one subtree of $v$ was obtained by merging
$T_C\setminus \{w,y\}$.

\begin{figure}[t]
\centering
{\includegraphics[scale=.75,page=1]{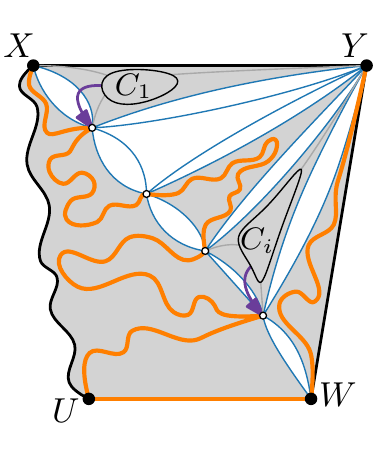}}
\hfil
{\includegraphics[scale=.75,page=3]{binary-tree}}
\hfil
{\includegraphics[scale=.75,page=2]{binary-tree}}

\caption{Illustration for Lemma~\ref{lem:binary-tree}.}
\label{fig:binary-tree}
\end{figure}

As for the running time, observe that to find $T$ we first find a $\Tint$-path $P$
in $G$, then a $\Tint$-path $P_C$ for each $P$-bridge $C$, then a $\Tint$-path
for each $P_C$-bridge, and so on recursively.  Crucial for an efficient
running time is that the time to find $P$ is proportional to
$\sum_{f\in F(P)} \deg(f)$, i.e., faces that are strictly interior to $C$
do not contribute to this running time.  Secondly, in the time for finding $P$
we can also initiate
all the data structures as they are needed for $C^+$ (because this is
the same graph as was used for the substitution trick).    This saves having
to re-initialize data structures by scanning when recursing into~$C^+$.
Therefore, exactly as for Theorem~\ref{thm:linearTime}, one argues that all
faces are scanned a constant time, giving an overall linear running time.
\end{proof}

\section{Finding 2-walks}
\label{sec:2walks}

In this section, we prove Theorem~\ref{thm:2walk}, i.e.,
we show that every 3-connected planar graph has a 2-walk.  
This was originally shown by Gao  and Richter \cite{GR94}
(with later improvements together with Yao~\cite{GRY95,GRY06}).
As in their proofs, we show instead the existence of a
{\em 2-circuit}, i.e., a circuit in the graph
that visits every vertex at least once and at most twice; a 2-walk
can then be obtained by deleting one edge.
Gao and Richter argue that a 2-circuit can be found by starting with
a $\TSDR$-path $P$ (with an extra edge added to close it to a circuit) and
then recursively finding a 2-circuit in each $P$-bridge $C$ and attaching it
at the representative of~$C$.

While this is in spirit very similar to how we found binary spanning trees
(Section~\ref{sec:3trees}),
and so one would expect that this can be done in linear time using our algorithm
for Tutte paths,
there are some details that make this more complicated.  In particular, Gao
and Richter needed the ability to find a Tutte path in an internally
3-connected graph (which they called a \emph{circuit graph}).    Not every
such graph satisfies corner-3-connectivity with at most~4 corners,
so that we cannot apply Lemma~\ref{lem:T_int} directly to find those
Tutte paths.  Therefore, we argue how to find the 2-circuits in linear time in
two steps: first we argue how to find Tutte paths in internally 3-connected graphs in linear
time, and then we repeat the proof of Gao, Richter, and Yao~\cite{GRY95} to find 2-circuits, and argue
that this also takes linear time.

\subsection{Internally 3-connected graphs}

We first expand Lemma~\ref{lem:T_int} to 
graphs that are internally 3-connected.
We consider here only the special case where $(\U,\X)$ is an edge;
the existence of a $\TSDR$-path could be shown even in the absence of
this edge, but if the edge exists, then we can restrict the representatives.

\begin{lemma}
\label{lem:circuit}
Let $G$ be a plane graph with distinct edges $(\U,\X)$ and $(\V,\W)$ on the outer face.
If $G$ is internally 3-connected, 
then it has a $\TSDR$-path $P$ that begins at $\U$, ends at $\X$, and contains $(\V,W)$.
Furthermore, $\U$ and $\W$ are not used as a representative.  This path can be found in time
$\sum_{f\in F(P)} \deg(f)$.
\end{lemma}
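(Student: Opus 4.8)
The plan is to reduce Lemma~\ref{lem:circuit} to Lemma~\ref{lem:T_int} by exploiting the standard ``decomposition into 3-connected components'' argument, exactly as in the passage ``From 3-connected to 2-connected'', but now taking care that we control the location of representatives. First I would handle the trivial case: if $G$ has no cutting pair, then $G$ is internally 3-connected and $3$-connected, so $G$ satisfies $c3c(\U,\V,\W,\X)$ (any cutting pair would have to lie on one side, but there is none), and Lemma~\ref{lem:T_int} applies directly; the resulting $\Tint$-path uses only interior vertices as representatives, so neither $\U$ nor $\W$ is used, and the running-time bound is the one from Theorem~\ref{thm:time_complexity}. So assume $G$ has cutting pairs. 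Build the SPQR-tree of $G$ in linear time~\cite{HT73,DBT89,GM00}, creating the $3$-connected components $C^+$ (with virtual edges) and a leaf node per edge; root it at the node of edge $(\U,\X)$. As in the earlier argument, for each non-root component $C^+$ let $\{\U_C,\X_C\}$ be the cutting pair shared with its parent; these are exterior in $C$ because $\U,\X$ are exterior in $G$.

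Next I would, for each component $C^+$, choose a third ``corner-edge'' $\alpha_C\neq(\U_C,\X_C)$ on the outer face: if the node of $(\V,\W)$ is a (strict) descendant of $C^+$, take $\alpha_C$ to be the virtual edge leading toward that descendant (it is exterior in $C$ since $(\V,\W)$ is exterior in $G$); otherwise take $\alpha_C$ to be an arbitrary exterior edge of $C$ other than $(\U_C,\X_C)$. Since $C^+$ is $3$-connected (or a triangle, or a single edge), $C^+$ satisfies $c3c(\U_C,v_C,w_C,\X_C)$ where $(v_C,w_C)=\alpha_C$, so by Lemma~\ref{lem:T_int} (or trivially if $C^+$ is a single edge or triangle) $C^+$ has a $\Tint$-path $P_C$ from $\U_C$ to $\X_C$ through $\alpha_C$. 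Then obtain $P$ by the same ``substitute paths of children at virtual edges'' process: start with the virtual copy of $(\U,\X)$, and repeatedly replace a virtual edge $(u,v)$ in the current path by $P_{C}$, where $C^+$ is the child at that virtual edge. Because of the choice of $\alpha_C$, every virtual edge on the path to $(\V,\W)$ eventually gets substituted, so $(\V,\W)$ ends up on $P$; and the process terminates with real edges. The $P$-bridges are of two kinds: whole $3$-connected components whose path was never substituted (these have exactly the two attachment points $\U_C,\X_C$), and $P_C$-bridges inside some component (these have at most three attachment points, as $P_C$ is a Tutte path). So $P$ is a Tutte path.

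For the system of representatives: a $P_C$-bridge inside a component keeps the interior representative it got from $P_C$; these are all distinct within $C$ and, being interior to $C$, are distinct across components, and they are never $\U$ or $\W$ (an interior vertex of a component is not an exterior vertex of $G$, hence not $\U$; and if $\W$ is interior to some component then $(\V,\W)$ is a virtual edge of that component and $\W$ is still exterior in the component, so again not used). For a whole-component $P$-bridge $C$ with attachment points $\{\U_C,\X_C\}$, I assign $\sigma(C)$ to be whichever of $\U_C,\X_C$ is \emph{not} on the outer face of $G$ --- at least one of them is interior to $G$ because $C^+$ has interior vertices of $G$ and $G$ is internally $3$-connected (so a cutting pair has at most two nontrivial cut components), hence $\{\U_C,\X_C\}$ cannot be a cutting pair all of whose nontrivial components avoid... more carefully: if both $\U_C,\X_C$ were exterior in $G$ and this component is ``below'' them, then, since $G$ is internally $3$-connected, the outer face stays connected, so the component contributes exterior vertices only if it contains an exterior path between $\U_C$ and $\X_C$; arguing as in the SPQR rooting (the root path $(\U,\X)$ is the unique component on the outer boundary between $\U$ and $\X$ not separated by a descendant's virtual edge) one sees that every non-root, non-substituted component is ``interior,'' so its attachment-pair endpoints need not both be exterior and in fact the component lies in the interior; pick the interior attachment point. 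This is the step I expect to need the most care --- matching the earlier prose, any non-root component $C$ that is not substituted is separated from the outer face of $G$ by its parent virtual edge unless that virtual edge is itself exterior; when it is exterior, $C$ still has only two attachment points and sits in a ``pocket'' of the outer face, and one checks it has an interior (of $G$) vertex among its attachment points or else it is a single edge and forms no bridge. Injectivity across all these pockets follows because distinct components occupy disjoint pockets. Finally, the linear running time: computing the SPQR-tree is linear~\cite{HT73}, each $P_C$ is computed in time $O(\sum_{f\in F(P_C)}\deg(f))$ by Theorem~\ref{thm:time_complexity}, and these face-sets are disjoint across components (each interior face of $G$ belongs to exactly one $3$-connected component), and every face counted is in $F(P)$ since $P$ visits all exterior vertices of each component; summing gives $O(\sum_{f\in F(P)}\deg(f))$ as claimed.
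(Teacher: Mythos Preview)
Your approach via the SPQR-tree is different from the paper's. The paper exploits that the top and bottom sides are single edges, so every cutting pair of $G$ lies within the left side $S_{\U\V}$ or within the right side $S_{\W\X}$. It then walks along each side, greedily removes a \emph{maximal} cut component at each cutting pair it meets, and assigns as representative the endpoint of the pair that is farther from $\U$ (on the left) resp.\ farther from $\W$ (on the right). After these removals the remaining graph satisfies $c3c(\U,\V,\W,\X)$ and Lemma~\ref{lem:T_int} applies; any virtual edge that the resulting $\Tint$-path uses is replaced by a recursively obtained $\TSDR$-path of the removed component.

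Your proposal has a genuine gap precisely at the step you already flag as delicate. In an internally 3-connected graph \emph{every} cutting pair is exterior; hence for any whole-component $P$-bridge $C$ both attachment points $\U_C,\X_C$ lie on the outer face of $G$, and your rule ``assign whichever of $\U_C,\X_C$ is not on the outer face'' is vacuous. The paragraph that follows does not repair this. Even granting that exterior representatives are allowed for a $\TSDR$-path, you still need a concrete injective assignment that avoids $\U$ and $\W$. Nothing in your scheme prevents, say, two non-substituted components hanging off cutting pairs $\{u_i,u_j\}$ and $\{u_j,u_k\}$ on the same side from both being assigned $u_j$; and nothing prevents a component whose cutting pair contains $\U$ (or $\W$) from being forced to use it. The paper's side-by-side greedy choice handles both issues at once: on the left side the representative is always $u_{j_k}$ with $j_1<j_2<\cdots$ (so injective and never equal to $\U=u_0$), and symmetrically on the right side the representative is never $\W$.
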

\begin{proof}
We may assume that $\U,\V,\W,\X$ occur in ccw order, the other case is symmetric.
Consider first the left side (with respect to the four corners $\U,\V,\W,\X$), 
and enumerate it as $\langle X{=}u_0,u_1,\dots,u_\ell{=}\V\rangle$.
There may now be cutting pairs within the left side.  Let $i_1$ be minimal such that
$\{u_{i_1},u_{j_1}\}$ is a cutting pair for some $j_1>i_1$,  and assume that $j_1$ has
been chosen maximal among all such cutting pairs.  Repeating, for $k=2,3,\dots$, let
$i_k\geq j_{k-1}$ be minimal such that $\{u_{i_k},u_{j_k}\}$ is a cutting pair for some $j_k>i_k$,
and choose $j_k$ to be maximal among all such cutting pairs. 

Now remove all these cutting pairs by removing, for each $k=1,2,\dots$ the cut component~$C_{k}$
of $\{u_{i_k},v_{i_k}\}$ that does not contain the
right side of $G$, and adding a virtual edge $(u_{i_k},u_{j_k})$ if it did not
yet exist in $G$.  Also set $\sigma(C_{k})=u_{j_k}$, i.e., assign as representative of~$C_j$ 
the vertex of the cutting pair that is closer to $\V$ along the left side.  Note in particular that 
any cutting pair receives a distinct representative on the outer face, and 
$\U$ will not be assigned as a representative.  (Neither will $\W$, since it is not on the left side.)
See Fig.~\ref{fig:circuit-graph}.

\begin{figure}[t]
\centering
{\includegraphics[scale=.75,page=1]{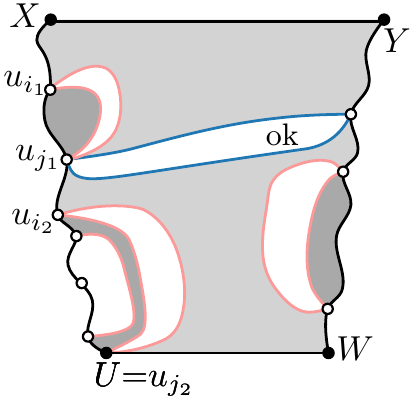}}
\hfil
{\includegraphics[scale=.75,page=3]{circuit-graph}}
\hfil
{\includegraphics[scale=.75,page=2]{circuit-graph}}

\caption{Illustration for Lemma~\ref{lem:circuit}.}
\label{fig:circuit-graph}
\end{figure}

Similarly, remove all cutting pairs on the right side, replacing the
cut component by a virtual edge and assigning as representative the vertex of
the cutting pair that is closer to~$\X$.
Again, all representatives are distinct and
$\U,\W$ will not be assigned as a representative.

Note that there can be no cutting pair within the top or bottom side, since these sides
are single edges $(\U,\X)$ and $(\V,\W)$.
Thus, after having removed cutting pairs on the left and right side, the resulting graph $G'$ is 
corner-3-connected with respect to $\U,\V,\W,\X$.  Apply Lemma~\ref{lem:T_int} to $G'$
to find a $\Tint$-path $P'$ of $G'$.  None of the representatives used by $P'$ conflicts
with representatives assigned to cutting pairs, because the latter are all exterior.

If $P'$ uses no virtual edge, then it can serve as the desired $\TSDR$-path.  Otherwise,
for every virtual edge $(u_{i_k},u_{j_k})$ used by $P'$, we replace this edge with
a path through $C_k$.  More precisely, recursively find a $\TSDR$-path $P_k$ in 
$C_k^+:=G[C_k]\cup\{(u_{i_k},u_{j_k})\}$
that begins at~$u_{i_k}$, ends at~$u_{j_k}$, uses some other exterior edge, 
and does not use $u_{i_k}$ as a representative.    This exists since 
$C_k^+$ is again internally 3-connected, and it does not use the (possibly virtual) edge
$(u_{i_k},u_{j_k})$ since this edge connects the ends of $P_k$.  Then substitute~$P_k$ in
place of edge $(u_{i_k},u_{j_k})$ in $P$.  

It remains to analyze the time complexity.  The bottleneck is finding the cutting
pairs within each side.  Let us assume that we have scanned the outer face of $G$
to mark each vertex with the side that it belongs to.
Recall that we initiate the data structures for $G$ by scanning at each exterior vertex
along all its incident faces.  Say we do this scan in ccw order, and are currently
scanning vertex $u_{i_k}$, starting clockwise after the outer face.  If, during this
scan, we encounter some vertex $u_{j_k}$ that is on the same side, then we have
found a cutting pair that must be removed; we immediately split the graph at the
currently scanned face (this removes the cut component $C_k$) and add the edge 
$(u_{i_k},u_{j_k})$ if it did not exist yet.    Since we scanned faces in clockwise order,
index $j_k$ automatically is maximized.  Note that we had not yet scanned
any faces that are interior to $C_k$, and we will not do so
(at least not for the purpose of finding $P'$) since $C_k$ is now removed from $G$.

Let~$G'$ be the resulting graph after removing all these cutting pairs.
We can find the $\Tint$-path in time
$O(\sum_{f\in F'(P')} \deg(f))$, where $F'(P')$ denotes the faces incident to $P'$
in~$G'$.    This proves the claim if we need not recurse into
any cut component $C_k^+$.  If we do recurse in $C_k^+$, then we must update the
data structure for it.  However, we are free to choose which exterior edge of
$C_k^+$ must be visited by $P_k$; we can choose this to be the one incident to $u_{j_k}$
so that the entire outer face of $C_k^+$ (except for $u_{j_k}$) becomes the right side
and can therefore inherit the side-marking that it had obtained in $G$.  Then we
recurse in $C_k^+$, which takes time at most $O(\sum_{f\in F_k(P_k)} \deg(f))$, where
$F_k(P_k)$ are the faces in $C_k^+$ that are incident to $P_k$.  This repeats for 
possibly many $P'$-bridges, but since the interior faces of these $P'$-bridges are
distinct and $F'(P')\cup \bigcup_k F_k(P_k)=F(P)$,
the claim on the running time follows.
\end{proof}

\subsection{2-circuits}

We now use Lemma~\ref{lem:circuit} to prove Theorem~\ref{thm:2walk}, i.e., to find 
2-circuits in planar 3-connected graphs.
The proof idea is exactly as in \cite{GRY95}, we repeat it here to argue that it can
be implemented in linear time.      Again we need a slightly stronger statement
to make the induction hypothesis work.

\begin{lemma}
Let $G$ be an internally 3-connected plane graph with exterior vertices $\U\neq \W$.
Then $G$ has a 2-circuit $P$ that visits $\U$ and $\W$ exactly once, and for which
a vertex $v$ is visited twice only if $v$ is part of a separating triplet
or a cutting pair.  $P$ can be found in linear time.
\end{lemma}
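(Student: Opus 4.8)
The plan is to reprove the existence of $2$-circuits exactly as Gao, Richter and Yao do~\cite{GRY95}, replacing their appeal to Tutte's theorem by Lemma~\ref{lem:circuit} and keeping track of the running time throughout. I would induct on $|V(G)|$; in the base case the outer face of $G$ is a triangle and that triangle (as a circuit) works. For the step, choose $\X$ to be a neighbour of $\U$ on the outer face and $\V$ to be a neighbour of $\W$ on the outer face with $(\V,\W)\neq(\U,\X)$, which is possible because $\W\neq\U$ and $\W$ has two outer neighbours. By Lemma~\ref{lem:circuit} there is a $\TSDR$-path $P_0$ from $\U$ to $\X$ through $(\V,\W)$ that uses neither $\U$ nor $\W$ as a representative, computable in time $O\bigl(\sum_{f\in F(P_0)}\deg(f)\bigr)$. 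Since $\U\neq\X$, $\V\neq\W$ and $\{\V,\W\}\neq\{\U,\X\}$, the path $P_0$ has at least three vertices and thus does not contain the edge $(\U,\X)$; hence $P:=P_0\cup(\U,\X)$ is a circuit visiting every vertex of $P_0$ exactly once, in particular $\U$ and $\W$.

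Next I would enlarge $P$ to a $2$-circuit by treating each $P_0$-bridge $C$ in turn. Its attachment points form a cutting pair or cutting triplet $A_C$ of $G$ (with $|A_C|=2$ whenever $C$ contains an exterior edge), and its representative $\sigma(C)\in A_C$ lies on $P$ and is distinct from $\U$ and $\W$. As in the substitution trick, form $C^+$ from $G[C\cup A_C]$ by adding virtual edges along the boundary of $C$; then $C^+$ is internally $3$-connected (else $G$ would contain a cutting pair separating $A_C$), has fewer vertices than $G$, and has $A_C$ on its outer face. Recursing on $C^+$ — with the two vertices that must be visited once chosen as in~\cite{GRY95} (so that $\sigma(C)$ is visited once and the behaviour at the remaining attachment points is controlled) — yields a $2$-circuit $Q_C$ of $C^+$, which I splice into $P$ at $\sigma(C)$. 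When all bridges have been processed the resulting closed walk $P'$ is the desired $2$-circuit, as one checks in the same way as in~\cite{GRY95}: $P'$ visits every vertex (vertices of $P_0$ through $P$, vertices outside $P_0$ through the appropriate $Q_C$), visits every vertex at most twice, visits $\U$ and $\W$ exactly once (they are never representatives, never outside $P_0$, and, by the choice of recursion parameters, never re-visited inside a $Q_C$), and a vertex visited twice either lies in some $A_C$ or is strictly inside some $C$, in which case the induction hypothesis applied to $C^+$, together with $|A_C|\le 3$, places it in a separating triplet or a cutting pair of $G$.

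For the running time I would argue as in the proofs of Lemma~\ref{lem:circuit} and Lemma~\ref{lem:binary-tree}. The non-recursive work at the top level is $O\bigl(\sum_{f\in F(P_0)}\deg(f)\bigr)$ by Lemma~\ref{lem:circuit}, plus $O(|V(P_0)|)$ to close the path and read off the bridges; the recursive calls are on the graphs $C^+$. These are, up to their virtual edges, precisely the graphs of the substitution trick, whose boundaries are already scanned while $P_0$ is produced, so we can set up the data structures of each $C^+$ during that computation and re-initialization upon recursing into $C^+$ costs nothing extra. Since the interior faces of distinct bridges are pairwise disjoint and, together with $F(P_0)$, make up $F(P)$ up to an $O(n)$ overhead from virtual edges, the costs telescope to $O\bigl(\sum_{f\in F(P)}\deg(f)\bigr)$, which is $O(n)$.

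The hard part is not the existence of the $2$-circuit — that is essentially \cite{GR94,GRY95}, and the delicate bookkeeping of which attachment point of a three-cut bridge may be doubled, and of how each $Q_C$ is merged into $P$, is inherited from there — but the linear-time implementation. The subtlety is that each $C^+$ is merely internally $3$-connected and need not be corner-$3$-connected with only four corners, so the recursion must go through Lemma~\ref{lem:circuit} (which in turn recurses, peeling off cutting pairs on its two long sides) rather than through Lemma~\ref{lem:T_int}; one must then verify, exactly as for Theorem~\ref{thm:time_complexity}, that all the data-structure work for the $C^+$'s is charged only to vertices and faces that become incident to the outer face of some subgraph, keeping the total proportional to $\sum_{f\in F(P)}\deg(f)$.
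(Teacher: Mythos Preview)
Your overall plan---use Lemma~\ref{lem:circuit} to get a $\TSDR$-circuit, then recursively patch in $2$-circuits at the representatives---is the right one, but the graph you recurse on is wrong, and this is a genuine gap rather than a detail you can defer to~\cite{GRY95}.

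You recurse on $C^+:=G[C\cup A_C]$ with \emph{all} attachment points kept. A $2$-circuit $Q_C$ of $C^+$ must visit every vertex of $C^+$, in particular every attachment point. After splicing $Q_C$ into $P$ at $\sigma(C)$, each non-representative attachment point $a\in A_C\setminus\{\sigma(C)\}$ is visited once by $P$ \emph{and} at least once by $Q_C$, so at least twice. Nothing prevents such an $a$ from being the representative $\sigma(C')$ of a different bridge $C'$, in which case $a$ is hit a third time when $Q_{C'}$ is spliced in; and nothing prevents $a=\U$ or $a=\W$ (Lemma~\ref{lem:circuit} only forbids $\U,\W$ as \emph{representatives}, not as attachment points), so your claim that $\U,\W$ are ``never re-visited inside a $Q_C$'' is unjustified. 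The induction hypothesis lets you control only \emph{two} vertices of $Q_C$, not three, so ``the behaviour at the remaining attachment points is controlled'' cannot be made to work on $C^+$.

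The paper (following~\cite{GR94,GRY95}) avoids this by recursing on $C\cup\{\sigma(C)\}$, i.e., keeping \emph{only} the representative. That graph is no longer internally $3$-connected in general: it is a \emph{plane chain of blocks} $B_1,\dots,B_k$ sharing cut vertices $x_2,\dots,x_k$, with $\sigma(C)\in B_1$. One recurses on each block $B_\ell$ separately (these \emph{are} internally $3$-connected), asking that $x_\ell$ and $x_{\ell+1}$ be visited exactly once, and then takes the union of the resulting circuits. Only the cut vertices $x_\ell$ are doubled inside $P_C$, and only $\sigma(C)$ is doubled when $P_C$ is merged into $P$; the other attachment points of $C$ never appear in $P_C$ at all. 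This chain-of-blocks decomposition is the missing idea in your proposal, and it also drives the running-time analysis: the paper spends a paragraph explaining how to find the blocks $B_\ell$ by scanning only the faces incident to the removed attachment points, which is work you have no analogue for.
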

\begin{proof}
Let $\X$ and $\V$ be the clockwise neighbor of $\U$ and $\W$ on the outer face.
Find a $\TSDR$-path from $\U$ to $\X$ through $(\V,\W)$ using
Lemma~\ref{lem:circuit} and complete it to a circuit $P'$ by adding edge $(\U,\X)$.
If $P'$ visits all vertices of $G$, then we are done.  Otherwise, there are
some $P'$-bridges, and we expand $P'$ into a 2-circuit by recursively finding
2-circuits in the $P'$-bridges and merging them into $P'$.  
Formally, let $C$ be a $P'$-bridge.  We distinguish cases by whether $C$ has two
or three attachment points.

\smallskip\noindent{\bf $C$ has two attachment points $\{u_i,u_j\}$; see Fig.~\ref{fig:2walk-two}:}  Let us
assume that $\sigma(C)=u_j$ is the representative of $C$.  Let $C^+$
be the graph induced by $C\cup \{u_i,u_j\}$ with edge $(u_i,u_j)$ added if
it did not exist in $G$.    We know that $C^+$ is internally 3-connected,
therefore (as argued in \cite{GR94}) $C^+\setminus u_i$ is a so-called plane chain of blocks
(defined below).

\smallskip\noindent{\bf $C$ has three attachment points $\{x,w,y\}$; see Fig.~\ref{fig:2walk-three}:}  Let use
assume that $\sigma(C)=x$ is the representative of $C$.  Let $C^+$
be graph $G[C]\cup \{(x,w),(w,y),(y,x)\}$.  We know that $C^+$ is 3-connected,
therefore $C^+\setminus \{w,y\}$ is again a plane chain of blocks~\cite{GR94}.

\smallskip
In both cases, hence the graph induced by $C\cup \sigma(C)$ is a {\em plane chain of blocks}. 
This means that it is the union of internally 3-connected graphs $B_1,\dots,B_k$
where
\begin{enumerate}[label=(\arabic*)]
\item $B_1$ contains~$\sigma(C)$ (set $x_1:=\sigma(C)$), 
\item for $\ell=2,\dots,k$ graphs $B_{\ell-1}$ and $B_\ell$ 
  have one exterior vertex $x_\ell$ in common, 
\item $\{x_1,\dots,x_k\}$ are mutually distinct, and 
\item the graphs are disjoint otherwise.  
\end{enumerate}

\begin{figure}[t]
\centering
\begin{subfigure}[b]{.3\textwidth}
  \centering
  \includegraphics[scale=.75,page=2]{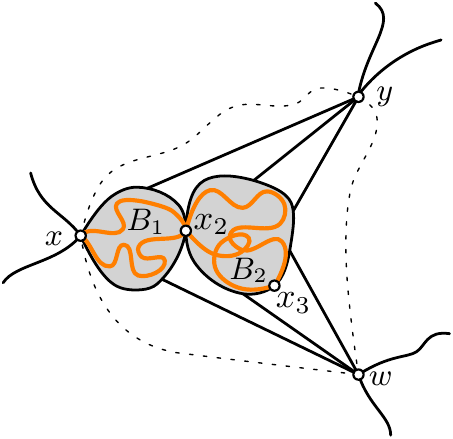}
  \caption{Two attachment points}
  \label{fig:2walk-two}
\end{subfigure}
\hfil
\begin{subfigure}[b]{.3\textwidth}
  \centering
  \includegraphics[scale=.75,page=1]{2walk}
  \caption{Three attachment points}
  \label{fig:2walk-three}
\end{subfigure}
\caption{Finding a 2-circuit in a $P'$-bridge.}
\label{fig:2walk}
\end{figure}

Recursively find 2-circuits $P_1,\dots,P_k$ for $B_1,\dots,B_k$ such that $P_\ell$
visits $x_\ell$ and $x_{\ell+1}$ exactly once (we set $x_{k+1}\neq x_k$ to be an arbitrary
exterior vertex of $B_k$).  Set $P_C$ to be $P_1\cup \ldots \cup P_k$,  i.e., the union
of the edges of the circuits. This has even degrees and hence is again a circuit.
Vertex $x_i$, for $i>1$, is visited once each by $P_{i-1}$ and $P_i$, so visited twice
by $P_C$.  Observe that $x_i$ is a cut vertex of $G[C\cup \sigma(C)]$, hence either part
of a cutting pair $\{x_i,u_{i}\}$ or a cutting triplet $\{x_i,w,y\}$.  
So $P_C$ is a 2-circuit of $G[C\cup \sigma(C)]$ that visits $\sigma(C)$ once and
satisfies the requirements on vertices that are visited twice.

Adding $P_C$ to $P'$ hence
gives a circuit of $G$ that now visits all vertices in $C$ at least once.  This
new circuit visits vertex $\sigma(C)$ repeatedly, from $P'$ and once from $P_C$.  However,
vertex~$\sigma(C)$ was visited exactly once by $P'$ initially, and it is used as a
representative for only one $P'$-bridge, so in the final circuit $P$ (obtained after applying
the substitution for all $P'$-bridges) $\sigma(C)$ is visited twice. Hence $P$ is a 2-circuit.
Furthermore, since $\sigma(C)\neq \U,\W$ is an attachment vertex of $C$
and  $P$-bridges have at most three attachment points, all properties hold.

It remains to analyze the running time, and in particular, how to find the chain of blocks
efficiently.    (We cannot afford to run the standard algorithms to find 2-connected
components, because this takes linear time, which may result in overall quadratic time if
we do it repeatedly in the recursions.)  We explain this
only for the case of three attachment points, the other case is similar.  Assume as
before that $C$ has attachment points $\{x,w,y\}$ with $\sigma(C)=x$.  Recall that $x,w,y\in P$.
We can therefore afford (as overhead to the time to find $P$) to scan those faces incident
to $w,y$ that are inside $C^+$.  So we can mark all interior vertices of $C^+$ that are
face-incident to $w$, and (with a different mark) all interior vertices of $C^+$ that
are face-incident to $y$.  Any vertex that is marked from
both $w$ and $y$ is a cut vertex of $C$ (or vertex $x$), and should hence become one of the vertices
$x_1,\dots,x_k$. These vertices are discovered in order if we scan the incident faces in order.
All vertices that are face-incident to one of $w$ and $y$ (but not the other) become
the outer face of $B_1,\dots,B_k$.  
We have hence found
the split into subgraphs to recurse in without scanning any faces inside those subgraphs.
As for Lemma~\ref{lem:circuit}, one argues that all recursions overall take time
proportional to the degree of all faces, which is linear.
\end{proof}

\section{Outlook}

In this paper, we improved on a very recent result that shows that
Tutte paths in planar graphs can be found in quadratic time. We gave
a different existence proof which leads to a linear-time
algorithm.  For 3-connected planar graphs, we obtain not only a
Tutte path, but furthermore endow it with a system of distinct
representatives, none of which is on the outer face.  With this,
we can also find 2-walks and binary spanning trees in 3-connected 
planar graphs in linear time.

The main remaining questions concern how to find Tutte path in other
situations or with further restriction.  For example, Thomassen
\cite{Thomassen83} and later Sanders \cite{Sanders97} improved Tutte's
result and showed that we need not restrict the ends of the Tutte path
to lie on the outer face.   These paths can be found in quadratic time
\cite{SS-ICALP18}.  But our proof does not seem to carry over to the
result by Sanders, because the ends of the path crucially must coincide
with corners of the graph.  Can we find such a path in linear time?

Furthermore, the existence
of Tutte paths has been studied for other types of surfaces
(see, e.g., Kawarabayashi and Ozeki~\cite{KO13} and the references therein).  
Can these Tutte paths be found in polynomial time, and preferably, linear time?

\bibliographystyle{plainurl}
\bibliography{tutte-paths}


\end{document}